\tikzstyle{automaton} = [
\newcommand{\sem}[1]{\left[\!\left[#1\right]\!\right]}
\newcommand{\trans}[2][]{\mathrel{\raisebox{-3pt}{$\xrightarrow[#1]{#2}$}}}
\newcommand{\filter}{\mathop{@}}
\newcommand \define[1]  {\emph{\textbf{#1}}} % definition statement plus indexing
\newcommand	\N 			{{\mathbb N}} % natural numbers
\newcommand \Exp        {{\operatorname{Exp}}} % expressions
\newcommand \acro[1] 	{\(\mathsf{#1}\)}
\newcommand \GKAT 		{\acro{GKAT}\xspace}
\newcommand \Cov		{{\operatorname{Cov}}} % for covarieties
\newcommand \coalg[1]	{{\CMcal #1}} % for coalgebras
\newcommand \coeq[1]	{{\mathsf #1}} % for coequations
\newcommand \bisim 		{\mathbin{\underline{\leftrightarrow}}} % bisimilar
\newcommand \V 			{{\coalg V}} % generic coalgebra
\newcommand \X 			{{\coalg X}} % generic coalgebra
\newcommand \Y 			{{\coalg Y}} % generic coalgebra
\newcommand \Z 			{{\coalg Z}} % generic coalgebra
\newcommand  \R  		{{\coalg R}} % bisimulation relation
\newcommand \Node 		{{\operatorname{Node}}} % nodes of an ordered tree
\newcommand \dom 		{{\operatorname{dom}}} % function domain
\newcommand \nin        {\mathbin{\not\in}} % not in
\newcommand \img 		{{\operatorname{img}}}
\newcommand \gequiv 	{\mathbin{\dot \equiv}}
\newcommand{\customlabel}[2]{%
   \protected@write \@auxout {}{\string \newlabel {#1}{{#2}{\thepage}{#2}{#1}{}} }% % chktex 1
   \hypertarget{#1}{#2}%
}
\title{
  Guarded Kleene Algebra with Tests: \texorpdfstring{\\}{}
  Coequations, Coinduction, and Completeness
}
\titlerunning{GKAT:\@
  Coequations, Coinduction, and Completeness} %TODO optional, please use if title is longer than one line
\author{Todd Schmid}{Department of Computer Science, University College London, United Kingdom}{todd.schmid.19@ucl.ac.uk}{https://orcid.org/0000-0002-9838-2363}{}
\author{Tobias Kapp\'e}{Department of Computer Science, Cornell University, Ithaca, NY, USA}{tkappe@cornell.edu}{https://orcid.org/0000-0002-6068-880X}{DARPA grant HR001120C0107 (Pronto)}
\author{Dexter Kozen}{Department of Computer Science, Cornell University, Ithaca, NY, USA}{kozen@cs.cornell.edu}{https://orcid.org/0000-0002-8007-4725}{NSF grant CCF-2008083}
\author{Alexandra Silva}{Department of Computer Science, University College London, United Kingdom}{alexandra.silva@ucl.ac.uk}{https://orcid.org/0000-0001-5014-9784}{ERC Consolidator Grant AutoProbe (101002697) and a Royal Society Wolfson Fellowship}
\authorrunning{T. Schmid, T. Kapp\'e, D. Kozen, A. Silva} %TODO mandatory. First: Use abbreviated first/middle names. Second (only in severe cases): Use first author plus 'et al.'
\keywords{Kleene algebra, program equivalence, completeness, coequations} %TODO mandatory; please add comma-separated list of keywords
\begin{document}

\maketitle

%TODO mandatory: add short abstract of the document
\begin{abstract}
    Guarded Kleene Algebra with Tests (\acro{GKAT}) is an efficient fragment of \acro{KAT}, as it allows for almost linear decidability of equivalence.
    In this paper, we study the (co)algebraic properties of \acro{GKAT}. %  and show how these can be used in completeness proofs. % chktex 36
    Our initial focus is on the fragment that can distinguish between unsuccessful programs performing different actions, by omitting the so-called \emph{early termination axiom}.
    We develop an operational (coalgebraic) and denotational (algebraic) semantics and show that they coincide.
    We then characterize the behaviors of \acro{GKAT} expressions in this semantics, leading to a coequation that captures the covariety of automata corresponding to these behaviors.
    Finally, we prove that the axioms of the reduced fragment are sound and complete w.r.t.\ the semantics, and then build on this result to recover a semantics that is sound and complete w.r.t.\ the full set of axioms.
\end{abstract}

\section{Introduction}
\emph{Kleene algebra with tests} (\acro{KAT})~\cite{kat} was introduced in the early 90's as an extension of Kleene algebra (\acro{KA}), the algebra of regular expressions.
The core idea of the extension was simple: consider regular languages over a two-sorted alphabet, in which one sort represents Boolean tests and the other denotes basic program actions.
This seemingly simple extension enables an important application for regular languages in reasoning about imperative programs with basic control flow structures like branches (\textbf{if}-\textbf{then}-\textbf{else}) and loops (\textbf{while}).
\acro{KAT} largely inherited the properties of \acro{KA}:\@ a language model~\cite{kozen-smith-1996}, a Kleene theorem~\cite{kozen-2003}, a sound and complete axiomatization~\cite{kozen-smith-1996}, and a \textsc{pspace} decision procedure for equivalence~\cite{cohen-kozen-smith-1996}.

In 2014, a specialized \acro{KAT} called \acro{NetKAT}~\cite{netkat} was proposed to program software-defined networks.
%Equivalence checking of \acro{NetKAT} programs serves as a basis to verify important properties such as reachability.
\acro{NetKAT} was later extended with a probabilistic choice operator that enabled the modelling of randomized protocols~\cite{probnetkat}.
Interestingly, there exists a decision procedure for \acro{NetKAT} program equivalence that enables practical verification of reachability in networks with thousands of nodes and links, which seems to scale almost linearly despite the \textsc{pspace}-completeness of this problem~\cite{netkat-algo,mcnetkat}.
This raised the question: do practical \acro{NetKAT} programs belong to a fragment of \acro{KAT} that has more favorable properties than the full language?

Recently, this question was answered positively~\cite{gkat}, in the form of \emph{Guarded Kleene Algebra with Tests} (\acro{GKAT}), a fragment of \acro{KAT} obtained by adding a Boolean guard to the non-deterministic choice and iteration operators so that they correspond exactly to the standard \textbf{if-then-else} and \textbf{while} constructs.
\acro{GKAT} is expressive enough to capture all programs used in network verification while allowing for almost linear time\footnote{$O(n\alpha(n))$, where $\alpha(n)$ is the inverse of Ackermann's function} decidability of equivalence, thereby explaining the experimental results observed in \acro{NetKAT}. % chktex 8

The use of \acro{GKAT} as a framework for program analysis also raises further questions about recovering the properties of \acro{KAT} on the level of \acro{GKAT}.
Is there a class of automata that provides a Kleene theorem?
Is there a sound and complete axiomatization of \acro{GKAT} equivalence?
The original paper~\cite{gkat} gave incomplete answers to these questions.
First, it proposed a class of \emph{well-nested} automata that can be used to describe the semantics of all \acro{GKAT} programs, but left open whether this class covered all automata that accept the behaviors of \acro{GKAT} programs.
Second, \acro{GKAT} was axiomatized under the assumption of \emph{early termination}: intuitively, referring to a semantics of imperative programs where programs that fail immediately are equated to programs that fail eventually.
This semantics, though useful, is too coarse in contexts where program behavior prior to failure matters.

In this paper, we take a new perspective on the semantics of \acro{GKAT} programs and their corresponding automata, using coequations.
Coequations provide the right tool to characterize fragments of languages as they enable a precise way to remove unwanted traces.
We are then able to give a precise characterization of the behaviors of \acro{GKAT} programs and prove a completeness theorem for each of the fragments of interest.

\medskip
\noindent\textbf{Our contributions.}
In a nutshell, the contributions of this paper are the following:
\begin{enumerate}%[leftmargin=*]
\item
	We give a denotational model for \GKAT without early termination by representing the behavior as a certain kind of tree.
	This allows us to design two coequations: one characterizing the behaviors denoted by \acro{GKAT} expressions, and another capturing only the behaviors of \acro{GKAT} expressions that terminate early.
\item
	We obtain two completeness results for \acro{GKAT}: one for the model of the previous item and the axiomatization of~\cite{gkat} without the early termination axiom; and building on this, another for the full axiomatization.
	The former is new; the latter provides an alternative proof to the completeness theorem presented in~\cite{gkat}.
\item
	A concrete example of a well-nested \acro{GKAT} automaton with a non-well-nested quotient.
	This settles an open question of~\cite{gkat} and closes the door on an alternative proof of completeness based on well-nested automata.
\end{enumerate}

% \textbf{Acknowledgements.} The authors would like to thank the reviewers for their helpful comments. We would also like to thank Jurriaan Rot for his insightful contributions to early discussions on this project.

% *** *** SECTION *** *** %
\section{Guarded Kleene Algebra with Tests}\label{sec:guarded_kleene_algebra_with_tests}

At its heart, \emph{Guarded Kleene Algebra with Tests} (\acro{GKAT}) is an algebraic theory of imperative programs.
Expressions in \acro{GKAT} are concise formulas for \textsc{while} programs~\cite{kozentseng2008}, which are built inductively from actions and tests with sequential composition and the classic programming constructs of branches and loops: $\textbf{if } b \textbf{ then } e \textbf{ else } f$ and $\textbf{while } b \textbf{ do } e$.

Formally, these expressions are drawn from a two-sorted language of \emph{tests} and \emph{programs}.
The tests are built from a finite set of \define{primitive tests} \(T\), as follows:
\begin{mathpar}
	\operatorname{BExp} \ni b,c ::= 0 \mid 1 \mid t \in T \mid \bar b \mid b \wedge c \mid b \vee c.
\end{mathpar}
Here, $0$ and $1$ are understood as the constant tests \textbf{false} and \textbf{true} respectively, $\bar{b}$ denotes the negation of $b$, and $\wedge$ and $\vee$ are conjunction and disjunction, respectively. We will use \(A\) to denote the set of \define{atomic tests} (or just \define{atoms}), Boolean expressions of the form
\(
	d_1 \wedge \dots \wedge d_l,
\)
 where \(d_i \in \{t_i, \bar t_i\}\) for each \(i \le l\) and \(\{t_i \mid i \le l\}\) is a fixed enumeration of \(T\). It is well known that any \(b \in \operatorname{BExp}\) can be written equivalently as the disjunction of the atoms \(a \in A\) that imply \(b\) under the laws of Boolean algebra. We will often identify each Boolean expression \(b \in \operatorname{BExp}\) with this set of atoms and write \(b \subseteq A\) or \(a \in b\).

Programs are built from tests and a finite set of \define{primitive programs} or \define{actions} \(\Sigma\), disjoint from \(T\).
Formally, programs are generated by the grammar
\begin{mathpar}
	 \Exp \ni e, f ::= b \in \operatorname{BExp} \mid p \in \Sigma \mid e \cdot f \mid e +_b f \mid e^{(b)}
\end{mathpar}
Here, a test $b$ abbreviates the statement \textbf{assert $b$}, the operator $\cdot$ is sequential composition, $e +_b f$ is shorthand for \textbf{if $b$ then $e$ else $f$} and $e^{(b)}$ is shorthand for \textbf{while $b$ do $e$}.

\acro{GKAT} programs satisfy standard properties of imperative programs.
For instance, swapping the branches of an \textbf{if-then-else} construct should not make a difference, provided that we also negate the condition; that is, the semantics of \(e +_b f\) should coincide with that of \(f +_{\overline{b}} e\). %, regardless of the programs \(e\) and \(f\) or the test \(b\).
The rules in \cref{fig:GKAT axioms} axiomatize equivalences between programs.
Together with the axioms of Boolean algebra, these generate a congruence \(\equiv\) on \(\Exp\).

\begin{figure}
\small
\begin{tabular}{l@{~} >{$}r<{$}@{~} >{$}c<{$}@{~} >{$}l<{$}
                l@{~} >{$}r<{$}@{~} >{$}c<{$}@{~} >{$}l<{$}
                l@{~} >{$}r<{$}@{~} >{$}c<{$}@{~} >{$}l<{$}}
\multicolumn{4}{l}{\hspace*{-1ex}\textbf{Union Axioms}}
    & \multicolumn{4}{l}{\hspace*{-1ex}\textbf{Sequence Axioms}}
    & \multicolumn{4}{l}{\hspace*{-1ex}\textbf{Loop Axioms}} \\
\customlabel{ax:idemp}{U1}. &e+_b e            &\equiv &e   & % chktex 1
   \customlabel{ax:seqassoc}{S1}. & (e \cdot f) \cdot g &\equiv  &e \cdot (f \cdot g) & % chktex 1
   \customlabel{ax:unroll}{W1}. &e^{(b)} &\equiv  & e \cdot e^{(b)} +_{b} 1 \\ % chktex 1
\customlabel{ax:skewcomm}{U2}. &e +_b f           &\equiv &f +_{\bar{b}} e & % chktex 1
   \customlabel{ax:absleft}{S2}. &0 \cdot e           &\equiv  &0 & % chktex 1
   \customlabel{ax:tighten}{W2}. & {(ce)}^{(b)} &\equiv& {(e +_c 1)}^{(b)} \\
\customlabel{ax:skewassoc}{U3}. & (e +_b f) +_c g   &\equiv &e +_{b \wedge c} (f +_c g) & % chktex 1
   \customlabel{ax:absright}{S3}. &e \cdot 0           &\equiv  &0 & % chktex 1
   \multirow{3}{*}{\customlabel{ax:fixpoint}{W3}.} & \multicolumn{3}{c}{\multirow{3}{*}{\(\inferrule{E(e) \equiv 0 \\ \hspace{-3ex}g \equiv eg +_b f}{g \equiv e^{(b)} \cdot f}\)}}  \\
\customlabel{ax:guard-if}{U4}. &e +_b f           &\equiv &b \cdot e +_b f & % chktex 1
   \customlabel{ax:neutrleft}{S4}. & \multicolumn{3}{c}{\(1 \cdot e \equiv e\); S5. \(e \equiv e \cdot 1\)} \\ % chktex 1
\customlabel{ax:rightdistr}{U5}. &e \cdot g +_b f \cdot g &\equiv & (e +_b f) \cdot g & % chktex 1
   \customlabel{ax:neutrright}{S6}. &b \cdot c           &\equiv  & b \wedge c \\[0.5em] % chktex 1
\end{tabular}%
\caption{Axioms for \acro{GKAT}-expressions. Here, \(e,f,g \in \Exp\) and \(b,c \in \operatorname{BExp}\).}\label{fig:GKAT axioms}
\end{figure}

Some remarks are in order for axiom W3. %, which is an axiom schema, rather than a singular axiom.
The right-hand premise states that an expression \(g\) has some self-similarity in the sense that it is equivalent to checking whether \(b\) holds, in which case it runs \(e\) followed by recursing at \(g\), and otherwise running \(f\).
Intuitively, this says that \(g\) is loop-like, matching the conclusion that \(g\) is equivalent to \(e^{(b)} \cdot f\).
However, this conclusion may not make sense when based on just the second premise.
Specifically, if we choose \(e\), \(f\), \(g\) and \(b\) to be \(1\), we can show that the premise holds and derive \(1 \equiv 1^{(1)} \cdot 1\), which is to say that \textbf{assert true} is equivalent to \textbf{(while true do assert true); assert true}.
Intuitively, this should be false: the first program terminates successfully and immediately, but the second program does not.
The problem is that the loop body does not perform any actions that affect the state and make progress towards the end of the loop.

This is remedied by the left-hand premise, which distinguishes loop bodies that can accept immediately from those that cannot.
It plays the same role as the \emph{empty word property} in Salomaa's axiomatization of the algebra of regular events~\cite{salomaa1966}. %It is also related to the notion of \emph{productivity} in coalgebraic proof systems.
Formally, given \(e \in \Exp\), the Boolean expression \(E(e)\) is defined inductively by setting \(E(p) = 0\), \(E(b) = b\), and
\begin{align*}
	E(e\cdot f) = E(e) \wedge E(f)\qquad
	E(e +_b f) = (b \wedge E(e)) \vee (\bar b \wedge E(f))
	 \qquad E(e^{(b)}) = \bar b
\end{align*}
We call \(e\) \define{productive} if \(E(e) \equiv 0\).
Axioms W2 and W3 are analogues of Salomaa's axioms A\(_{11}\) and R2~\cite{salomaa1966}.
Specifically, W2 says that non-productive loop iterations do not contribute to the semantics.
This allows the use of W3 to reason about loops in general, for instance to prove \( e^{(b)} \equiv e^{(b)} \cdot \overline{b} \), which says that the loop condition is false when a loop ends~\cite{gkat}.

\smallskip
Axiom S3 identifies a program that fails eventually with the program that fails immediately.
As a consequence, \(\equiv\) cannot distinguish between processes that loop forever, like \(p^{(1)}\) and \(q^{(1)}\), even though they perform different actions~\cite{gkat}.
Consequently, \acro{GKAT} can be seen as a theory of \emph{computation} schemata, i.e., programs that need to halt successfully to be meaningful.

In contrast, it is also useful to be able to reason about \emph{process} schemata, i.e., programs that perform meaningful tasks, even when they do not terminate successfully.
To this end, we define the \define{reduced congruence \(\equiv_0\)} generated by the axioms of \cref{fig:GKAT axioms} except S3.

\medskip
Let \(\sem{-}: \Exp \to S\) be a semantics of \acro{GKAT}\@. We say that $\sem-$ is \define{sound w.r.t.~\(\equiv\)} if for all \(e, f \in \Exp\) with \(e \equiv f\), it holds that \(\sem{e} = \sem{f}\).
Similarly, \(\sem{-}\) is \define{sound w.r.t.~\(\equiv_0\)} if \(e \equiv_0 f\) implies that \(\sem{e} = \sem{f}\).

Since \(\equiv\) encodes common program laws, one might wonder whether there is a single interpretation in which programs are related by \(\equiv\) if and only if they have the same image.
Such an interpretation is called \define{free w.r.t.~\(\equiv\)}.
This question is not just of theoretical interest: a free interpretation can help decide whether programs are provably equivalent, and hence the same under any sound interpretation, by checking whether their free semantics coincide.
Naturally, the same question can be asked for \(\equiv_0\): is there a semantics that is \define{free w.r.t.~\(\equiv_0\)}, i.e., where \(e \equiv_0 f\) if and only if \(e\) and \(f\) have the same interpretation?

\medskip
The remainder of this paper is organized as follows.
In \cref{sec:coalgebra}, we describe the operational structure for \acro{GKAT} expressions in terms of \GKAT-automata, as in~\cite{gkat}.
In \cref{sec:the_final_gkat-automaton}, we provide an explicit construction of a \acro{GKAT}-automaton in which all other automata can be uniquely interpreted.
We then build a semantics that is sound w.r.t.~\(\equiv_0\) in \cref{sec:i_g_as_an_algebra}.
In \cref{sec:well_nested_coalgebras} we relate our coequational description of \acro{GKAT} expressions to the \emph{well-nested \GKAT-automata} of~\cite{gkat}.
In \cref{sec:completeness}, we prove that this semantics is in fact complete w.r.t.~\(\equiv_0\) and, building on this, obtain a semantics that is complete w.r.t.~\(\equiv\).
Omitted proofs are included in the appendix.

% *** subsection *** %
\section{An operational model: \texorpdfstring{\acro{GKAT}}{GKAT}-automata}%
\label{sec:coalgebra}

In this section we discuss the small-step operational model for \acro{GKAT} programs from~\cite{gkat}.
The operational perspective provides us with the tools to describe a semantics that is complete w.r.t.~\(\equiv_0\) and paves the way to a decision procedure.

We can think of a \acro{GKAT}-program as a machine that evolves as it reads a string of atomic tests.
Depending on the most recently observed atomic test, the program either accepts, rejects, or emits an action label and changes to a new state.
For example, feeding \textbf{if \(b\) do \(p\) else \(q\)} an atomic test \(a \in b\) causes it to perform the action \(p\) and then terminate successfully.

\begin{definition}
A \define{\acro{GKAT}-automaton}~\cite{gkat,kozentseng2008} is a pair \(\X = (X,\delta)\), where \(X\) is a set of \define{states} and \(\delta: X \times A \to 2 + \Sigma \times X\) is a \define{transition function}.
We use \(x \trans{a|p}_\X x'\) as a notation for \(\delta(x, a) = (p,x')\).
Similarly, \(x \Rightarrow_\X a\) denotes that \(\delta(x, a) = 1\), and \(x \downarrow_\X a\) denotes that \(\delta(x, a) = 0\).
We drop the subscript \(\X\) when the automaton is clear from context.
\end{definition}
Intuitively, \(X\) represents the states of an abstract machine running a \acro{GKAT} program, with dynamics encoded in \(\delta\).
When the machine is in state \(x \in X\) and observes \(a \in A\), there are three possibilities: if \(x \downarrow a\), the machine rejects; if \(x \Rightarrow a\), it accepts; and if \(x \trans{a|p} x'\), it performs the action \(p\) followed by a transition to the state \(x'\).

\begin{remark}\label{rem:all the coalgebra in one remark}
	The reader familiar with coalgebra will recognize that \acro{GKAT}-automata are precisely coalgebras for the functor \(G = {(2 + \Sigma \times \mathsf{Id})}^A\)~\cite{gkat}.
	Indeed, the notions relating to \acro{GKAT}-automata, such as homomorphism, bisimulation, and semantics to follow are precisely those that arise from \(G\) as prescribed by universal coalgebra~\cite{Rutten2000}.
\end{remark}

We can impose an automaton structure on \(\Exp\) yielding the \define{syntactic \acro{GKAT}-automaton \(\coalg{E} = (\Exp, D)\)}, where \(D\) is the transition map given by Brzozowski derivatives~\cite{gkat} as specified in \cref{tab:transition_structure_of_coalg_exp}. For instance, the operational behavior of \(p^{(b)}\) as a state of \(\coalg E\) could be drawn as follows, where \(x \trans{b\mid p} y\) denotes that \(x \trans{a\mid p} y\) for every \(a \in b\) and rejecting transitions \(x \downarrow a\) are left implicit:
\begin{equation}%
\vspace*{-.2cm}
\label{example:p-loop}
\begin{tikzpicture}[
	baseline=3mm,
	->,
	> = stealth,
	every node/.style = {
		thick,
		minimum height=1em,
		text height=0.4em,
	},
]
		\node (0) {\footnotesize\(\bar{b}\)};
		\node[right=5mm of 0] (1) {\(p^{(b)}\)};
		\node[right=7mm of 1] (2) {\(1 \cdot p^{(b)}\)};
		\node[right=5mm of 2] (3) {\footnotesize\(\bar{b}\)};
		\path (1) edge[above,->] node {\footnotesize\(b | p\)} (2);
		\path (2) edge[loop above,->] node[right=.75mm,yshift=-1.3mm] {\footnotesize \(b | p\)} (2);
		\path (2.east) edge[double,double distance=2pt,-implies] (3.west);
		\path (1.west) edge[double,double distance=2pt,-implies] (0.east);
\end{tikzpicture}
\vspace*{-.2cm}
\end{equation}
% In \cref{example:p-loop}, the notation \(p^{(b)} \trans{b\mid p} 1 \cdot p^{(b)}\) is used to indicate that \(p^{(b)} \trans{a\mid p} 1 \cdot p^{(b)}\) for all \(a \in b\).

\begin{figure*}[!t]\small
	\begin{mathpar}
		\infer{a \in b}
		{b \Rightarrow a}
		\quad
		\infer{\ }
		{p \trans{a|p} 1}
		\quad
		\infer{a \in b \and e \Rightarrow a}
		{e +_b f \Rightarrow a}
		\quad
		\infer{a \in \bar b \and f \Rightarrow a}
		{e +_b f \Rightarrow a}
		\quad
		\infer{a \in b \and e \trans{a|p} e'}
		{e +_b f \trans{a|p} e'}
		\quad
		\infer{a \in \bar b \and f \trans{a|p} f'}
		{e +_b f \trans{a|p} f'}\\
		\infer{e \Rightarrow a \and f \Rightarrow a}
		{e \cdot f \Rightarrow a}
		\quad
		\infer{e \Rightarrow a \and f \trans{a|p} f'}
		{e \cdot f \trans{a|p} f'}
		\quad
		\infer{e \trans{a|p} e'}
		{e \cdot f \trans{a|p} e' \cdot f}
		\quad
		\infer{a \in b \and e \trans{a|p} e'}
		{e^{(b)} \trans{a|p} e' \cdot e^{(b)}}
		\quad
		\infer{a \in \bar b}
		{e^{(b)} \Rightarrow a}
	\end{mathpar}
	\vspace{-1em}
	\caption{The transition structure of \(\coalg{E}\).
	Here, \(e,e',f,f' \in \Exp\), \(b \subseteq A\), \(a \in A\), and \(p \in \Sigma\).
	Transitions that are not explicitly defined above are assumed to be failed termination.}%
	\label{tab:transition_structure_of_coalg_exp}\vspace*{-0.25cm}
\end{figure*}

\noindent The operational structure of \(\coalg E\) is connected to \(\equiv_0\) as follows.
\begin{theorem}[Fundamental theorem of GKAT]
For any \(e \in \Exp\),
\(
	e \equiv_0 1 +_{E(e)} D(e)
\)
where

{\vspace{-0.8em}\small
\begin{mathpar}
	D(e) =\!\!\! \bigplus_{\ \ {e \trans{a|p_a} e_a}}\!\!\! p_a \cdot e_a
	\qquad \text{ and } \qquad
	\bigplus_{a \in b} e_{a} = \begin{cases}
	0 &\text{if \(b = 0\)},\\
	e_{a} +_a \left(\bigplus\limits_{a' \in b\setminus a} e_{a'}\right) &\text{some $a\in b$, otherwise}.
\end{cases}
\end{mathpar}}
\end{theorem}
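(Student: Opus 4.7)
The plan is to proceed by structural induction on $e \in \Exp$, using the axioms of $\equiv_0$ together with the transition rules of \cref{tab:transition_structure_of_coalg_exp}.

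Before starting the induction, I would establish two auxiliary tools. First, a \emph{partition lemma} for the big sum: for any $b \subseteq A$ and $c \in \operatorname{BExp}$, $\bigplus_{a \in b} f_a \equiv_0 (\bigplus_{a \in b \cap c} f_a) +_c (\bigplus_{a \in b \cap \bar c} f_a)$, which is a routine consequence of U2 and U3. Second, a \emph{masking lemma} $e +_b f \equiv_0 e +_b \bar b \cdot f$, proved by applying U2, U4 and U2 in succession; this lets us freely rewrite the else-branch of a conditional modulo its guard.

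The base cases $e = b$ and $e = p$ are direct. For $e = b$, $D(b) = 0$ (empty big sum), so we need $b \equiv_0 1 +_b 0$, which follows by U4 and the masking lemma together with the Boolean identities $\bar b \cdot 0 \equiv 0$ and $b \cdot 1 \equiv b$. For $e = p$, $E(p) = 0$ and the summands of $D(p)$ all equal $p \cdot 1$, so U1 collapses the big sum to $p \cdot 1 \equiv_0 p$, after which $1 +_0 p \equiv_0 p$ again by masking. For $e = e_1 +_b e_2$ and $e = e_1 \cdot e_2$, the approach is to unfold each sub-expression via the IH, use U5 to push sequencing past the inner conditional, and apply U3 repeatedly together with the partition lemma to reassociate the result so the outer ``accept'' branch has the form $1 +_{E(e)} \cdots$.

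The main obstacle is the loop case $e = e_1^{(b)}$. Here the difficulty is that W3 requires a productive body ($E(e_1) \equiv 0$), which is not given. The key move is to first factor out the productive part of $e_1$: by the IH and U2, $e_1 \equiv_0 D(e_1) +_{\overline{E(e_1)}} 1$, whence W2 gives $e_1^{(b)} \equiv_0 (\overline{E(e_1)} \cdot D(e_1))^{(b)}$, and the new body is productive because $E(D(e_1)) \equiv 0$ (each of its summands begins with an action). Applying W1 to this productive form and tracing back through the factorisation yields $e_1^{(b)} \equiv_0 1 +_{\bar b} \overline{E(e_1)} \cdot D(e_1) \cdot e_1^{(b)}$. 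Since every summand of $D(e_1)$ is already guarded by an atom $a \notin E(e_1)$, the prefix $\overline{E(e_1)}$ can be absorbed, and masking against $\bar b$ restricts the remaining big sum to atoms in $b$, which by the loop transition rule is exactly $D(e_1^{(b)})$. Verifying that these guards compose correctly via the partition lemma is the most delicate bookkeeping in the proof.
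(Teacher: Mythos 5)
There is nothing in this paper to check your argument against: the theorem is stated without proof and imported from~\cite{gkat} (the body only remarks that the generalized guarded union is well defined and points the reader to that reference). So the comparison has to be with the standard proof in the cited work, which is also a structural induction on \(e\) --- your proposal follows essentially that route, and its overall shape is sound. In particular, you correctly identify the one genuinely non-routine step: the loop case cannot be handled by unfolding \(e_1^{(b)}\) with W1 alone, because on atoms in \(b \wedge E(e_1)\) the naive unfolding returns \(e_1^{(b)}\) itself rather than the required rejection; your detour through the IH, U2 and W2 to replace the body by the productive \(\overline{E(e_1)} \cdot D(e_1)\) is exactly the right fix, and the subsequent absorption/masking bookkeeping goes through.

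Two places deserve more care than your sketch gives them. First, the generalized guarded union as defined here bottoms out in a trailing \(0\) branch (the case \(b = 0\)), so collapsing \(\bigplus_{a \in A} p \cdot 1\) to \(p\) is not just repeated U1: you need the auxiliary facts \(e +_1 f \equiv_0 e\) and \(\bigplus_{a \in b} e \equiv_0 e +_b 0\), which are derivable from U1, U2, U4, S2 and Boolean algebra but should be stated as lemmas alongside your partition and masking lemmas (the base case \(e = b\) likewise needs \(1 +_b 0 \equiv_0 b\), which rests on the derived identity \(c +_b d \equiv_0 (b \wedge c) \vee (\bar b \wedge d)\) for tests, not on U4 and masking alone). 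Second, both the ``prefix absorbed'' step and the final masking against \(\bar b\) silently use that no expression both accepts and transitions on the same atom, i.e.\ that \(E(e_1)\) is disjoint from \(\{a \mid e_1 \trans{a|p} e'\}\); this is true by determinism of the transition structure in \cref{tab:transition_structure_of_coalg_exp}, but it is a separate (easy) induction that your proof should record, since without it the reassociation of guards in the sequencing and loop cases does not type-check.
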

The generalized guarded union above is well defined, in that the order of atoms does not matter up to \(\equiv_0\).
See~\cite{gkat} for more details about the generalised guarded union.

\smallskip
States of \GKAT-automata have the same behavior if reading the same sequence of atoms leads to the same sequence of actions, acceptance, or rejection.
This happens when one state mimics the moves of the other, performing the same actions in response to the same stimuli.
For instance, consider the \acro{GKAT}-automaton in~\eqref{example:p-loop}: the behavior of \(p^{(b)}\) can be replicated by the behavior of \(1 \cdot p^{(b)}\), in that both either consume an \(a \in \bar{b}\) and terminate or consume \(a \in b\) and emit \(p\) before transitioning to \(1 \cdot p^{(b)}\).
This can be made precise.

\begin{definition}\label{lem:concrete bisimilarity}
	Let \(R \subseteq X \times Y\) be a relation between the state spaces of \acro{GKAT}-automata \(\X\) and \(\Y\).
	Then \(R\) is a \define{bisimulation} if for any \((x,y) \in R\) and \(a \in A\),
    {\setlength{\leftmargini}{13mm}
	\begin{itemize}
		\item[(1)] \(x \downarrow_\X a\) if and only if \(y \downarrow_\Y a\); and (2) \(x \Rightarrow_\X a\) if and only if \(y \Rightarrow_\Y a\); and
		\item[(3)] if \(x \trans{a|p}_\X x'\) and \(y \trans{a|q}_\Y y'\) for some \(x'\) and \(y'\), then \(p = q\) and \((x', y') \in R\).
	\end{itemize}}
    \noindent
	If a pair of states \((x,y) \in X \times Y\) is contained in a bisimulation, we say that \(x\) and \(y\) are \define{bisimilar}.
	If a bisimulation \(R\) is the graph of a function \(\varphi : X \to Y\), we write \(\varphi : \X \to \Y\) and call \(\varphi\) a \define{\acro{GKAT}-automaton homomorphism}~\cite{Rutten2000}.
\end{definition}

Indeed, bisimulations are designed to formally witness behavioral equivalence.
We use the term \define{behavior} as a synonym for the phrase \emph{bisimilarity (equivalence) class}.

\section{The final \texorpdfstring{\acro{GKAT}}{GKAT}-automaton}\label{sec:the_final_gkat-automaton}

One way of assigning semantics to \GKAT expressions is to find a sufficiently large \acro{GKAT}-automaton \(\Z\) that contains the behavior of every other \acro{GKAT}-automaton. In this section, we provide a concrete explicit description of such a ``semantic''  \acro{GKAT}-automaton---this is a crucial step towards being able to devise a completeness proof.

Concretely, \(\Z\) represents the behavior of a state as a tree that holds information about acceptance, rejection, and transitions to other states (which are subtrees).
Essentially, this tree is an unfolding of the transition graph from that state.

We describe these trees using partial functions.
Let us write \(A^+\) for the set of all non-empty words consisting of atoms.
The state space \(Z\) of \(\coalg Z\) is the set of all partial functions \(t : A^+ \rightharpoonup 2 + \Sigma\) with \(A \subseteq \dom(t)\), such that the following hold for all \(a \in A\) and \(x \in A^+\).
\begin{mathpar}
	\prftree{w \in \dom(t)}{t(w) \in \Sigma}{wa \in \dom(t)}
	\and
	\prftree{w \in \dom(t)}{t(w) \in 2}{wx \nin \dom(t)}
\end{mathpar}
The transition structure of \(\coalg Z\) is defined by the inferences
\begin{mathpar}
	\prftree{t(a) = 0}{t \downarrow a}
	\and
	\prftree{t(a) = 1}{t \Rightarrow a}
	\and
	\prftree{t(a) = p \in \Sigma}{t \trans{a|p} \lambda w. t(aw)}
\end{mathpar}
When \(t(w) \in \Sigma\), we will write \(\partial_w t\) for \(\lambda u.t(wu)\).
We can think of \(t \in Z\) as a tree where the root has leaves for atoms \(a \in A\) with \(t(a) = 1\), and a subtree for every \(a \in A\) with \(t(a) \in \Sigma\).

\begin{remark}
	Trees correspond to \emph{deterministic} (possibly \emph{infinite}) \emph{guarded languages}~\cite{gkat,kozentseng2008}.
	More precisely, every tree can be identified with a language \(L \subseteq {(A \cdot \Sigma)}^* \cdot A \cup {(A \cdot \Sigma)}^\omega\) satisfying (i) if \(wap\sigma, waq\sigma' \in L\), then \(p = q\); and (ii) if \(wa \in L\), then \(wap\sigma \nin L\) for any \(p\sigma\).
	% ; for any \(w \in {(A \cdot \Sigma)}^*\),
	% \(\sigma,\sigma' \in {(A \cdot \Sigma)}^* \cdot A \cup {(A \cdot \Sigma)}^\omega\),
	% \(a \in A\), and \(p,q \in \Sigma\).
	% \(wapw' \in L\) and \(waqw'' \in L\) implies \(p = q\), and either \(wa \nin L\) or \(wapw' \nin L\) for any \(a \in A\), \(p,q \in \Sigma\).
	We forgo a description in terms of guarded languages in favor of trees because these trees have the constraint about determinism built in.
\end{remark}

A \define{node} of \(t\) is a word $w\in A^*$ such that either \(w = \epsilon\) (the empty word), or \(w \in \dom(t)\) and \(t(w) \in \Sigma\).
We write \(\Node(t)\) for the set of nodes of \(t\).
A \define{subtree} of \(t\) is a tree \(t'\) such that \(t' = \partial_w t\) for some \(w \in \Node(t)\).
A \define{leaf} of \(t\) is a word \(w \in \dom(t)\) such that \(t(w) \in 2\).

Next, we specialize \cref{lem:concrete bisimilarity} to \(\Z\) (c.f.~\cite[Theorem~3.1]{RUTTEN20031}).

\begin{restatable}{lemma}{restatetreeconcretebisimilarity}\label{lem:tree concrete bisimilarity}
	\(R \subseteq Z \times Z\) is a bisimulation on \(\coalg Z\) iff for any \((t,s) \in R\) and \(a \in A\),
        (1)~\(t(a) = s(a)\); and
		(2)~if either \(\partial_a t\) or \(\partial_a s\) is defined, then both are defined and \((\partial_a t, \partial_a s) \in R\).
\end{restatable}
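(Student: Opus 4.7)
My plan is to unpack \cref{lem:concrete bisimilarity} in the specific case of \(\coalg Z\) and verify that its three conditions collapse into the two conditions of the specialized form. The pivotal observation is that for each \(t \in Z\) and \(a \in A\), since \(A \subseteq \dom(t)\), the value \(t(a)\) falls into exactly one of three mutually exclusive cases, and these correspond precisely to the three possible shapes of a transition: \(t(a)=0\) iff \(t \downarrow a\); \(t(a) = 1\) iff \(t \Rightarrow a\); and \(t(a) = p \in \Sigma\) iff \(t \trans{a|p} \partial_a t\), in which case \(\partial_a t\) is defined.

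For the forward direction, assume \(R\) is a bisimulation in the sense of \cref{lem:concrete bisimilarity}, pick \((t,s) \in R\) and \(a \in A\), and perform a case analysis on \(t(a)\). If \(t(a) \in \{0,1\}\), then the matching clause of \cref{lem:concrete bisimilarity} forces \(s(a)\) to be the same value, by the dictionary above. If \(t(a) = p \in \Sigma\), then clauses (1) and (2) of \cref{lem:concrete bisimilarity} rule out \(s(a) \in \{0,1\}\), so \(s(a) = q\) for some \(q \in \Sigma\) and both \(t \trans{a|p} \partial_a t\) and \(s \trans{a|q} \partial_a s\); clause (3) then yields \(p = q\) and \((\partial_a t, \partial_a s) \in R\). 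This settles the specialized condition (1) in every case and simultaneously supplies (2), since \(\partial_a t\) or \(\partial_a s\) can be defined only in the last case.

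For the converse, assume the specialized conditions and verify each clause of \cref{lem:concrete bisimilarity}: clauses (1) and (2) are immediate from specialized (1) via the equivalences \(t \downarrow a \iff t(a) = 0\) and \(t \Rightarrow a \iff t(a) = 1\); for clause (3), if \(t \trans{a|p} \partial_a t\) and \(s \trans{a|q} \partial_a s\), then \(t(a) = p\) and \(s(a) = q\), so specialized (1) gives \(p = q\) and specialized (2) gives \((\partial_a t, \partial_a s) \in R\).

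I do not foresee any real obstacle; the argument is essentially a dictionary translation. The only subtlety worth flagging is that clause (3) of \cref{lem:concrete bisimilarity} does not by itself force two related states to transition on the same atom in lock-step: it is the conjunction of clauses (1) and (2) that rules out a mismatch between the three shapes. The specialized condition (2) compresses this into the clean phrase ``if either is defined, both are,'' which is exactly what the forward argument exploits when \(t(a) \in \Sigma\).
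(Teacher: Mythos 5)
Your proposal is correct and follows essentially the same route as the paper's proof: both directions are the same dictionary translation between the values \(t(a) \in 2 + \Sigma\) and the three transition shapes, with the same use of clauses (1) and (2) of the general definition to force \(s(a) \in \Sigma\) whenever \(t(a) \in \Sigma\) before invoking clause (3). No gaps.
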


We can now prove that bisimilar trees in \(Z\) coincide.

\begin{restatable}[Coinduction]{lemma}{restatezissimple}\label{lem:Z is simple}
	If \(s,t \in Z\) are bisimilar, then \(s = t\).
\end{restatable}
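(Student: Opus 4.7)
The plan is to argue directly from the characterization of bisimilarity on $\coalg Z$ given by the preceding lemma, by induction on the length of words in $A^+$. Specifically, I would fix a bisimulation $R \subseteq Z \times Z$ containing $(s,t)$ and prove by induction on $n \geq 1$ the statement: \emph{for all $(s', t') \in R$ and all $w \in A^+$ with $|w| = n$, we have $w \in \dom(s')$ iff $w \in \dom(t')$, and whenever both are defined, $s'(w) = t'(w)$.} Applied to $(s, t)$, this yields $\dom(s) = \dom(t)$ and pointwise agreement, hence $s = t$ as partial functions.

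For the base case $n = 1$, a word $w \in A^+$ of length one is simply an atom $a \in A$, and the constraint $A \subseteq \dom(s') \cap \dom(t')$ built into the definition of $Z$ guarantees that both sides are defined; clause~(1) of the preceding lemma then gives $s'(a) = t'(a)$. For the inductive step, consider $w = a w'$ with $|w'| \geq 1$. If $a w' \in \dom(s')$, then the second structural constraint imposed on elements of $Z$ forces $s'(a) \in \Sigma$ (otherwise the domain could not extend past $a$), so $\partial_a s'$ is a well-defined element of $Z$. By clause~(1) we have $t'(a) = s'(a) \in \Sigma$, so $\partial_a t'$ is also defined, and clause~(2) then places $(\partial_a s', \partial_a t') \in R$. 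The induction hypothesis, applied to this pair and the shorter word $w'$, yields $w' \in \dom(\partial_a t')$ with $\partial_a s'(w') = \partial_a t'(w')$, which is precisely $a w' \in \dom(t')$ with $s'(a w') = t'(a w')$. The symmetric implication follows by swapping the roles of $s'$ and $t'$.

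There is no real obstacle here: the whole argument is driven by the fact that \cref{lem:tree concrete bisimilarity} already packages the bisimilarity condition into exactly the two data (equality of values at atoms, simultaneous definedness of subtrees) needed to run a value-and-domain induction. The only point requiring a moment of care is the coupling between definedness and the codomain: one must use the structural constraints on elements of $Z$ to conclude that if a word of length $\geq 2$ lies in the domain then its first-letter prefix must carry a value in $\Sigma$, which is what licenses passing to subtrees in the inductive step. Once that observation is in place, the proof is a clean induction on word length.
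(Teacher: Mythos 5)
Your proposal is correct and follows essentially the same route as the paper's proof: an induction on the length of words in $A^+$, using clause~(1) of \cref{lem:tree concrete bisimilarity} for agreement at atoms and clause~(2) to pass to derivatives in the inductive step, with the structural constraints on $Z$ linking definedness of $aw'$ to $s'(a) \in \Sigma$. The only cosmetic difference is that the paper case-splits on whether $s'(a) = t'(a)$ lies in $2$ or $\Sigma$, whereas you start from the assumption $aw' \in \dom(s')$ and argue symmetrically; both are sound.
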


Thus, to show that two trees are equal, it suffices to demonstrate a bisimulation that relates them.
% , which is to say, a relation satisfying the conditions of \cref{lem:tree concrete bisimilarity}.
This proof method is called \define{coinduction}.
We can also use \cref{lem:tree concrete bisimilarity} to define algebraic operations on \(Z\), and such definitions are said to be \define{coinductive}.
Many of the results in the sequel are argued using coinduction, and many of the constructions are coinductive.
With this in mind, we are now ready to prove that \(\Z\) contains every behavior that can be represented by a \acro{GKAT}-automaton, as follows.

\begin{restatable}{theorem}{restateZisthefinalcoalgebra}\label{thm:Z is the final coalgebra}
	\(\coalg Z\) is the final \acro{GKAT}-automaton. In other words, for every \acro{GKAT}-automaton \(\X\), there exists a unique \acro{GKAT}-automaton homomorphism \(!_\X\) from \(\X\) to \(\Z\).
\end{restatable}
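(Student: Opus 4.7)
The strategy is the standard coalgebraic one: build the candidate map $!_\X : X \to Z$ by ``unfolding'' the transition structure of $\X$ starting from each state, check that the resulting trees land in $Z$ and that $!_\X$ is a \acro{GKAT}-automaton homomorphism, and then appeal to Lemma \ref{lem:Z is simple} (coinduction) to derive uniqueness.

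\textbf{Construction of $!_\X$.} For $x \in X$ and a word $w = a_1 a_2 \cdots a_n \in A^+$, define $!_\X(x)(w)$ by induction on $|w|$. For $w = a$, set
\begin{mathpar}
  !_\X(x)(a) = \begin{cases} 0 & \text{if } x \downarrow_\X a \\ 1 & \text{if } x \Rightarrow_\X a \\ p & \text{if } x \trans{a|p}_\X x' \text{ for some } x' \end{cases}
\end{mathpar}
For $w = au$ with $u \in A^+$, set $!_\X(x)(au) = !_\X(x')(u)$ whenever $x \trans{a|p}_\X x'$, and leave it undefined otherwise. Because $\delta$ is a total function on $X \times A$, the value $!_\X(x)(a)$ is always defined, so $A \subseteq \dom(!_\X(x))$. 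The two tree conditions on $Z$ are immediate from this recipe: a longer word $wa$ enters the domain only if $!_\X(x)(w) \in \Sigma$ (so we followed a transition), and as soon as $!_\X(x)(w) \in 2$ the recursion halts, killing every strict extension. Hence $!_\X(x) \in Z$.

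\textbf{Homomorphism property.} By the defining clauses of $\delta_\Z$ recalled in the excerpt, the conditions $!_\X(x) \downarrow_\Z a$, $!_\X(x) \Rightarrow_\Z a$, and $!_\X(x) \trans{a|p}_\Z t$ correspond exactly to $!_\X(x)(a)$ being $0$, $1$, or $p$, with $t = \partial_a !_\X(x)$. Unpacking the construction gives $\partial_a !_\X(x) = !_\X(x')$ whenever $x \trans{a|p}_\X x'$, which is precisely the statement that $!_\X$ preserves the three cases of the transition function.

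\textbf{Uniqueness.} Suppose $\varphi, \psi : \X \to \Z$ are both homomorphisms, and consider
\[ R = \{(\varphi(x), \psi(x)) \mid x \in X\} \subseteq Z \times Z. \]
Using the homomorphism clauses for $\varphi$ and $\psi$, one verifies the criteria of Lemma \ref{lem:tree concrete bisimilarity}: for $(\varphi(x), \psi(x)) \in R$ and $a \in A$, the value $\varphi(x)(a)$ matches $\psi(x)(a)$ by case analysis on whether $x$ rejects, accepts, or emits an action on $a$, and in the last case $\partial_a \varphi(x) = \varphi(x')$ and $\partial_a \psi(x) = \psi(x')$ where $x \trans{a|p}_\X x'$, so the pair of successor trees again lies in $R$. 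Thus $R$ is a bisimulation on $\coalg Z$, and Lemma \ref{lem:Z is simple} yields $\varphi(x) = \psi(x)$ for all $x \in X$. The same argument applied to $\varphi = !_\X$ and any other candidate homomorphism completes the proof.

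\textbf{Main obstacle.} Nothing here is deep; the only real technical point is organizing the recursive definition of $!_\X(x)(w)$ so that the two well-formedness conditions on $Z$ are visibly satisfied and the three transition clauses of $\coalg Z$ line up with those of $\X$. The bisimulation-plus-coinduction step for uniqueness is then mechanical.
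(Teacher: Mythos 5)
Your proposal is correct and follows essentially the same route as the paper: the paper likewise defines \(!_\X(x)\) as \(\lambda w.\delta^*(x,w)\) for an inductively extended transition map \(\delta^*\), verifies the homomorphism clauses directly, and proves uniqueness by exhibiting a bisimulation between the images of two homomorphisms and invoking \cref{lem:tree concrete bisimilarity} together with the simplicity of \(\coalg Z\) (\cref{lem:Z is simple}). The only cosmetic difference is that the paper compares an arbitrary homomorphism against \(!_\X\) itself rather than two arbitrary homomorphisms, which is mathematically equivalent.
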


Given a \acro{GKAT}-automaton \(\X\), the unique map \(!_\X\) assigns a tree from \(Z\) to each of its states.
In particular, recalling that the syntactic \GKAT-automaton \(\coalg{E}\) has \(\Exp\) as its set of states, \(!_{\coalg E}\) is a semantics of \acro{GKAT} programs in terms of trees.
The following lemma states that bisimulation is sound and complete with respect to this semantics.

\begin{restatable}{lemma}{restatefinalbisimilarity}%
\label{lem:final bisimilarity}
	States \(x\) and \(x'\) of a \acro{GKAT}-automaton \(\X\) are bisimilar iff \({!_\X(x)} = {!_\X(x')}\).
\end{restatable}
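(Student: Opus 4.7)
The plan is to prove both directions by exploiting the characterization of bisimilarity on $\coalg Z$ given by \cref{lem:tree concrete bisimilarity} together with the coinduction principle (\cref{lem:Z is simple}). The key leverage is that $!_\X$, being a \acro{GKAT}-automaton homomorphism, preserves the transition structure: for every $x \in X$ and $a \in A$, we have $x \downarrow_\X a$ iff $!_\X(x)(a) = 0$, $x \Rightarrow_\X a$ iff $!_\X(x)(a) = 1$, and $x \trans{a|p}_\X y$ implies $!_\X(x)(a) = p$ and $\partial_a\, {!_\X(x)} = {!_\X(y)}$.

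For the $(\Leftarrow)$ direction, I would show that the kernel $K = \{(x,x') \in X \times X \mid {!_\X(x)} = {!_\X(x')}\}$ is itself a bisimulation on $\X$. Given $(x, x') \in K$, write $t = {!_\X(x)} = {!_\X(x')}$. Since both states are sent to the same tree, their local behavior on any atom $a$ is determined by $t(a)$: both reject if $t(a)=0$, both accept if $t(a)=1$, and if $x \trans{a|p}_\X y$ and $x' \trans{a|q}_\X y'$, then the homomorphism property forces $p = t(a) = q$ and $!_\X(y) = \partial_a t = {!_\X(y')}$, so $(y,y') \in K$. This verifies the three clauses of \cref{lem:concrete bisimilarity}, so $K$ is a bisimulation and contains $(x,x')$.

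For the $(\Rightarrow)$ direction, suppose $R \subseteq X \times X$ is a bisimulation with $(x, x') \in R$. I would define the image relation
\[
	R' = \{({!_\X(y)}, {!_\X(y')}) \mid (y,y') \in R\} \subseteq Z \times Z
\]
and show it is a bisimulation on $\coalg Z$ using \cref{lem:tree concrete bisimilarity}. For $({!_\X(y)}, {!_\X(y')}) \in R'$ and $a \in A$, the bisimulation clauses for $R$ on $(y,y')$ give the same rejection/acceptance/action-emission behavior; pushing these along $!_\X$ yields ${!_\X(y)}(a) = {!_\X(y')}(a)$, settling clause (1). For clause (2), if $\partial_a {!_\X(y)}$ is defined, then ${!_\X(y)}(a) = p \in \Sigma$, forcing a transition $y \trans{a|p}_\X z$; by the bisimulation $R$, there is a matching $y' \trans{a|p}_\X z'$ with $(z,z') \in R$, giving $\partial_a {!_\X(y)} = {!_\X(z)}$ and $\partial_a {!_\X(y')} = {!_\X(z')}$, so both are defined and the pair lies in $R'$. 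The case where $\partial_a {!_\X(y')}$ is defined is symmetric. Then \cref{lem:Z is simple} applied to $R'$ yields $!_\X(y) = !_\X(y')$ for every $(y,y') \in R$, in particular $!_\X(x) = !_\X(x')$.

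I do not expect a serious obstacle here; this is the standard coalgebraic fact that the final morphism of a coalgebra identifies precisely the bisimilar states, specialized to the functor $G = (2 + \Sigma \times \id)^A$. The only care needed is to track the asymmetry in clause (2) of \cref{lem:tree concrete bisimilarity} (``if either is defined, both are''), which falls out cleanly from the symmetric formulation of bisimulation on $\X$ and the fact that $!_\X$ preserves and reflects the shape of the transition (rejection, acceptance, or labeled move) at every atom.
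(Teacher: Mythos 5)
Your proof is correct, but the necessity direction takes a different route from the paper's. For sufficiency your kernel argument is in substance the same as the paper's: the paper observes that the graph of \(!_\X\) is a bisimulation, that converses and relational compositions of bisimulations are bisimulations, and that composing the graph with its converse (which is exactly your kernel \(K\)) yields a bisimulation containing \((x,x')\); you simply verify the clauses of the kernel directly instead of invoking these closure properties. For necessity, however, the paper quotients \(\X\) by the bisimilarity relation \(\bisim\), equips \(X/{\bisim}\) with the (unique) induced automaton structure so that the quotient map \(q\) is a homomorphism, and then concludes \(!_\X = {!_{\X/\bisim}} \circ q\) from the uniqueness clause of \cref{thm:Z is the final coalgebra}. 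You instead push the bisimulation \(R\) forward along \(!_\X\) to a relation \(R'\) on \(Z\), check via \cref{lem:tree concrete bisimilarity} that \(R'\) is a bisimulation, and finish by the simplicity of \(\coalg Z\) (\cref{lem:Z is simple}). Both are standard coalgebraic arguments; yours is more elementary in that it avoids constructing the quotient automaton (and in particular avoids the small but real obligation of showing the quotient transition structure is well defined), at the cost of re-verifying bisimulation clauses by hand. The paper's version leans harder on finality and generalizes more cleanly to arbitrary functors, but for this concrete functor your pushforward-plus-coinduction argument is complete and arguably more self-contained.
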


% *** *** SECTION *** *** %
\section{Trees form an algebra}\label{sec:i_g_as_an_algebra}

So far, we have seen that the behavior of a \acro{GKAT}-program is naturally interpreted as a certain kind of tree, and that each such tree is the state of the final \acro{GKAT}-automaton \(\coalg Z\).
In this section, we show that the trees in \(Z\) can themselves be manipulated and combined using the programming constructs of \acro{GKAT}.
These operations satisfy all of the axioms that build \(\equiv_0\), but fail the \emph{early-termination axiom} S3.
This gives rise to an inductive semantics of \acro{GKAT}-programs \(\sem{-}: \Exp \to Z\) that is sound w.r.t.~\(\equiv_0\).
As a matter of fact, we will see that \(\sem{-}\) coincides with the unique \acro{GKAT}-automaton homomorphism \(!_\coalg{E}: \Exp \to Z\).

We begin by interpreting the tests.
Given \(b \subseteq A\), we define \(\sem{b}\) as the characteristic function of \(b\) as a subset of \(A^+\), i.e., \(\sem{b}(a) = 1\) if \(a \in b\), and \(\sem{b}(a) = 0\) otherwise.

On the other hand, primitive action symbols denote programs that perform an action in one step and then terminate successfully in the next.
For \(p \in \Sigma\), this behavior is described by the unique tree \(\sem{p}\) such that \(\sem{p}(a) = p\) and \(\partial_a \sem{p} = \sem1\) for any \(a \in A\).
When context can disambiguate, we write \(b\) in place of \(\sem b\) and \(p\) in place of \(\sem p\).

Each operation is defined using a \define{behavioral differential equation (BDE)} consisting of a set of \define{initial conditions}
\(
	t(a) = \xi_a \in 2 + \Sigma
\)
indexed by \(a \in A\) and a set of \define{step equations}
\(
	\partial_a t = s_a
\)
indexed by the \(a \in A\) with \(t(a) \in \Sigma\).
This is possible because every BDE describes a unique automaton, which (by \cref{thm:Z is the final coalgebra}) has a unique interpretation in \(Z\)~\cite{RUTTEN20031}.
Each BDE below can be read more or less directly from \cref{tab:transition_structure_of_coalg_exp}.

The first operation that we interpret in \(Z\) is sequential composition.
For any \(s,t \in Z\), the tree \(s \cdot t\) models sequential composition of programs by replacing each non-zero leaf of \(s\) by the nodal subtree of \(t\) given by the corresponding atomic test.
This can formally be defined as the unique operation satisfying the following behavioral differential equation.

{\vspace{-1em}\small\begin{mathpar}
	(s \cdot t)(a) = \begin{cases}
		t(a) &\text{if \(s(a) = 1\)},\\
		s(a) &\text{otherwise}
	\end{cases} \and \qquad
	\partial_a (s \cdot t) = \begin{cases}
		\partial_a t &\text{if \(s(a) = 1\)},\\
		\partial_a s \cdot t &\text{otherwise.}
	\end{cases}
\end{mathpar}}%
Here, \(\partial_a s \cdot t = (\partial_a s) \cdot t\).
Using this operation, we define \(\sem{e \cdot f} = \sem{e} \cdot \sem{f}\).

To interpret the guarded union operation, define \(+_b\) to be the unique operation such that

{\vspace{-0.8em}\small\begin{mathpar}
	(s +_b t)(a) = \begin{cases}
		s(a) &\text{if \(a \in b\)},\\
		t(a) &\text{otherwise}
	\end{cases} \and
	\partial_a(s +_b t) = \begin{cases}
		\partial_a s &\text{if \(a \in b\)},\\
		\partial_a t &\text{otherwise.}
	\end{cases}
\end{mathpar}}%
As before, we define \(\sem{e +_b f} = \sem{e} +_b \sem{f}\).

Finally, we interpret the guarded exponential operation.
Following \cref{tab:transition_structure_of_coalg_exp}, \(t^{(b)}\) can be defined as the unique tree satisfying

{\vspace{-1em}\small\begin{mathpar}
	t^{(b)}(a) = \begin{cases}
		1 &\text{if \(a \nin b\)},\\
		t(a) &\text{if \(a \in b\) and  \(t(a) \in \Sigma\)},\\
		0 &\text{otherwise.}
	\end{cases}
	\and
	\partial_a (t^{(b)}) = \partial_a t \cdot t^{(b)}
\end{mathpar}}%
Similar to the other operators, we set \(\sem{e^{(b)}} = \sem{e}^{(b)}\).
This completes our definition of the algebraic homomorphism \(\sem- : \Exp\to Z\).

As it happens, \(\sem{-}\) is also a \GKAT automaton homomorphism from \(\coalg{E}\) to \(\Z\).
By uniqueness of such homomorphisms (\cref{thm:Z is the final coalgebra}), we can conclude that \(\sem{-}\) and \(!_{\coalg E}\) are the same.

\begin{restatable}{proposition}{restateetaisabialgebra}\label{prop:eta is a bialgebra}
	For any \(e \in \Exp\), \(\sem{e} =\ !_{\coalg E}(e)\).
\end{restatable}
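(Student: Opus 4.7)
The plan is to exploit the uniqueness half of \cref{thm:Z is the final coalgebra}: since $!_{\coalg{E}}$ is the \emph{unique} \GKAT-automaton homomorphism from $\coalg{E}$ to $\coalg{Z}$, it suffices to verify that the map $\sem{-}\colon \Exp \to Z$ defined inductively in this section is itself a \GKAT-automaton homomorphism. Concretely, I would prove that for every $e \in \Exp$ and every $a \in A$ the following three compatibility conditions hold:
\begin{enumerate}
  \item if $e \downarrow_{\coalg{E}} a$, then $\sem{e}(a) = 0$;
  \item if $e \Rightarrow_{\coalg{E}} a$, then $\sem{e}(a) = 1$;
  \item if $e \trans{a|p}_{\coalg{E}} e'$, then $\sem{e}(a) = p$ and $\partial_a \sem{e} = \sem{e'}$.
\end{enumerate}
Once these are established, $\sem{-}$ is a homomorphism into the final coalgebra, hence equals $!_{\coalg{E}}$.

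The verification proceeds by induction on the structure of $e$. The base cases are direct from the definitions: for a test $b \in \operatorname{BExp}$, the defining equation $\sem{b}(a) = 1$ iff $a \in b$ together with $A \subseteq \dom(\sem{b})$ taking only values in $2$ matches the rules of \cref{tab:transition_structure_of_coalg_exp} for $b$; and for a primitive action $p \in \Sigma$, the stipulation that $\sem{p}(a) = p$ and $\partial_a \sem{p} = \sem{1}$ matches the rule $p \trans{a|p} 1$ exactly. The inductive cases are where the behavioral differential equations earn their keep: each BDE was designed to mirror the transition rules in \cref{tab:transition_structure_of_coalg_exp} clause for clause. For instance, in the case $e = f \cdot g$, the two branches of the BDE for $\cdot$ correspond respectively to the rules where $f \Rightarrow a$ (passing control to $g$) and $f \trans{a|p} f'$ (emitting $p$ and continuing with $f' \cdot g$); the inductive hypothesis, applied to $f$ and $g$, turns the transitions of $f \cdot g$ in $\coalg E$ into the prescribed equalities in $Z$. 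The cases $e = f +_b g$ and $e = f^{(b)}$ are entirely analogous, the loop case requiring one to observe that $\partial_a \sem{e^{(b)}} = \partial_a \sem{e} \cdot \sem{e^{(b)}} = \sem{e' \cdot e^{(b)}}$ whenever $a \in b$ and $e \trans{a|p} e'$, which again comes directly from the BDE for $-^{(b)}$ together with the IH for $e$.

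The main conceptual point (rather than a serious obstacle) is simply recognising that the BDE-based definition of each operation is precisely a translation of the corresponding inference rule of $\coalg{E}$ into the language of $\coalg{Z}$, so that structural induction collapses into a rule-by-rule pattern match. No coinductive argument beyond the uniqueness supplied by \cref{lem:Z is simple} and \cref{thm:Z is the final coalgebra} is needed for the proposition itself: the coinduction has already been discharged once and for all when the operations $\cdot$, $+_b$, and $-^{(b)}$ were defined on $Z$ via their BDEs.
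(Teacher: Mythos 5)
Your proposal is correct and follows essentially the same route as the paper's proof: show that $\sem{-}$ is a \acro{GKAT}-automaton homomorphism by structural induction on $e$ (checking the three compatibility conditions against the transition rules of $\coalg{E}$, with the base cases for tests and actions and the inductive cases driven by the BDEs), then conclude by the uniqueness part of \cref{thm:Z is the final coalgebra}. The one-directional implications suffice exactly as you use them, since the three transition outcomes are mutually exclusive and exhaustive on both sides.
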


This allows us to treat the algebraic and coalgebraic semantics as synonymous.
Using \cref{lem:final bisimilarity}, we can then show soundness w.r.t.~\(\equiv_0\) by arguing that \(\equiv_0\) is a bisimulation on \(\coalg E\).

\begin{restatable}{theorem}{restateZsatisfiesGKATminus}\label{thm:Z satisfies GKAT^-}
	The semantics \(\sem{-}\) is sound w.r.t.~\(\equiv_0\).
\end{restatable}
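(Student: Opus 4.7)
The plan is to exploit the final-coalgebra machinery already developed. By \cref{prop:eta is a bialgebra}, $\sem{-}$ coincides with the unique homomorphism $!_{\coalg E} : \coalg E \to \coalg Z$, and by \cref{lem:final bisimilarity}, $!_{\coalg E}(e) = !_{\coalg E}(f)$ if and only if $e$ and $f$ are bisimilar in $\coalg E$. So soundness reduces to showing that $\equiv_0$, viewed as a relation on $\Exp$, is itself a bisimulation on $\coalg E$, i.e.\ that every pair $(e,f)$ with $e \equiv_0 f$ satisfies the three clauses of \cref{lem:concrete bisimilarity}, with successor pairs again related by $\equiv_0$.

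\textbf{Induction over the derivation of $\equiv_0$.} I would induct on the derivation of $e \equiv_0 f$. At the axiom level, for each schema in \cref{fig:GKAT axioms} other than S3, and for each instance of a Boolean-algebra identity, the bisimulation clauses are verified by case analysis on the atom $a$ using the Brzozowski-style rules of \cref{tab:transition_structure_of_coalg_exp}. The axioms U1--U5, S1, S2, S4--S6, W1, and W2 each reduce, after unfolding $E(\cdot)$ and the transition rules, to comparing derivatives that either coincide syntactically or are related by a smaller-derivation $\equiv_0$-instance; these are short, routine calculations. The closure rules are handled generically: reflexivity gives the identity bisimulation; symmetry and transitivity use closure of bisimulations under inverses and relational composition; and congruence is handled operator by operator, by exhibiting an explicit bisimulation for each constructor. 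For instance, for sequencing one checks that $\{(e_1 \cdot e_2, f_1 \cdot f_2) : e_1 \equiv_0 f_1,\ e_2 \equiv_0 f_2\} \cup {\equiv_0}$ is a bisimulation directly from the sequencing rule of \cref{tab:transition_structure_of_coalg_exp}, and analogously for $+_b$ and $^{(b)}$.

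\textbf{The main obstacle: axiom W3.} The nontrivial step is W3, whose premises $E(e) \equiv 0$ and $g \equiv_0 e g +_b f$ force me to verify that $(g, e^{(b)} \cdot f)$ also meets the bisimulation clauses. By the induction hypothesis on the premise, $(g, eg +_b f)$ already does, so $g$'s behavior on each atom $a$ matches that of $eg +_b f$, with successors related by $\equiv_0$. Fix $a$. If $a \in b$, then $eg +_b f$ behaves like $eg$, and since $E(e) \equiv 0$ rules out $e \Rightarrow a$, there are two sub-cases: either $e \downarrow a$, in which case both $g$ and $e^{(b)} \cdot f$ reject on $a$, or $e \trans{a|p} e'$, in which case $g \trans{a|p} g'$ for some $g'$ with $g' \equiv_0 e' \cdot g$, while $e^{(b)} \cdot f \trans{a|p} (e' \cdot e^{(b)}) \cdot f$; associativity (S1), congruence of $\cdot$, and the fact that $g \equiv_0 e^{(b)} \cdot f$ (an instance of $\equiv_0$ under consideration) combine via transitivity to give $g' \equiv_0 (e' \cdot e^{(b)}) \cdot f$. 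If $a \in \bar b$, then $eg +_b f$ behaves like $f$, so by IH so does $g$, while on the right $e^{(b)} \Rightarrow a$ and the sequencing rule make $e^{(b)} \cdot f$ behave like $f$ as well. In all sub-cases the bisimulation clauses are satisfied, completing the induction and hence the proof.
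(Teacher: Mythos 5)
Your proposal is correct and follows essentially the same route as the paper: reduce soundness to showing that \(\equiv_0\) is a bisimulation on \(\coalg E\) via \cref{prop:eta is a bialgebra} and \cref{lem:final bisimilarity}, induct on the derivation, dispatch the axioms and closure rules by routine case analysis, and handle W3 by combining the induction hypothesis on the premise \(g \equiv_0 eg +_b f\) with the exclusion of \(e \Rightarrow a\) from \(E(e) \equiv_0 0\) and the pair \((g, e^{(b)} \cdot f)\) itself to relate the derivatives. The paper's appendix proof carries out exactly these steps, including your observation that the W3 case is the only one requiring a genuinely new idea.
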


On the other hand, \(Z\) does not satisfy S3.
For instance, \(\sem{p \cdot 0} \neq \sem{0}\) for any \(p \in \Sigma\).
We will adapt the model to overcome this in \cref{sec:a_completeness_theorem_for_gkat}.

% *** *** SECTION *** *** %
\section{Well-nested automata and nested behavior}\label{sec:well_nested_coalgebras}

\begin{figure}[!t]
\centering
\begin{tikzpicture}[
	->,
	> = stealth,
	node distance = 2cm,
	every node/.style = {
		thick,
		minimum height=1.7em,
		text height=0.6em,
	},
]
	\node at (2, 0) (0) {\(v_0\)};
	\node at (4, 0) (1) {\(v_1\)};
	\node[right=3mm of 1] (2) {\(b\)};
	\node[left=3mm of 0] (3) {\(\bar{b}\)};

	\path (0) edge[above, bend left,looseness=0.5,->] node{\(b | p\)} (1);
	\path (1) edge[below, bend left,looseness=0.5,->] node{\(\bar b | q\)} (0);
	\path ($(1) + (2.5mm,0)$) edge[double,double distance=2pt,-implies] ($(2) + (-2mm,0)$);
	\path ($(0) + (-2.5mm,0)$) edge[double,double distance=2pt,-implies] ($(3) + (2mm,0)$);
\end{tikzpicture}
\caption{A \acro{GKAT}-automaton without \GKAT behaviors.}%
\label{fig:a non gkat}
\end{figure}

Not all behaviors expressible in terms of finite \GKAT-automata occur in \(\coalg E\).
For example, the two-state automaton in \cref{fig:a non gkat} fails to exhibit any behavior of the form \(\sem e\), with \(e \in \Exp\), when \(b, \bar b \neq 0\).
This is proven in \cref{appendix:well-nested automata}.
where we show that no branch of a \acro{GKAT} behavior can accept both \(b\) and \(\bar b\) infinitely often.
For another example, see~\cite{kozentseng2008}, where a particular three-state automaton is shown to exhibit no \acro{GKAT} behavior.

Intuitively, both of the examples above fail to exhibit the behaviors of \acro{GKAT} programs because \GKAT lacks a \textbf{goto}-statement that allows control to transfer to an arbitrary position in the program; instead, \GKAT automata corresponding to \GKAT expressions are structured by branches and loops.
The question then arises: can we characterize the ``shapes'' of automata whose behavior is \textbf{goto}-free, i.e., described by a \GKAT expression?

In~\cite{gkat}, the authors proposed the class of \emph{well-nested} \GKAT automata, consisting of automata built inductively by applying a series of operations designed to mimic the structural effects of loops.
It was shown that the behavior of every \GKAT expression can be described by some well-nested automaton.
Moreover, they proved that the class of well-nested automata constitutes a sufficient condition: the behavior of a well-nested \GKAT automaton is described by a \GKAT expression.
Whether this condition is also \emph{necessary}, i.e., whether every automaton with behavior corresponding to a \GKAT expression is well-nested, was left open.

Thus, a positive answer to the latter question amounts to showing that every \GKAT automaton whose behavior is the same as a well-nested \GKAT automaton is itself well-nested.
Such a class of automata closed under behavioral equivalence is known as a \define{covariety}.
Covarieties have desirable structural properties.
In particular, they are closed under homomorphic images~\cite{Rutten2000,gumm2000elementsofthegeneraltheoryofcoalgebras,adamekporst2003}.
Unfortunately, well-nested automata do not satisfy this property: we have found a well-nested automaton whose homomorphic image is not well-nested, depicted in~\cref{fig:_cat_wn}.
In other words, there exists a non-well-nested automaton whose behavior is still described by a \GKAT expression.
This also closes the door on a simpler approach to completeness described in~\cite{gkat}.

\smallskip

Thus, well-nested automata do not constitute a characterization of the \GKAT automata that correspond to \GKAT expressions.
To obtain such a characterization, we take a slightly different approach: rather than describing shapes of these automata, we describe the shapes of the trees that they denote.
We refer to a set of trees \(\coeq{U} \subseteq Z\) as a \define{coequation}, and treat it as a predicate: a \GKAT-automaton \(\X\) \define{satisfies} \(\coeq{U}\), written \(\X \models \coeq{U}\), if every behavior present in \(\X\) appears in \(\coeq{U}\) --- in other words, if \(!_\X\) factors through \(\coeq{U}\).
We write \(\Cov(\coeq U)\) to denote the class of all \acro{GKAT}-automata that satisfy \(\coeq U\).
It is easily shown that \(\Cov(\coeq{U})\) is a covariety.

The coequation that we give to describe the covariety of automata whose behavior corresponds to a \GKAT expression is driven by the intuition behind well-nested automata: the trees in this coequation are built using compositions that enforce \textbf{while}-like behavior, and do not permit the construction of \textbf{goto}-like behavior.
To this end, we need to define a new \emph{continuation} operation, as follows.
Given \(s,t \in Z\), the \define{continuation} \(s \rhd t\) of \(s\) along \(t\) is the unique tree satisfying the behavioral differential equation

{\vspace{-1em}\small\begin{mathpar} % HACK: if this is part of the previous paragraph, strange things happen.
	(s \rhd t) (a) = \begin{cases}
		t(a) &\text{if \(s(a) = 1\)},\\
		s(a) &\text{otherwise}
	\end{cases}
	\and
	\partial_a (s \rhd t) = \begin{cases}
		\partial_a t \rhd t &\text{if \(s(a) = 1\)},\\
		\partial_a s \rhd t &\text{otherwise.}
	\end{cases}
\end{mathpar}}%
Intuitively, \(s \rhd t\) is the tree that attaches infinitely many copies of \(t\) to \(s\).
This operation can be thought of as the dual to Kleene's original \(*\)-operation~\cite{Kleene1951}, which loops on its first argument some number of times before continuing in the second.

\begin{definition}
	The \define{nesting} coequation \(\coeq W\) is the smallest subset of \(Z\) containing the \define{discrete} coequation	\(\coeq D := \{\sem{b} \mid b \subseteq A\}\) and closed under the \define{nesting} rules below:
	{\small\begin{mathpar}
		    \infer{%
		        t, s \in \coeq{W}
		    }{%
		        t \cdot s \in \coeq{W}
		    }
		    \and
		    \infer{%
		        (\forall a \in A)\  t(a) \in \Sigma \implies \partial_a t \in \coeq W
		    }{%
		        t \in \coeq{W}
		    }
		    \and
		    \infer{%
		        t, s \in \coeq{W}
		    }{%
		        t \rhd s \in \coeq{W}
		    }
		\end{mathpar}}%
\end{definition}
The first and third nesting rules say that \(\coeq W\) is closed under composition and continuation; the second rule says that integrals over nested trees are nested.

\smallskip
It is not too hard to see that \(\coeq W\) is a subautomaton of \(\coalg Z\).
In other words, if \(t \in\coeq W\), then the derivatives of \(t\) are in \(\coeq W\) as well.
In fact, \(\coeq W\) is a subalgebra of \(Z\) in that it is closed under the operations of \acro{GKAT}.
This can be seen from the following observations: first, \(\partial_a p = 1\) for all \(a \in A\), so \(p \in \coeq W\) for any \(p \in \Sigma\) by the second nesting rule.
Second, \(\coeq W\) is closed under sequential composition by definition.
Third, if \(s,t \in \coeq W\) and \(b \subseteq A\), then every derivative of \(s +_b t\) is either a derivative of \(s\) or a derivative of \(t\).
Lastly, closure under the guarded exponential is a consequence of the identity

{\vspace{-0.8em}\small
\begin{mathpar}\textstyle
	t^{(b)} = 1 \rhd (\tilde{t} +_b 1),
	\qquad\text{where}\qquad
	\tilde t := \bigplus_{t \trans{a|p_a} t_a} p_a \cdot t_a.
\end{mathpar}}%
This identity can be shown to hold for all \(t \in Z\) and \(b \subseteq A\) using a coinductive argument.
It follows that the nesting coequation contains the image of \(\sem-\).
A similar argument can be used to establish the reverse containment as well, which leads to the following.

\begin{restatable}{proposition}{restateexistence}\label{prop:existence}
	\(\coeq{W}\) is the set of \acro{GKAT} program behaviors, i.e,
	\(
		\coeq W = \{\sem{e} \mid e \in \Exp\}.
	\)
\end{restatable}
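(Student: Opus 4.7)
My plan is to prove the two inclusions $\{\sem e : e \in \Exp\} \subseteq \coeq W$ and $\coeq W \subseteq \{\sem e : e \in \Exp\}$ separately, each by exploiting the closure properties of the other set.

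For the forward inclusion I would proceed by structural induction on $e$, reusing the closure arguments already sketched in the paragraphs preceding the statement. Tests $b$ lie in $\coeq D \subseteq \coeq W$; primitive actions $p$ satisfy $\partial_a \sem p = \sem 1 \in \coeq D$ for every $a$, so the integration rule places them in $\coeq W$; sequential composition is a nesting rule outright; a guarded union $\sem{e_1} +_b \sem{e_2}$ has all of its derivatives among the derivatives of $\sem{e_1}$ and $\sem{e_2}$, which lie in $\coeq W$ by the subautomaton property observed in the excerpt, so the integration rule applies; and the guarded exponential is reduced via the identity $t^{(b)} = 1 \rhd (\tilde t +_b 1)$ already displayed in the excerpt, together with closures under guarded union, sequential composition, and continuation.

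For the reverse inclusion I would show that $\{\sem e : e \in \Exp\}$ itself contains $\coeq D$ and is closed under the three nesting rules, so minimality of $\coeq W$ yields $\coeq W \subseteq \{\sem e : e \in \Exp\}$. Containment of $\coeq D$ and closure under sequential composition are immediate. For the integration rule, given a tree $t$ whose every $\Sigma$-valued derivative equals $\sem{e_a}$ for some $e_a \in \Exp$, I would set
\[
    e \;:=\; 1 +_{b_1} \bigplus_{a\,:\,t(a) \in \Sigma} t(a) \cdot e_a
    \qquad \text{with}\ b_1 = \{a \in A : t(a) = 1\},
\]
and observe that the Brzozowski derivatives of $e$ in \cref{tab:transition_structure_of_coalg_exp} match the transition structure of $t$ atom by atom, so \cref{lem:Z is simple} gives $\sem e = t$.

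The delicate case is the continuation closure; this is where I expect the main work. My plan is to prove the identity
\[
    s \rhd r \;=\; s \cdot r^{(\bar b)}
    \qquad \text{with}\ \bar b \in \operatorname{BExp}\ \text{characterising}\ \{a \in A : r(a) \neq 1\},
\]
which immediately gives $s \rhd r = \sem{e_s \cdot e_r^{(\bar b)}} \in \{\sem e : e \in \Exp\}$ whenever $s = \sem{e_s}$ and $r = \sem{e_r}$. To verify this identity I would exhibit $R = \{(s \rhd r,\, s \cdot r^{(\bar b)}) : s,r \in Z\}$ as a bisimulation on $\coalg Z$ via \cref{lem:tree concrete bisimilarity}. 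The payoff of this particular choice of $\bar b$ is that $r^{(\bar b)}(a) = r(a)$ at every atom, which is precisely what the BDE for $s \rhd r$ demands when $s(a) = 1$; the case $s(a) \in \Sigma$ reduces to a pair of the same shape with $s$ replaced by $\partial_a s$. The main obstacle is spotting this bisimulation, specifically recognising that the loop condition $\bar b$ must exclude the atoms where $r$ immediately accepts; once that is in hand, the remainder is a routine case split on whether $s(a)$ lies in $\{0,1\}$ or in $\Sigma$ using the BDEs of $\rhd$, sequential composition, and the guarded exponential.
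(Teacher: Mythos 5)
Your proposal is correct and follows essentially the same route as the paper: the forward inclusion via the subalgebra/closure observations (with $t^{(b)} = 1 \rhd (\tilde t +_b 1)$ for the loop), and the reverse inclusion by showing $\{\sem e \mid e \in \Exp\}$ contains $\coeq D$ and is closed under the three nesting rules, with the integration case handled by the fundamental-theorem-style expression $1 +_{E(t)} \bigplus t(a)\cdot e_a$ and the continuation case reduced to the identity $s \rhd r = s \cdot r^{(\overline{E(r)})}$, verified by exactly the bisimulation you describe. The only cosmetic difference is that the paper establishes the forward inclusion in the main text before the proposition and reserves the appendix proof for the reverse containment.
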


\cref{prop:existence} characterizes \(\coeq{W}\) as the the set of behavioral patterns exhibited by \acro{GKAT} expressions: the states of a \acro{GKAT}-automaton \(\X\) behave like \acro{GKAT} programs if and only if \(\X\) satisfies \(\coeq W\), or, in other words, if \(\X\) can be found in the covariety \(\Cov(\coeq{W})\).
Since every well-nested automaton has the behavior of some \GKAT expression~\cite{gkat}, it must satisfy \(\coeq{W}\).
\begin{restatable}{proposition}{restatewisnecessary}%
	\label{prop:W is necessary}
	Well-nested \acro{GKAT}-automata satisfy  the nesting coequation.
\end{restatable}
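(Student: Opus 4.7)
The plan is to combine \cref{prop:existence} with the Kleene-type result of~\cite{gkat}, which states that every well-nested \GKAT-automaton realizes the behavior of some \GKAT expression. Concretely, I would fix an arbitrary well-nested automaton $\X = (X,\delta)$ and show that the unique homomorphism $!_\X : \X \to \coalg Z$ factors through $\coeq W$, which is precisely the meaning of $\X \models \coeq W$.

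Take any state $x \in X$. Invoking the cited result from~\cite{gkat}, there exists an expression $e_x \in \Exp$ whose behavior coincides with that of $x$; by \cref{lem:final bisimilarity}, this behavioral equivalence is equivalent to the identity $!_\X(x) = !_{\coalg E}(e_x)$ in $\coalg Z$, and by \cref{prop:eta is a bialgebra} the right-hand side equals $\sem{e_x}$. Then \cref{prop:existence} gives $\sem{e_x} \in \coeq W$, so $!_\X(x) \in \coeq W$. Since $x$ was arbitrary, the image of $!_\X$ lies entirely in $\coeq W$, establishing $\X \models \coeq W$.

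There is no genuine obstacle in the logic, as this is essentially a one-line composition of two prior results. The only care needed is in translating the Kleene-type statement of~\cite{gkat} (likely phrased as ``$x$ is bisimilar to some $e_x$'') into the tree-identity $!_\X(x) = \sem{e_x}$ required by the definition of covariety-membership; as noted above, this translation is immediate via \cref{lem:final bisimilarity,prop:eta is a bialgebra}.

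An alternative, more self-contained route would be a structural induction on the inductive construction of well-nested automata: base automata have states behaving like discrete trees in $\coeq D$, and the sequential, branching, and looping constructions that build up well-nested automata preserve $\coeq W$-membership by virtue of the closure arguments given in the paragraph just before \cref{prop:existence} (in particular, using the identity $t^{(b)} = 1 \rhd (\tilde t +_b 1)$ for the loop case). This avoids the citation but effectively re-proves the Kleene theorem of~\cite{gkat} inline, so the first approach is preferable for a short proof.
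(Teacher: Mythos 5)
Your proposal is correct and is essentially the paper's own argument: the main text establishes this proposition in exactly the way you describe, by citing the result of~\cite{gkat} that every well-nested automaton realizes the behavior of some \GKAT expression and combining it with \cref{prop:existence} (the translation through \cref{lem:final bisimilarity} and \cref{prop:eta is a bialgebra} being immediate). Your alternative route---structural induction on the well-nested construction---is precisely the supplementary, more combinatorial proof the paper gives in its appendix, so both of your suggested approaches appear in the paper, with the short one serving as the official argument.
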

\begin{figure}[!t]%
	\centering
	\begin{tikzpicture}[
	->,
	> = stealth,
	node distance = 15mm,
	every state/.style = {
		thick,
		inner sep=0,
		fill=black!0.0,
		draw=black!0.0
	}
]
		\node[state] (0) {\(v_0\)};
		\node[state, right of=0] (1) {\(v_1\)};
		\node[state, below of=0] (2) {\(v_2\)};
		\node[state, right of=2] (3) {\(v_3\)};
		\node[state, right of=1] (4) {\(v_4\)};
		\node[state, right of=4] (5) {\(v_5\)};
		\node[state, below of=4] (6) {\(v_6\)};
		\node[state, right of=6] (7) {\(v_7\)};

		\node[state, left=2mm of 0] (a0) {\footnotesize\(a_0,a_1\)};
		\node[state, left=2mm of 2] (a2) {\footnotesize\(a_0,a_1\)};
		\node[state, right=2mm of 5] (a5) {\footnotesize\(a_2,a_3\)};
		\node[state, right=2mm of 7] (a7) {\footnotesize\(a_2,a_3\)};

		\draw (0) edge[above] node{\footnotesize\(a_3\)} (1)
			  (2) edge[below] node{\footnotesize \(a_3\)} (3)
			  (0) edge[left, bend right] node{\footnotesize \(a_2\)} (2)
			  (2) edge[right, bend right] node{\footnotesize\(a_2\)} (0)
			  (5) edge[above] node{\footnotesize\(a_0\)} (4)
			  (7) edge[below] node{\footnotesize\(a_0\)} (6)
			  (5) edge[left, bend right] node{\footnotesize\(a_1\)} (7)
			  (7) edge[right, bend right] node{\footnotesize\(a_1\)} (5);

		\draw (1) edge[-, dotted, above] node{} (4)
			  (3) edge[-, dotted, above] node{} (6);

		\path ($(0) + (-2.5mm,0.5mm)$) edge[double,double distance=2pt,-implies] ($(a0.east) + (0.5mm,0.5mm)$)
			  ($(2) + (-2.5mm,0.5mm)$) edge[double,double distance=2pt,-implies] ($(a2.east) + (0.5mm,0.5mm)$)
			  ($(5) + (2.5mm,0.5mm)$) edge[double,double distance=2pt,-implies] ($(a5.west) + (-0.5mm,0.5mm)$)
			  ($(7) + (2.5mm,0.5mm)$) edge[double,double distance=2pt,-implies] ($(a7.west) + (-0.5mm,0.5mm)$);
	\end{tikzpicture}
	\vspace*{-.25cm}
	\caption{As depicted, this automaton is well-nested. However, identifying \(v_1\) with \(v_4\), and \(v_3\) with \(v_6\), we obtain an automaton that is not well-nested.}\label{fig:_cat_wn}
	\vspace*{-.25cm}
\end{figure}

\section{Completeness}\label{sec:completeness}

This section contains two completeness theorems for \GKAT\@. As in~\cite{gkat}, we need to assume that W3 is generalized to arbitrary (linear) systems of equations. This \emph{uniqueness axiom}, discussed in \cref{sub:uniqueness_of_solutions_for_salomaa_systems}, will allow us to prove that the semantics \(\sem-\) from \cref{sec:i_g_as_an_algebra} is free with respect to \(\equiv_0\)---that is, \(\sem e = \sem f\) implies \(e \equiv_0 f\)---in \cref{sec:a_completeness_theorem_for_gkat-}. This will then provide an alternative route to completeness for \GKAT in \cref{sec:a_completeness_theorem_for_gkat}.

\subsection{Uniqueness of solutions for Salomaa systems}\label{sub:uniqueness_of_solutions_for_salomaa_systems}

In part, W3 from \cref{fig:GKAT axioms} ensures that the equation \(g \equiv e \cdot g +_b f\) with indeterminate \(g\) has at most one solution in \(\Exp/{\equiv_0}\) for any \(e,f \in \Exp\) under the condition that \(e\) denotes a productive program.
In fact, we could have stated the axiom this way from the beginning, as W1 provides the existence of a solution to this equation (even without the restriction on productivity).
As we will see, the uniqueness axiom makes a more general statement than W3 about \emph{systems} of equations with an arbitrary number of indeterminates.

\begin{definition}
A \define{system of (\(n\) left-affine) equations} is a sequence of \(n\) equations of the form
\(	x_i = e_{i1}\cdot x_1 +_{b_{i1}} \cdots +_{b_{i(n-1)}} e_{in}\cdot x_n +_{b_in} c_{i}\), % chktex 25
indexed by \(i \le n\), such that
(1) \(x_i\) is an indeterminate variable;
(2) \({(b_{ij})}_{j \le n}\) is a sequence of \define{disjoint} Boolean expressions, i.e.~\(b_{ij} \wedge b_{ik} \equiv 0\) for any \(j \neq k\);
(3) \(c_i\) is a Boolean expression disjoint from \(b_{ij}\) for all \(j \le n\); and
(4) \(e_{ij}\) is a \acro{GKAT} expression for any \(j \le n\).

Given any congruence \(\gequiv\) satisfying the axioms of \(\equiv_0\), a \define{solution in \(\Exp/\gequiv\)} to such a system is an \(n\)-tuple of \acro{GKAT} expressions \({(g_i)}_{i \le n}\) such that the equivalence
\(
	g_i \gequiv e_{i1}\cdot g_1 +_{b_{i1}} \cdots +_{b_{i(n-1)}} e_{in}\cdot g_n +_{b_{in}} c_i
\) holds
for all \(i \le n\).
\end{definition}
For example, the equation in the premise of W3 is a system of one left-affine equation, and the conclusion prescribes a unique solution (in \(\Exp/{\equiv_0}\)) to the premise.
Every finite \acro{GKAT}-automaton \(\X\) gives rise to a system of equations with variables indexed by \(X = \{ x_i \mid i \leq n \}\) and coefficients indexed by the transition map, as follows:

{\vspace{-0.5em}\small
\begin{mathpar}
	e_{ij} = \bigplus_{x_i \trans{a|p_a} x_j} p_a
    \and
	c_{i} = \{a \in A \mid x_i \Rightarrow a\}
    \and
	b_{ij} = \{a \in A \mid x_i \trans{a|p} x_j\}.
\end{mathpar}}%
Solving this system of equations uncovers the \acro{GKAT}-constructs the automaton implements.

The uniqueness axiom states that certain systems of equations, like the one in the premise of W3, admit at most one solution.
Choosing which systems the axiom should apply to must be done carefully for the same reason that necessitates the side-condition on W3.
Crucially, we require that the system have \emph{productive coefficients}, i.e.~\(E(e_{ij}) \equiv 0\) for all \(i,j \le n\), to admit a unique solution.
As this condition is analogous to Salomaa's \emph{empty word property}~\cite{salomaa1966}, a system of equations with productive coefficients is called \define{Salomaa}~\cite{gkat}.
The \define{uniqueness axiom (for \(\gequiv\))} states that every Salomaa system of equations has at most one solution in \(\Exp/{\gequiv}\).
It is sound with respect to the semantics \(\sem-\) from \cref{sec:i_g_as_an_algebra}.

\begin{restatable}{theorem}{restatesolvinginZ}\label{thm:solving in Z}
	For any \(i, j \le n\), let \(s_{ij} \in Z\) satisfy \(s_{ij}(a) \neq 1\) for any \(a \in A\), \({(b_{ij})}_{j \neq n}\) be a sequence of disjoint Boolean expressions for any \(i \le n\), and \(c_i \subseteq A\) be disjoint from \(b_{ij}\) for each \(i \le n\).
	The system of equations
	\(
		x_i = s_{i1}\cdot t_1 +_{b_{i1}} \cdots +_{b_{i(n-1)}} s_{in}\cdot t_n +_{b_{in}} c_i,
	\)
	indexed by \(i \le n\) has a unique solution in \(Z^n\).
\end{restatable}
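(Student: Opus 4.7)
The plan is to handle existence and uniqueness separately, both via coinduction (\cref{lem:Z is simple}). The crucial ingredient in both halves is the hypothesis $s_{ij}(a) \neq 1$: it plays the role of Salomaa's empty-word property, ensuring that reading an atom at the top level of $s_{ij} \cdot x_j$ cannot ``fall through'' into $x_j$, so the first-step behavior of each right-hand side depends only on the coefficients $s_{ij}$, $b_{ij}$, and $c_i$.

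For existence, I would put a $G$-coalgebra structure on $Y := Z \times [n]$, where the state $(r, i)$ is intended to denote $r \cdot t_i$. Transitions are defined by cases: if $r(a) \in \Sigma$, then $(r, i) \trans{a \mid r(a)} (\partial_a r, i)$; if $r(a) = 0$, then $(r, i) \downarrow a$; and if $r(a) = 1$, we consult the guards of row $i$---letting $j$ be the unique index (if any) with $a \in b_{ij}$, we declare that $(r, i)$ outputs $s_{ij}(a)$ with derivative $(\partial_a s_{ij}, j)$ when $s_{ij}(a) \in \Sigma$, that $(r, i) \downarrow a$ when $s_{ij}(a) = 0$, that $(r, i) \Rightarrow a$ when $a \in c_i$ with no $a \in b_{ij}$, and $(r, i) \downarrow a$ otherwise. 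The hypothesis $s_{ij}(a) \neq 1$ is precisely what makes this prescription well-defined. Finality (\cref{thm:Z is the final coalgebra}) then supplies a unique homomorphism $h : Y \to \Z$, and I would set $t_i := h(\sem 1, i)$.

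To see that $(t_i)_{i \le n}$ solves the system, I would first establish the identity $h(r, i) = r \cdot t_i$ for all $r \in Z$ by a second appeal to finality: the assignment $(r, i) \mapsto r \cdot h(\sem 1, i)$ is a homomorphism $Y \to \Z$ once the behavioral differential equation for $\cdot$ is checked against the definition of $Y$, hence it agrees with $h$. With this identity in hand, one reads off from the transitions of $Y$ that $t_i$ and $s_{i1} \cdot t_1 +_{b_{i1}} \cdots +_{b_{in}} c_i$ agree at every atom and have matching derivatives, so \cref{lem:tree concrete bisimilarity,lem:Z is simple} yield the desired equality in $Z$.

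For uniqueness, let $(t_i)$ and $(t_i')$ be any two solutions and consider
\[
R := \{(r \cdot t_i,\, r \cdot t_i') : r \in Z,\ i \le n\} \cup \{(r, r) : r \in Z\}.
\]
I would verify via \cref{lem:tree concrete bisimilarity} that $R$ is a bisimulation on $\coalg Z$. The only delicate case arises at an atom $a$ with $r(a) = 1$: both components agree there because $s_{ij}(a)$ alone determines the value (this is exactly where $s_{ij}(a) \neq 1$ enters), and when that value lies in $\Sigma$, the derivatives become $\partial_a s_{ij} \cdot t_j$ and $\partial_a s_{ij} \cdot t_j'$, again members of $R$. Since $(t_i, t_i') = (\sem 1 \cdot t_i, \sem 1 \cdot t_i') \in R$, \cref{lem:Z is simple} gives $t_i = t_i'$. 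I expect the existence half---and in particular the double appeal to finality needed to identify $h(r, i)$ with $r \cdot t_i$---to be the main obstacle; the role of $s_{ij}(a) \neq 1$ is easy to underplay but indispensable on both sides.
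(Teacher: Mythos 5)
Your proof is correct in outline but takes a genuinely different route from the paper. The paper equips \(Z\) with a metric \(d(s,t) = \max\{2^{-|w|} \mid w \in \dom(s)\cap\dom(t),\ s(w)\neq t(w)\}\), shows \((Z,d)\) is compact (hence complete), and observes that the map \(f(\mathbf t)_i = s_{i1}\cdot t_1 +_{b_{i1}}\cdots+_{b_{in}} c_i\) is a \(\tfrac12\)-contraction on \(Z^n\) --- the hypothesis \(s_{ij}(a)\neq 1\) enters exactly where you predicted, in showing \(d(s_{ij}\cdot t_j, s_{ij}\cdot t_j') \le \tfrac12 d(t_j,t_j')\) --- so existence and uniqueness fall out of the Banach fixed-point theorem in one stroke. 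Your argument replaces this analytic machinery with finality and bisimulation: the auxiliary coalgebra on \(Z\times\{1,\dots,n\}\) for existence, and the relation \(R = \{(r\cdot t_i, r\cdot t_i')\}\cup\Delta_Z\) for uniqueness. Both work; the paper's approach is shorter once the topological structure of \(Z\) is in place (which the paper needs anyway, in the appendix), while yours stays entirely within the coinductive toolkit of \cref{lem:tree concrete bisimilarity,lem:Z is simple,thm:Z is the final coalgebra} and arguably explains more transparently \emph{why} productivity makes the one-step behavior of the right-hand sides independent of the unknowns. Your uniqueness half is completely correct as written.

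There is one step in the existence half that does not go through as literally described: verifying directly that \(\varphi(r,i) = r\cdot h(\sem 1, i)\) is a homomorphism is circular. In the case \(r(a)=1\), \(a\in b_{ij}\), \(s_{ij}(a)=p\in\Sigma\), the homomorphism condition demands \(\partial_a\varphi(r,i) = \varphi(\partial_a s_{ij}, j)\); the left side computes to \(\partial_a t_i = h(\partial_a s_{ij}, j)\) (using that \(h\) is a homomorphism), so you are asked to verify \(h(\partial_a s_{ij}, j) = \partial_a s_{ij}\cdot t_j\), which is precisely the identity \(h(r,j)=r\cdot t_j\) you set out to prove. The repair is standard and small: instead of a second appeal to finality, show directly that \(\{(h(r,i),\, r\cdot t_i) \mid r\in Z,\ i\le n\}\cup\Delta_Z\) is a bisimulation on \(\coalg Z\). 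In the problematic case both derivatives equal \(h(\partial_a s_{ij}, j)\), so the pair of derivatives lands in the diagonal, and \cref{lem:Z is simple} then yields \(h(r,i)=r\cdot t_i\). With that substitution your argument is complete.
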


\subsection{Completeness with respect to \texorpdfstring{\(\equiv_0\)}{0-equivalence}}\label{sec:a_completeness_theorem_for_gkat-}

Next, we present a completeness theorem w.r.t.~\(\equiv_0\).
We have already seen that the behavior of a program takes the form of a tree, and that the programming constructs of \acro{GKAT} apply to trees in such a way that equivalence up to the axioms of \(\equiv_0\) is preserved (\cref{thm:Z satisfies GKAT^-}).
The completeness theorem in this section shows that up to \(\equiv_0\)-equivalence, \acro{GKAT} programs can be identified with the trees they denote.

\begin{restatable}[Completeness for \(\equiv_0\)]{theorem}{restatecompletenessforgkatminus}\label{cor:completeness for GKAT^-}
	Assume the uniqueness axiom for \(\equiv_0\) and let \mbox{\(e, f \in \Exp\).}
	If \(\sem e = \sem f\), then \(e \equiv_0 f\).
\end{restatable}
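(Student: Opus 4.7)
The approach is to reduce to the uniqueness axiom by extracting a Salomaa system of equations that is simultaneously solved by both $e$ and $f$.

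First, by \cref{prop:eta is a bialgebra} we have $\sem{-} =\ !_{\coalg E}$, so \cref{lem:final bisimilarity} applied to $\coalg E$ converts the hypothesis $\sem e = \sem f$ into the statement that $e$ and $f$ are bisimilar as states of the syntactic automaton $\coalg E$. It therefore suffices to show that bisimilarity in $\coalg E$ implies $\equiv_0$.

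Second, I would fix a bisimulation $R$ on $\coalg E$ containing $(e,f)$ and enumerate the pairs $(e_i,f_i)_{i\le n}$ reachable from $(e_0,f_0) := (e,f)$ by iteratively following transitions of $\coalg E$ on both components. Assuming this set is finite (this is the main obstacle; see below), the Fundamental Theorem of \GKAT, combined with the construction of a system of equations described immediately after the definition of left-affine systems, yields, for every $i\le n$,
\begin{equation*}
   e_i \equiv_0 e_{i1} \cdot e_1 +_{b_{i1}} \cdots +_{b_{in}} c_i,
   \qquad
   f_i \equiv_0 e_{i1} \cdot f_1 +_{b_{i1}} \cdots +_{b_{in}} c_i,
\end{equation*}
where the coefficients $e_{ij} := \bigplus_{e_i \trans{a|p_a} e_j} p_a$, guards $b_{ij}$, and acceptance tests $c_i$ are identical in both rows by bisimilarity of $(e_i,f_i)$. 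Each $e_{ij}$ is a guarded union of primitive actions, so $E(e_{ij}) \equiv 0$, making the system Salomaa. Both tuples $(e_i)_{i\le n}$ and $(f_i)_{i\le n}$ are then solutions of this same system in $\Exp/{\equiv_0}$, so by the uniqueness axiom they must agree component-wise; in particular, $e = e_0 \equiv_0 f_0 = f$.

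The main obstacle is precisely ensuring finiteness of the set of reachable pairs, as the uniqueness axiom applies to finite systems only. This amounts to a finiteness property of Brzozowski-style derivatives for \GKAT expressions modulo $\equiv_0$: from any fixed $e$, only finitely many distinct (up to $\equiv_0$) states are reachable in $\coalg E$. This is the \GKAT analogue of the classical ACI-finiteness of derivatives of regular expressions, and I would establish it by structural induction on $e$, observing that each syntactic constructor introduces only boundedly many fresh derivatives up to $\equiv_0$; the cases for composition and iteration are the delicate ones, as new contexts are accumulated and must be shown to collapse modulo $\equiv_0$. Once finiteness is in hand, packaging the transitions into a Salomaa system and checking productivity of its coefficients is routine.
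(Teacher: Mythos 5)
Your proposal is correct and takes essentially the same route as the paper: reduce \(\sem{e}=\sem{f}\) to bisimilarity of \(e\) and \(f\) in \(\coalg{E}\) via \cref{prop:eta is a bialgebra} and \cref{lem:final bisimilarity}, package the bisimulation into a Salomaa system admitting both tuples of derivatives as solutions, and invoke the uniqueness axiom --- this is exactly the paper's \cref{lem:basically completeness} (the paper also does some bookkeeping you elide, merging guarded summands when one derivative of \(e\) is related to several derivatives of \(f\)). The finiteness obstacle you single out is easier than you fear: the paper's \cref{lem:Exp is locally-finite} shows the set of syntactic derivatives of any \(e\) is \emph{literally} finite, bounded by a size function with no collapsing modulo \(\equiv_0\), because the \GKAT derivative rules keep contexts from accumulating (e.g.\ every derivative of \(e^{(b)}\) has the form \(e'\cdot e^{(b)}\) with \(e'\) a derivative of \(e\)).
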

\begin{proof}[Proof sketch.]
Since \(\sem{e} = \sem{f}\), \(e\) and \(f\) are bisimilar as expressions.
This bisimulation gives rise to a Salomaa system of equations, which can be shown to admit both the derivatives of \(e\) and \(f\) as solutions.
By the unique solutions axiom, it then follows that \(e \equiv_0 f\).
\end{proof}

% *** *** SECTION *** *** %
\subsection{Completeness with respect to \texorpdfstring{\(\equiv\)}{equivalence}}\label{sec:a_completeness_theorem_for_gkat}

Having found a semantics that is sound and complete w.r.t.\ \(\equiv_0\), we proceed to extend this result to find a semantics that is sound and complete w.r.t.\ \(\equiv\).
Recall that the only difference between these equivalences was S3, which equates programs that fail eventually with programs that fail immediately.
To coarsen our semantics, we need an operation on labelled trees that forces early termination in case an accepting state cannot be reached.

\begin{definition}
We say \(t \in Z\) is \define{dead} when for all \(w \in \dom(t)\) it holds that \(t(w) \neq 1\).
The \define{normalization} operator is defined coinductively, as follows:
{\small \begin{mathpar}
t^\wedge(a) =
	\begin{cases}
	0 & t(a) \in \Sigma \wedge \text{\(\partial_a t\) is dead}, \\
	t(a) & \text{otherwise}
	\end{cases} \and
\partial_a (t^\wedge) = {(\partial_a t)}^\wedge.
\end{mathpar}}%
\end{definition}

\begin{example}
	Normalizing the tree \(\sem{p +_b p \cdot 0}\) prunes the branch corresponding to \(\bar b\), since it has no accepting leaves.
	This yields the tree \(\sem{b \cdot p}\). %pictured above on the right.
\end{example}

We can compose the normalization operator with the semantics \(\sem{-}\) to obtain a new semantics \(\sem{-}^\wedge\), which replaces dead subtrees with early termination.
Composing normalization with the earlier semantics of \GKAT, we obtain the \define{normalized semantics} \(\sem{-}^\wedge\).
This semantics is sound w.r.t.~\(\equiv\).

\begin{restatable}{proposition}{restatehatWsatisfiesGKAT}%
\label{cor:hat W satisfies GKAT}
	If \(e \equiv f\), then \(\sem{e}^\wedge = \sem{f}^\wedge\).
\end{restatable}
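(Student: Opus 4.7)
The plan is to induct on the derivation of \(e \equiv f\). For the axioms generating \(\equiv_0\), the conclusion is immediate from \cref{thm:Z satisfies GKAT^-}: in such cases \(\sem{e} = \sem{f}\), so applying \((-)^\wedge\) yields \(\sem{e}^\wedge = \sem{f}^\wedge\). The equivalence-closure steps (reflexivity, symmetry, transitivity) are trivial. So only two cases require genuinely new work: the axiom S3 (i.e.\ \(e \cdot 0 \equiv 0\)), and the congruence-closure rules for \(\cdot\), \(+_b\), and \((-)^{(b)}\).

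For S3, I first claim that \(\sem{e \cdot 0} = \sem{e} \cdot \sem{0}\) is a \emph{dead} tree. Indeed, the set \(\{t \cdot \sem{0} : t \in Z\}\) is closed under derivatives and contains no tree with a value-\(1\) position in its domain: at every atom \(a\), either \(t(a) = 1\) (in which case the value is rewritten to \(\sem{0}(a) = 0\)), or \(t(a) \in \{0\} \cup \Sigma\) (in which case the value is preserved, and any derivative is again of the form \(\partial_a t \cdot \sem{0}\)). A short coinductive argument then shows that every dead tree normalizes to \(\sem 0\): at each atom \(a\), either \(t(a) = 0\) (so \(t^\wedge(a) = 0\)) or \(t(a) \in \Sigma\) with \(\partial_a t\) dead (so \(t^\wedge(a) = 0\) by definition). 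Hence \(\sem{e \cdot 0}^\wedge = \sem{0} = \sem{0}^\wedge\).

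The main obstacle is the congruence step. It suffices to establish the three intertwining identities
\[
  (t \cdot s)^\wedge = (t^\wedge \cdot s^\wedge)^\wedge,\quad (t +_b s)^\wedge = (t^\wedge +_b s^\wedge)^\wedge,\quad (t^{(b)})^\wedge = ((t^\wedge)^{(b)})^\wedge,
\]
from which the congruence cases follow: if \(\sem{e}^\wedge = \sem{f}^\wedge\), substituting into the appropriate identity forces e.g.\ \(\sem{e \cdot g}^\wedge = \sem{f \cdot g}^\wedge\), and symmetrically for the other operations. Each identity is proved by coinduction via \cref{lem:Z is simple}: take as the candidate bisimulation the set of pairs consisting of the two sides, parametrised by \(t, s \in Z\), and verify atomwise that acceptance/rejection values agree and that derivatives on both sides fall back into the candidate. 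I expect the identity for sequential composition to be the most delicate, since the \(1\)-positions of \(t\) get rewritten to subtrees of \(s\), and one must track how dead-ness propagates across this gluing; the verification boils down to a careful case split on \(t(a)\) and on whether \(\partial_a t\) (respectively \(\partial_a s\)) is dead, each case landing back in the candidate relation.
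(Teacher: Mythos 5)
Your treatment of the base cases, S3, and the congruence rules for $\cdot$, $+_b$ and $(-)^{(b)}$ matches the paper's strategy (the three intertwining identities you state are exactly the ones the paper isolates, and your dead-tree argument for S3 is equivalent to the paper's identity $(t \cdot 0)^\wedge = \sem{0}$). However, there is a genuine gap: you have omitted the case for the inference rule \ref{ax:fixpoint}. The congruence $\equiv$ is generated not only by unconditional axioms and congruence closure but also by W3, which has \emph{premises} $E(e) \equiv 0$ and $g \equiv e\cdot g +_b f$. You cannot fold this case into "the axioms generating $\equiv_0$, where $\sem{e}=\sem{f}$ follows from \cref{thm:Z satisfies GKAT^-}": the derivations of those premises may themselves use S3, so the induction hypothesis gives you only $\sem{E(e)}^\wedge = \sem{0}^\wedge$ and $\sem{g}^\wedge = \sem{e\cdot g +_b f}^\wedge$, \emph{not} the corresponding equalities of un-normalized trees, and hence not $g \equiv_0 e^{(b)}\cdot f$.

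To close this you need an additional fact about normalization that your three identities do not supply, namely a uniqueness-of-fixed-point statement at the level of normalized trees: if $t^\wedge = (r \cdot t +_b s)^\wedge$ and $r(a) \neq 1$ for all $a \in A$, then $t^\wedge = \bigl(r^{(b)} \cdot s\bigr)^\wedge$. (The side condition is extracted from $\sem{E(e)}^\wedge = \sem{0}$, which forces $\sem{e}$ to never accept.) This is precisely the final clause of the paper's \cref{lem:normalization-vs-operators}, and it is the only part of the W3 case that requires a new idea; without it the induction does not go through.
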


For the corresponding completeness property, we need a way of ``normalizing'' a given expression in \(\Exp\).
The following observation gives us a way to do this.

\begin{restatable}{lemma}{restateclosureundernormalization}%
\label{lem:closure under normalization}
\(\mathsf{W}\) is closed under normalization.
\end{restatable}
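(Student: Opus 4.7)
My plan is to proceed by structural induction on the derivation of $t \in \mathsf{W}$. The base case and the integration rule (rule 2) are straightforward: if $t = \sem{b} \in \mathsf{D}$, then $t$ has no $\Sigma$-labeled positions, so $t^\wedge = t$ already lies in $\mathsf{D}$; and if $t$ arises from rule 2 with premises $\partial_a t \in \mathsf{W}$ at each $a$ with $t(a) \in \Sigma$, the IH gives $(\partial_a t)^\wedge \in \mathsf{W}$ for each such $a$. Since $\partial_a(t^\wedge) = (\partial_a t)^\wedge$ whenever $t^\wedge(a) \in \Sigma$ (and this forces $t(a) \in \Sigma$), a fresh application of rule 2 to $t^\wedge$ places it in $\mathsf{W}$.

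The composition (rule 1) and continuation (rule 3) cases are the main obstacle. Naively one would hope for $(s \cdot r)^\wedge = s^\wedge \cdot r^\wedge$, but this fails in general: normalization of $s \cdot r$ can deaden $\Sigma$-positions of $s$ whose subtrees feed only into dead regions of $r$, a phenomenon invisible from $s^\wedge$ and $r^\wedge$ in isolation. My plan is to reduce both cases to rule 2 by examining the derivatives of $(s \star r)^\wedge$ for $\star \in \{\cdot,\rhd\}$. In the composition case these are either $(\partial_a r)^\wedge$ (when $s(a) = 1$), which lies in $\mathsf{W}$ by closure of $\mathsf{W}$ under derivatives combined with the IH applied to $r$, or $(\partial_a s \cdot r)^\wedge$ (when $s(a) \in \Sigma$); the continuation case is analogous using the behavioral differential equation for $\rhd$.

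The hardest step is the recursive form $(\partial_a s \cdot r)^\wedge$, whose shape replicates the statement we are proving, so naive structural induction does not terminate. I plan to address this by strengthening the hypothesis to a closure-style claim: the set $N := \{t \in Z : t^\wedge \in \mathsf{W}\}$ contains $\mathsf{D}$ and is closed under all three generating rules of $\mathsf{W}$. Closure under rule 2 proceeds as above, and closure under rules 1 and 3 follows by a coinductive argument that exploits closure of $\mathsf{W}$ under derivatives, so that every $\partial_u s$ and $\partial_u r$ remains in $\mathsf{W}$ and inherits $N$-membership at every depth, rendering the derivative analysis well-founded. An alternative, possibly slicker route is to invoke \cref{prop:existence} to write $t = \sem{e}$ and construct a finitary syntactic dead-pruning transform $e \mapsto e^\wedge \in \Exp$ with $\sem{e^\wedge} = \sem{e}^\wedge$, bypassing coinduction entirely and reducing the lemma to a structural induction on expressions.
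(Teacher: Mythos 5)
You correctly isolate the crux---normalization does not commute with \(\cdot\) or \(\rhd\), and the derivative \((\partial_a s \cdot r)^\wedge\) reproduces the goal---but the fix you propose does not close the gap. The set \(\coeq{W}\) is the \emph{smallest} set containing \(\coeq{D}\) and closed under the nesting rules, so membership requires a well-founded derivation; you cannot establish \((s \cdot r)^\wedge \in \coeq{W}\) by applying the second nesting rule ``at every depth,'' because the chain \((s\cdot r)^\wedge \to (\partial_a s \cdot r)^\wedge \to (\partial_{a b} s \cdot r)^\wedge \to \cdots\) need never bottom out (take \(s = \sem{p^{(1)}}\)). Knowing that every \(\partial_u s\) lies in \(\coeq{W}\) and in your set \(N\) does not help, since ``\(N\) is closed under \(\cdot\)'' is exactly the claim under proof; the argument is circular, and a coinductive proof principle is simply not available for an inductively defined predicate. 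The missing idea is a different strengthening: generalize \((-)^\wedge\) to pruning \(t \filter \coeq{P}\) by an \emph{arbitrary} coequation \(\coeq{P}\) and prove the exact commutation laws \((s \cdot t) \filter \coeq{P} = (s \filter \coeq{P}_t) \cdot (t \filter \coeq{P})\) with \(\coeq{P}_t = \{r \mid r \cdot t \in \coeq{P}\}\), and analogously for \(\rhd\). With the induction hypothesis quantified over all \(\coeq{P}\), the composition case instantiates it at \(\coeq{P}_r\) and \(\coeq{P}\) and concludes by the \emph{first} nesting rule, with no recursion through derivatives; the lemma is then the special case where \(\coeq{P}\) is the set of dead trees (see \cref{prop:pruning is chill} and \cref{lem:concatenation-vs-filter,lem:continuation-vs-filter}).

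Your alternative route is also not a repair as stated. In the paper the normalized expression \(e^\wedge\) is \emph{obtained from} this lemma (one needs \(\sem{e}^\wedge \in \coeq{W}\) before \cref{prop:existence} yields an expression denoting it), so using it here is circular unless you give an independent, compositional syntactic construction---and the same non-compositionality you identified for trees (e.g.\ \((e \cdot 0)^\wedge\) must be \(0\), while \(e^\wedge \cdot 0^\wedge\) is not \(\equiv_0\)-equivalent to \(0\)) shows that a plain structural induction on \(e\) will not produce one.
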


When \(e \in \Exp\), we have that \(\sem{e} \in \coeq W\).
Moreover, by the above, \(\sem{e}^\wedge \in \coeq W\), which means that there is an \(e' \in \Exp\) such that \(\sem{e'} = \sem{e}^\wedge\).
We write \(e^\wedge\) for this \define{normalized expression}.
As it turns out, we can derive the equivalence \(e^\wedge \equiv e\) from the uniqueness axiom for \(\equiv\).
This gives an alternative proof of the completeness result of~\cite{gkat} that highlights the role of coequational methods in reasoning about failure modes.

\begin{restatable}[{\cite{gkat}}]{corollary}{restatecompletenessforgkat}\label{cor:completeness for GKAT}
Assume the uniqueness axiom for \(\equiv\) and \(\equiv_0\).
If \(\sem{e}^\wedge = \sem{f}^\wedge\), then \(e \equiv f\).
\end{restatable}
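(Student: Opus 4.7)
The plan is to reduce the result to \cref{cor:completeness for GKAT^-} via the key claim that $e \equiv e^\wedge$ for every $e \in \Exp$. Granting this, from $\sem{e}^\wedge = \sem{f}^\wedge$ we obtain $\sem{e^\wedge} = \sem{f^\wedge}$, whence $e^\wedge \equiv_0 f^\wedge$ by \cref{cor:completeness for GKAT^-}, and chaining gives $e \equiv e^\wedge \equiv f^\wedge \equiv f$ since $\equiv_0$ is coarser than $\equiv$.

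Proving $e \equiv e^\wedge$ decomposes naturally into two uniqueness arguments, the first of which I see as the main obstacle. The auxiliary claim is that any $g \in \Exp$ whose tree $\sem g$ is dead satisfies $g \equiv 0$. To see this, take the finite sub-automaton of $\coalg E$ reachable from $g$ (finite by standard Brzozowski-style arguments): every reachable state is again dead, so the associated Salomaa system has every acceptance constant $c_i$ equal to $0$ and productive action coefficients. By the fundamental theorem of \GKAT the tuple of reachable states is one solution modulo $\equiv$, while the all-zero tuple is another, since each equation reduces to $0 \equiv 0$ via S3 (the summands $e_{ij} \cdot 0$ collapse to $0$) and U1. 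Uniqueness for $\equiv$ then forces $g \equiv 0$. This is the single place where S3 does work that cannot be replicated using $\equiv_0$ alone.

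With dead-is-zero in hand, $e \equiv e^\wedge$ follows from a second invocation of uniqueness. Let $\{e_1, \dots, e_n\}$ be the states reachable from $e$ in $\coalg E$, let $L \subseteq \{1, \dots, n\}$ index the live states, and use \cref{lem:closure under normalization,prop:existence} to pick syntactic witnesses $e_i^\wedge$ with $\sem{e_i^\wedge} = \sem{e_i}^\wedge$. Form the Salomaa system $S'$ obtained from the system for $\coalg E$ restricted to $\{e_i\}$ by dropping every summand whose target is dead. Both $(e_i)_{i \in L}$ and $(e_i^\wedge)_{i \in L}$ are solutions of $S'$ modulo $\equiv$: the former by the fundamental theorem together with dead-is-zero and S3; the latter because each Brzozowski derivative of $e_i^\wedge$ is bisimilar to the corresponding $e_j^\wedge$ (by a direct computation with $\sem{-}^\wedge$), hence $\equiv_0$-equivalent, hence $\equiv$-equivalent via \cref{cor:completeness for GKAT^-}. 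Uniqueness for $\equiv$ then yields $e_i \equiv e_i^\wedge$ for each $i \in L$; in the remaining case where $e$ itself is dead, the auxiliary claim applied to $e$ and to $e^\wedge$ (whose tree is also dead) gives $e \equiv 0 \equiv e^\wedge$ directly. Everything outside the dead-is-zero step is routine bookkeeping with the fundamental theorem and the uniqueness axiom.
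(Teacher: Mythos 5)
Your argument is correct and, in its overall architecture, matches the paper's: reduce the corollary to the key equivalence \(e \equiv e^\wedge\) (this is \cref{lem:gkat doesnt see hats} in the appendix), obtain that equivalence by exhibiting two solutions---the reachable derivatives of \(e\) and their normalizations---to the Salomaa system of the reachable subautomaton with dead targets pruned, and invoke the uniqueness axiom for \(\equiv\); the closing chain \(e \equiv e^\wedge \equiv_0 f^\wedge \equiv f\) is exactly the paper's proof sketch. The one place you genuinely diverge is the auxiliary dead-is-zero claim (\cref{lem:partial-completeness}): the paper shows that \(\{(t \cdot 0, t) \mid t \text{ dead}\}\) is a bisimulation, so \(\sem{e \cdot 0} = \sem{e}\), whence \(e \cdot 0 \equiv_0 e\) by \cref{cor:completeness for GKAT^-} and \(e \equiv 0\) by S3; you instead compare the reachable-states solution with the all-zero solution of the everywhere-rejecting Salomaa system and appeal to uniqueness for \(\equiv\) a second time. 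Both are sound; the paper's version is slightly more economical in that it avoids the uniqueness axiom for \(\equiv\) at that step, whereas yours costs nothing extra here since that axiom is already a hypothesis of the corollary. Two small nits: the inclusion you need is \(\equiv_0\ \subseteq\ \equiv\), i.e.\ \(\equiv\) is the \emph{coarser} relation (you state this backwards, though you use it in the correct direction); and in verifying that \((e_i^\wedge)_{i \in L}\) solves your system \(S'\), the clean statement is that the trees \(\sem{e_i^\wedge}\) satisfy the equations in \(Z\), after which \cref{cor:completeness for GKAT^-} transfers each equation to \(\Exp/{\equiv_0}\)---the syntactic derivatives \(\partial_a(e_i^\wedge)\) need not literally be the chosen witnesses \(e_j^\wedge\), only provably equivalent to them.
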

\begin{proof}[Proof sketch]
If \(\sem{e}^\wedge = \sem{f}^\wedge\), then \(\sem{e^\wedge} = \sem{f^\wedge}\).
By completeness of \(\equiv_0\) w.r.t.\ \(\sem{-}\), we can then derive that \(e \equiv e^\wedge \equiv_0 f^\wedge \equiv f\), and since \(\equiv_0\) is contained in \(\equiv\), also \(e \equiv f\).
\end{proof}

By normalizing the trees in \(\coeq W\), we obtain the coequation \(\coeq W^\wedge = \{t^\wedge \mid t \in \coeq W\}\).
This coequation precisely characterizes \acro{GKAT} programs with forced early termination.
In particular, since \(\coeq{W}^\wedge \subseteq \coeq{W}\), neither state in \cref{fig:a non gkat} has a semantics described by \(\sem{e}^\wedge\) for some \(e \in \Exp\).

\section{Related work}

This paper builds on~\cite{gkat}, where \acro{GKAT} was proposed together with a language semantics based on guarded strings~\cite{Kaplan69} and an axiomatization closely related to Salomaa's axiomatization
of regular expressions based on unique fixpoints~\cite{salomaa1966}.
Note that the language of \emph{propositional while programs} from~\cite{kozentseng2008,K08b} is closely related to \acro{GKAT} in terms of semantics, although the compact syntax and axiomatization were only introduced in~\cite{gkat}.

Some \GKAT-automata have behavior that does not correspond to any \acro{GKAT} expression, such as the example in~\cite{kozentseng2008}.
The upshot is that the B\"{o}hm-Jacopini theorem~\cite{boehm-jacopini-1966,harel-1980}, which states that every deterministic flowchart corresponds to a \textsc{while} program, does not hold propositionally, i.e., when we abstract from the meaning of individual actions and tests~\cite{kozentseng2008}.

In contrast with~\cite{gkat,kozentseng2008}, our work provides a precise characterization of the behaviors denoted by \GKAT\ programs using trees.
In other words, we characterize the image of the semantic map inside the space of all behaviors.
This explicit characterization was essential for proving completeness of the full theory of \GKAT, including the early termination axiom.
\acro{KAT} equivalence without early termination has been investigated by Mamouras~\cite{mamouras-2017}.

Brzozowski derivatives~\cite{Brzozowski1964} appear in the completeness proof of \acro{KA}~\cite{kozen-2001,kozen-2017,jacobs-2006}.
We were more directly inspired by Silva's coalgebraic analogues of Brzozowski derivatives used in the context of completeness~\cite{silva2010kleene}.
Rutten~\cite{RUTTEN20031} and Pavlovic and Escardo~\cite{PavlovicEscardo98} document the connection between the differential calculus of analysis and coalgebraic derivatives.

Coequations have appeared in the coalgebra literature in a variety of contexts, e.g.~\cite{adamekporst2003,adamek05,10.1016/j.ic.2015.08.001,DBLP:conf/mpc/SalamancaBBCR15,DBLP:conf/cmcs/SalamancaBR16}, and notably in the proof of generalized Eilenberg theorems~\cite{DBLP:conf/mfcs/UrbatACM17,DBLP:journals/tocl/AdamekMMU19}. The use of coequations in completeness proofs is, as far as we are aware, new.

% *** *** SECTION *** *** %
\section{Discussion}
\acro{GKAT} was introduced in~\cite{kozentseng2008} under the name \emph{propositional while programs} and extensively studied in~\cite{gkat} as an algebraic framework to reason about simple imperative programs.
We presented a new perspective on the theory of \acro{GKAT}, which allowed us to isolate a fragment of the original axiomatization that captures the purely behavioral properties of \acro{GKAT} programs.
We solved an open problem from~\cite{gkat}, providing a proof that well-nested automata are not closed under homomorphisms, thereby making it unlikely that these automata can be used in a completeness proof that does not rely on uniqueness axioms.
Finally, we proved completeness for the full theory, respecting the early-termination property, in which programs that fail immediately are equated with programs that fail eventually.

There are several directions for future work that are worth investigating.
First, it was conjectured in~\cite{gkat} that the uniqueness axiom follows from
%its special case W3 in the presence of
the other axioms of \acro{GKAT}.
This remains open, but at the time of writing we think this conjecture might be false.
Secondly, the technique we use, based on coequations, can serve as basis for a general approach to completeness proofs.
We plan to investigate other difficult problems where our technique might apply.
Of particular interest is an open problem posed by Milner in~\cite{milner1984}, which consists of showing that a certain set of axioms are complete w.r.t.~bisimulation equivalence for regular expressions.
Recently, Grabmeyer and Fokkink~\cite{grabmeyerfokkink2020} provided a partial solution.
We believe our technique can simplify their proofs and shed further light on Milner's problem.

We have chosen to adopt the axiomatization from~\cite{gkat}, which can be described as a Salomaa-style axiomatization---the loop is a unique fixpoint satisfying a side condition on termination.
We would like to generalize the results of the present paper to an axiomatization in which the loop is a least fixpoint w.r.t.~an order.
The challenge is that there is no natural order in the language because the $+$ of Kleene Algebra has been replaced by $+_b$.
However, we hope to devise an order $\leq$ directly on expressions and extend the characterizations that we have to the new setting.
This new axiomatization would have the advantage of being algebraic (that is, sound under arbitrary substitution), which makes it more suitable for verification purposes as the number of models of the language would increase.

%%
%% Bibliography
%%

%% Please use bibtex,

\bibliography{citations}

\appendix
%!TEX root = ./icalp_submission_main.tex

% \renewcommand{\thesectiondis}[2]{\Alph{section}:}

% *** SECTION *** %
\section{Detailed proofs for \cref{sec:the_final_gkat-automaton}: The final \acro{GKAT}-automaton}

\restatetreeconcretebisimilarity*
\begin{proof}
	If \(R\) is a bisimulation containing \((t,s)\), then for any \(a \in A\),
	\[
		s(a)
			= \begin{cases}
				1 &\text{if \(s \Rightarrow a\)},\\
				p &\text{if \(s \trans{a|p} \partial_a s\)},\\
				0 &\text{if \(s \downarrow a\)}.
			\end{cases}
			= \begin{cases}
				1 &\text{if \(t \Rightarrow a\)},\\
				p &\text{if \(t \trans{a|p} \partial_a t\)},\\
				0 &\text{if \(t \downarrow a\)}.
			\end{cases}
			= t(a)
	\]
	Furthermore, if \(\partial_a s\) is defined, then \(s(a) \in \Sigma\) by definition of \(Z\).
	Since \(t(a) = s(a)\), it follows that \(\partial_a t\) must also be defined; similarly, if \(\partial_a t\) is defined, so is \(\partial_a s\).
	Now, if \(\partial_a s\) and \(\partial_a t\) are defined, then \(s \trans{a|s(a)} \partial_a s\) and \(t \trans{a|t(a)} \partial_a t\); hence, \((\partial_a s, \partial_a t) \in R\) since \(R\) is a bisimulation.

	Conversely, suppose every pair \((t,s) \in R\) satisfies (1) and (2) above.
	By (1), \(s \Rightarrow a \iff t \Rightarrow a\) as well as \(s \downarrow a \iff t \downarrow a\).
	Furthermore, since \(t \trans{a|p} \partial_a t\) if and only if \(t(a) = p\), by (1) we find that \(s \trans{a|p} \partial_a s\) if and only if \(t \trans{a|p} \partial_a t\).
	By (2), \((\partial_a s, \partial_a t) \in R\) and we are done.
\end{proof}

\restatezissimple*
\begin{proof}
	Let \(R\) be a bisimulation.
	We claim that, for all \(w \in A^+\) and \((s,t) \in R\), we have (a) \(w \in \dom(s)\) if and only if \(w \in \dom(t)\); and (b) if \(w \in \dom(t) \cap \dom(s)\) then \(s(w) = t(w)\).
	
	The proof proceeds by induction on \(w\).
	In the base, \(w = a\) with \(a \in A\), in which case the first claim holds by definition of \(Z\), and the second claim follows from \(s\) and \(t\) being bisimilar.
	
	For the inductive step, let \(w = aw'\) for \(a \in A\) and assume the claim holds for \(w'\).
	If \(t(a) = s(a) \in 2\), then \(w \not\in \dom(s)\) and \(w \not\in \dom(t)\) by definition of \(Z\), so both claims hold immediately.
	Otherwise, if \(t(a) = s(a) \in \Sigma\), then both \(\partial_a s\) and \(\partial_a t\) are defined, and \((\partial_a s, \partial_a t) \in R\).
	For the first claim, we can derive by induction that
	\[
		w \in \dom(s)
		\iff w' \in \dom(\partial_a s)
		\iff w' \in \dom(\partial_a t)
		\iff w \in \dom(t)
	\]
	For the second claim, we also derive by induction that
	\(
		s(w) = \partial_a s(w') = \partial_a t(w') = t(w).
		\qedhere
	\)
\end{proof}

For the sake of the next proof, it is helpful to note that the \acro{GKAT}-automaton homomophism conditions can be rephrased.
Consider a function \(\varphi : X \to Y\) between the state spaces of two \acro{GKAT}-automata \(\X\) and \(\Y\).
Then \(\varphi\) is a \acro{GKAT}-automaton homomorphism if and only if\[
	\delta^{\Y}(\varphi(x), a) = \begin{cases}
		(p, \varphi(x')) &\text{if \(\delta^{\X}(x, a) = (p, x') \in \Sigma \times X\)}.\\
		\delta^{\X}(x, a) &\text{otherwise}.
	\end{cases}
\]
In particular, if \(\varphi\) is a \acro{GKAT}-automaton homomorphism, then if either \(\partial_a \varphi(x)\) or \(\varphi(\partial_a x)\) is defined, both are defined and \(\partial_a \varphi(x) = \varphi(\partial_a x)\).

\restateZisthefinalcoalgebra*
\begin{proof}
	Let \(\X = (X, \delta)\).
	First, we inductively extend \(\delta\) to \(\delta^* \colon X \times A^+ \rightharpoonup 2 + \Sigma\), as follows:
	\[
		\delta^*(x, w) =
			\begin{cases}
			\delta(x, a) & w = a \in A \wedge \delta(x, a) \in 2 \\
			p & w = a \in A \wedge \delta(x, a) = (p, x') \\
			\delta^*(x', w') & w = aw' \wedge \delta(x, a) = (p, x') \\
			\text{undefined} & \text{otherwise}
			\end{cases}
	\]
	The desired \acro{GKAT}-automaton homomorphism is then
	\[
		!_{\X}(x) := \lambda w.\delta^*(x, w).
	\]
	A straightforward argument shows that \(!_\X\) is well-defined, that is, \(!_\X(x)\) is a tree for each \(x \in X\).
	To see the homomorphism condition, first observe that if \(\delta(x, a) \in 2\), then
	\[
		!_{\X}(x)(a) = (\lambda w.\delta^*(x, w))(a) = \delta^*(x,a) = \delta(x, a).
	\]
	Furthermore, if \(x \trans{a|p} x'\), then \(!_{\X}(x)(a) = \delta^*(x, a) = p\) and 
	\[
	\partial_a !_{\X}(x) = \partial_a (\lambda w.\delta^*(x, w)) = \lambda w.\delta^*(x, aw) = \lambda w.\delta^*(\delta(x, a), w) = \lambda w.\delta^*(x', w) = {!_{\X}(x')}.
	\]

	To see uniqueness, let \(\varphi : \X \to \coalg Z\) be any \acro{GKAT}-automaton homomorphism.
	We use \cref{lem:tree concrete bisimilarity} to argue that the relation
	\[
		\{(!_\V(x), \varphi(x)) \mid x \in V\}
	\]
	is a bisimulation.
	First and foremost, 
	\begin{align*}
		\varphi(x)(a) &= \begin{cases}
			0 &\text{if \(x \downarrow a\)},\\
			1 &\text{if \(x \Rightarrow a\)},\\
			p &\text{if \(x \trans{a|p} \partial_a x\)}
		\end{cases}
		= \begin{cases}
			0 &\text{if \(\delta(x,a) = 0\)},\\
			1 &\text{if \(\delta(x,a) = 1\)},\\
			p &\text{if \(\delta^*(x,a) = p\)}
		\end{cases}
		=\ !_\X(x)(a).
	\end{align*}
	For the step equations, observe that\[
		!_\X(x)(a) \in \Sigma \iff (\exists p \in \Sigma)\ x \trans{a|p} \partial_a x \iff \varphi(x)(a) \in \Sigma,
	\]
	as well as that \(!_{\X}(\partial_a x) = \partial_a (!_{\X}(x))\) and \(\varphi(\partial_a x) = \partial_a \varphi(x)\).
	Hence,\[
		(\partial_a (!_{\X}(x)), \partial_a \varphi(x)) = (!_{\X}(\partial_a x), \varphi(\partial_a x)) \in R.
	\]
	By \cref{lem:tree concrete bisimilarity}, \(R\) is a bisimulation.
\end{proof}

\restatefinalbisimilarity*

\begin{proof}
	To see sufficiency, note that the graph of a \GKAT-automaton homomorphism is a bisimulation by definition.
	It is easily shown that the converse of a bisimulation is a bisimulation, as is the (relational) composition of two bisimulations.
	Composing the graph of \(!_{\X}\) with its converse puts the pair \((x, x')\) in a bisimulation on \(\X\).

	For necessity, let \(\bisim\) be the set of pairs of bisimilar states of \(\X\), and note that it forms an equivalence relation.
	Observe that the quotient map \(q: X \to X/\bisim\) is a \GKAT-automaton homomorphism for a unique \GKAT-automaton structure \(\X/\bisim\) on \(X/\bisim\).
	Because the composition of \GKAT-automaton homomorphisms is again a \GKAT-automaton homomorphism, we have two \GKAT-automaton homomorphisms from \(\X\) to \(\Z\): the map \(!_\X\) as well as \({!_{\X/\bisim}} \circ q\).
	By \cref{thm:Z is the final coalgebra}, these are the same; since \(q(x) = q(x')\), we conclude that \(!_{\X}(x) = {!_{\X}(x')}\).
\end{proof}

%  subsection  %

% *** SECTION *** %
\section{Detailed proofs for \cref{sec:i_g_as_an_algebra}: Trees form an algebra}

\restateetaisabialgebra*

\begin{proof}
	It suffices to show that \(\sem{-}\) is a \acro{GKAT}-automaton homomorphism.
	This amounts to show that the following rules hold:
	
    \vspace{-1em}\begin{mathpar}
		\infer{e \downarrow a}{\sem{e}(a) = 0}
		\and
		\infer{e \Rightarrow a}{\sem{e}(a) = 1}
		\and
		\infer{e \trans{a|p} e'}{\sem e \trans{a|p} \sem{e'}}
	\end{mathpar}
	We do this by induction on the transition rules for \(e\).
	In the base, there are two cases.
	\begin{itemize}
		\item By definition, \(\sem{b}(a) = 0\) if and only if \(b \downarrow a\), and \(\sem{b}(a) = 1\) if and only if \(b \Rightarrow a\).
		Since \(b\) does not admit any transitions in \(\coalg{E}\), the last implication holds vacuously.
		
		\item We have that \(p \trans{a|p} 1\) for any \(a \in A\); by definition of \(\sem{p}\), we have \(\sem p (a) = p\) and \(\partial_a \sem p = 1\), and hence \(\sem p \trans{a|p} \sem 1\).
		Furthermore, \(p\) does not terminate (succesfully or unsuccesfully) in \(\coalg{E}\), so the first two rules hold vacuously.
	\end{itemize}

	\noindent
	In the inductive step, suppose the three inferences above hold for \(e\) and \(f\), and \(b \subseteq A\).
	\begin{itemize}
		\item
		If \(e +_b f \downarrow a\), then either \(a \in b\) and \(e \downarrow a\), or \(a \in \bar b\) and \(f \downarrow a\).
		In the first case, \(\sem{e +_b f}(a) = \sem{e}(a) = 0\), and in the second \(\sem{e +_bf}(a) = \sem{f}(a) = 0\).
		
		Furthermore, if \(e +_b f \Rightarrow a\), then either \(a \in b\) and \(e \Rightarrow a\), or \(a \in \bar b\) and \(f \Rightarrow a\).
		In the first case, \(\sem{e +_bf}(a) = \sem{e}(a) = 1\), and in the second \(\sem{e +_b f}(a) = \sem f(a) = 1\).

		Finally, if \(e +_b f \trans{a|p} g\), then either \(a \in b\) and \(e \trans{a|p} g\), or \(a \in \bar b\) and \(f \trans{a|p} g\). 
		In the first case, \(\sem{e+_bf}(a) = \sem{e}(a) = p\) and \(\partial_a \sem{e +_bf} = \partial_a(\sem e +_b \sem f) = \partial_a \sem e = \sem g\), and in the second, \(\sem{e +_b f}(a) = \sem{f}(a) = p\) and \(\partial_a \sem{e +_b f} = \partial_a \sem{f} = \sem g\).

		\item If \(e \cdot f \downarrow a\), then either \(e \downarrow a\), or \(e \Rightarrow a\) and \(f \downarrow a\).
		In the first case, \(\sem{e}(a) = 0\) and \(\sem{e \cdot f}(a) = \sem{e}(a) = 0\),
		and in the second, \(\sem{e \cdot f}(a) = \sem{e} \cdot \sem{f}(a) = \sem{f}(a) = 0\).
		
		Furthermore, if \(e \cdot f \Rightarrow a\), then \(e \Rightarrow a\) and \(f \Rightarrow a\).
		Thus, \(\sem{e \cdot f}(a) = \sem{f}(a) = 1\). 

		Finally, if \(e \cdot f \trans{a|p} g\), then either \(e \Rightarrow a\) and \(f \trans{a|p} g\), or \(e \trans{a|p} e'\) and \(g = e' \cdot f\).
		In the first case, \(\sem{e \cdot f}(a) = \sem{f}(a) = p\) and \[
			\partial_a \sem{e \cdot f} = \partial_a(\sem e \cdot \sem f) = \partial_a \sem f = \sem g,
		\]
		meaning \(\sem{e \cdot f} \trans{a|p} \sem{g}\), and in the second \(\sem{e \cdot f}(a) = \sem{e}(a) = p\), and \[
			\partial_a \sem{e \cdot f} = \partial_a \sem e \cdot \sem f = \sem {e'} \cdot \sem f = \sem g,
		\]
		thus showing that \(\sem{e \cdot f} \trans{a|p} \sem{g}\) again.

		\item If \(e^{(b)} \downarrow a\), then \(a \in b\) and either \(e \downarrow a\) or \(e \Rightarrow a\).
		In either case, \(\sem{e^{(b)}}(a) = \sem{e}^{(b)}(a) = 0\).
		
		Furthermore, if \(e^{(b)} \Rightarrow a\), then \(a \in \bar b\) and \(\sem{e^{(b)}} = \sem{e}^{(b)}(a) = 1\) by definition.

		Finally, if \(e^{(b)} \trans{a|p} g\), then \(a \in b\), \(e \trans{a|p} e'\), and \(g = e' \cdot e^{(b)}\).
		This means that \(\sem{e^{(b)}}(a) = \sem{e}^{(b)}(a) = \sem{e}(a) = p\) and\[
			\partial_a \left[\hspace{-0.25em}\left[{e}^{(b)}\right]\hspace{-0.25em}\right] = \partial_a \sem{e}^{(b)} = \partial_a \sem{e} \cdot \sem{e}^{(b)} = \sem{e'} \cdot \left[\hspace{-0.25em}\left[{e}^{(b)}\right]\hspace{-0.25em}\right] = \sem{g}.
			\qedhere
		\]
	\end{itemize}
\end{proof}

\restateZsatisfiesGKATminus*
\begin{proof}
	We should show that if \(e, f \in \Exp\) with \(e \equiv_0 f\), then \(\sem{e} = \sem{f}\).
	By \cref{prop:eta is a bialgebra} and \cref{lem:final bisimilarity}, it suffices to show that \(\equiv_0\) is a bisimulation on \(\coalg{E}\).
	We do this by induction on \(\equiv_0\).
	The proof is somewhat long, but completely straightforward in almost all cases.
	
	In the base, we have one case to consider for each of the axioms.
	For the guarded union axioms U1 through U5, reflexivity of \(\equiv_0\) means that it suffices to show that if \(e \equiv_0 f\) as a consequence of one of these axioms, we have for all \(a \in A\) that \(e \downarrow a\) if and only if \(f \downarrow a\), as well as \(e \Rightarrow a\) if and only if \(f \Rightarrow a\), and \(e \trans{a|p} g\) if and only if \(f \trans{a|p} g\). 
	\begin{enumerate}%[align=left]
		\item[(U1)]
			If \(e = f +_b f\), for some \(b \in \operatorname{BExp}\), suppose \(a \in b\); then \(e \downarrow a\) if and only if \(f \downarrow a\) by definition of the transition structure on expressions; similarly, \(e \Rightarrow a\) if and only if \(f \Rightarrow a\), and \(e \trans{a|p} g\) if and only if \(f \trans{a|p} g\).
			The case for \(a \not\in b\) is argued similarly.

		\item[(U2)]
			If \(e = g_0 +_b g_1\) and \(f = g_1 +_{\overline{b}} g_0\) for some \(g_0, g_1 \in \Exp\) and \(b \in \operatorname{BExp}\), then suppose \(a \in A\).
			We then have \(e \downarrow a\) if and only if \(g_0 \downarrow a\) if and only if \(f \downarrow a\), by definition of \(\coalg{E}\).
			By a similar argument, \(e \Rightarrow a\) if and only if \(f \Rightarrow a\) and \(e \trans{a|p} h\) if and only if \(f \trans{a|p} h\).
			The case where \(a \not\in b\) is argued similarly.
		\item[(U3)]
			If \(e = (g_0 +_b g_1) +_c g_2\) and \(f = g_0 +_{b \wedge c} (g_1 +_c g_2)\) where \(g_0, g_1, g_2 \in \Exp\) and \(b,c \in \operatorname{BExp}\), then there are three cases, based on \(a \in A\).

			\begin{itemize}
				\item
				First, if \(a \in c \wedge b\), then \(e \downarrow a\) precisely when \(g_0 \downarrow a\), which holds if and only if \(f \downarrow a\).
				By a similar argument \(e \Rightarrow a\) if and only if \(g_0 \Rightarrow a\) if and only if \(f \Rightarrow a\).
				Likewise, \(e \trans{a|p} h\) if and only if \(g_0 \trans{a|p} h\) if and only if \(f \trans{a|p} h\).

				\item
				Next, if \(a \in c \wedge \overline{b}\), note that the latter is equivalent to \(a \in \overline{b \wedge c} \wedge c\).
				A similar argument then shows the same properties as in the previous case, except with \(g_1\).

				\item
				Finally, if \(a \in \overline{c}\) then note that in particular \(a \not\in b \wedge c\).
				We again recover the same properties as in the two previous cases.
			\end{itemize}
		\item[(U4)] 
			If \(e = g_0 +_b g_1\) and \(f = b \cdot g_0 +_b g_1\) for some \(g_0,g_1\in\Exp\) and \(b \in \operatorname{BExp}\), then suppose \(a \in b\).
			In that case, \(e \downarrow a\) if and only if \(g_0 \downarrow a\), which holds precisely when \(b \cdot g_0 \downarrow a\), which is true if and only if \(f \downarrow a\).
			By a similar argument \(e \Rightarrow a\) if and only if \(f \Rightarrow a\) and \(e \trans{a|p} h\) if and only if \(f \trans{a|p} h\).
			The case where \(a \not\in b\) is covered by a similar argument.

		\item[(U5)]
			If \(e = (g_0 +_b g_1) \cdot g_2\) and \(f = g_0 \cdot g_2 +_b g_1 \cdot g_2\) for some \(g_0,g_1,g_2\in\Exp\) and \(b \in \operatorname{BExp}\), first suppose \(a \in b\).
			We can then derive as follows:
			\begin{align*}
			e \downarrow a
				&\iff g_0 +_b g_1 \downarrow a \vee [g_0 +_b g_1 \Rightarrow a \wedge g_2 \downarrow a] \\
				&\iff g_0 +_b \downarrow a \vee [g_0 \Rightarrow a \wedge g_2 \downarrow a] \\
				&\iff g_0 \cdot g_2 \downarrow a \\
				&\iff f \downarrow a
			\end{align*}
			Similarly, we can derive
			\begin{align*}
			e \Rightarrow a
				&\iff g_0 +_b g_1 \Rightarrow a \wedge g_2 \Rightarrow a \\
				&\iff g_0 \Rightarrow a \wedge g_2 \Rightarrow a \\
				&\iff f \Rightarrow a
			\end{align*}
			Finally, we have that
			\begin{align*}
			e \trans{a|p} h
				&\iff g_0 +_b g_1 \trans{a|p} h \vee [g_0 +_b g_1 \Rightarrow a \wedge g_2 \trans{a|p} h] \\
				&\iff g_0 \trans{a|p} h \vee [g_0 \Rightarrow a \wedge g_2 \trans{a|p} h] \\
				&\iff f \trans{a|p} h
			\end{align*}
			The case where \(a \not\in b\) is argued similarly.
	\end{enumerate}

	\noindent
	For the sequential composition axioms, we show the properties required of bisimulation.
	\begin{enumerate}%[align=left]
		\item[(S1)]
		If \(e = g_0 \cdot (g_1 \cdot g_2)\) and \(f = (g_0 \cdot g_1) \cdot g_2\), then we derive
		\begin{align*}
		e \downarrow a
			&\iff g_0 \downarrow a \vee [g_0 \Rightarrow a \wedge g_1 \cdot g_2 \downarrow a]\\
			&\iff g_0 \downarrow a \vee [g_0 \Rightarrow a \wedge [g_1 \downarrow a \vee [g_1 \Rightarrow a \wedge g_2 \downarrow a]]] \\
			&\iff g_0 \cdot g_1 \downarrow a \vee [g_0 \cdot g_1 \Rightarrow a \wedge g_2 \downarrow a] \\
			&\iff f \downarrow a
		\end{align*}

		Similarly, for succesful termination we can derive
		\begin{align*}
		e \Rightarrow a
			&\iff g_0 \Rightarrow a \wedge g_1 \cdot g_2 \Rightarrow a \\
			&\iff g_0 \Rightarrow a \wedge [g_1 \Rightarrow a \wedge g_2 \Rightarrow a] \\
			&\iff [g_0 \Rightarrow a \wedge g_1 \Rightarrow a] \wedge g_2 \Rightarrow a \\
			&\iff g_0 \cdot g_1 \Rightarrow a \wedge g_2 \Rightarrow a \\
			&\iff f \Rightarrow a
		\end{align*}

		Finally, if \(e \trans{a|p} h\), then there are two cases to consider.
		\begin{itemize}
			\item
			If \(h = h' \cdot (g_1 \cdot g_2)\) with \(g_0 \trans{a|p} h'\), then \(g_0 \cdot g_1 \trans{a|p} h' \cdot g_1\), and hence \(f \trans{a|p} (h' \cdot g_1) \cdot g_2\).
			Since \(h' \cdot (g_1 \cdot g_2) \equiv_0 (h' \cdot g_1) \cdot g_2\), we are done.
			\item
			If \(g_0 \Rightarrow a\) and \(g_1 \cdot g_2 \trans{a|p} h\), then it suffices to show that \(f \trans{a|p} h\).
			First, if \(h = h' \cdot g_2\) and \(g_1 \trans{a|p} h'\), then \(g_0 \cdot g_1 \trans{a|p} h'\), and hence \(f \trans{a|p} h' \cdot g_2 = h\).
			Second, if \(g_1 \Rightarrow a\) and \(g_2 \trans{a|p} h'\), then \(g_0 \cdot g_2 \Rightarrow a\), and hence \(f \trans{a|p} h\).
		\end{itemize}

		\item[(S2)]
		If \(e = 0 \cdot f\), then a straightforward argument shows that \(e \downarrow a\) for all \(a \in A\); since \(0 \downarrow a\) for all \(a \in A\), this completes the proof.
		
		\item[(S4)]
		If \(e = 1 \cdot f\), then a straightforward argument shows that \(e \downarrow a\) if and only if \(f \downarrow a\), as well as \(e \Rightarrow a\) if and only if \(f \Rightarrow a\), and \(e \trans{a|p} h\) if and only if \(f \trans{a|p} h\).
		As with the cases for the guarded union axioms, this suffices.

		\item[(S5)]
		If \(e = f \cdot 1\), then another straightforward argument shows that \(e \Downarrow a\) if and only if \(f \downarrow a\), as well as \(e \Rightarrow a\) if and only if \(f \Rightarrow a\).
		Furthermore, if \(e \trans{a|p} h\), then \(h = h' \cdot 1\) with \(f \trans{a|p} h'\).
		Since \(h' \cdot 1 \gequiv_0 h'\), this completes the proof for this case.
	\end{enumerate}

	\noindent
	The final cases to consider in the base are the first two loop axioms.
	\begin{enumerate}%[align=left]
		\item[(W1)]
		If \(e = g \cdot g^{(b)} +_b 1\) and \(f = g^{(b)}\) with \(g \in \Exp\) and \(b \in \operatorname{BExp}\), then we derive
		\[
			e \downarrow a
				\iff a \in b \wedge [g \downarrow a \vee [g \Rightarrow a \wedge g^{(b)} \downarrow a]]
				\iff f \downarrow a
		\]
		As far as succesful termination is concerned, we can derive
		\[
			e \Rightarrow a
				\iff [a \in b \wedge g \Rightarrow a \wedge g^{(b)} \Rightarrow a] \vee a \not\in b
				\iff f \Rightarrow a
		\]
		Finally, if \(e \trans{a|p} h\), then \(a \in b\) and \(h = g' \cdot g^{(b)}\) with \(g \trans{a|p} g'\).
		But in that case \(f \trans{a|p} h\) as well.
		Since \(\equiv_0\) is reflexive, this completes the proof.

		\item[(W2)]
		If \(e = {(c \cdot g)}^{(b)}\) and \(f = {(g +_c 1)}^{(b)}\) with \(g \in \Exp\) and \(b, c \in \operatorname{BExp}\), then derive
		\[
		e \downarrow a
			\iff a \in b \wedge [c \cdot g \downarrow a \vee c \cdot g \Rightarrow a] 
			\iff a \in b \wedge [g +_c 1 \downarrow a \vee g +_c 1 \Rightarrow a]
			\iff f \downarrow a
		\]
		Similarly, for succesful termination we derive
		\[
			e \Rightarrow a
				\iff a \not\in b
				\iff f \Rightarrow a
		\]
		Finally, if \(e \trans{a|p} h\), then \(h = h' \cdot e\) with \(c \cdot g \trans{a|p} h'\).
		Since \(c\) does not permit any transitions, this implies that \(a \in c\) and \(g \trans{a|p} h'\).
		From this, it follows that \(g +_c 1 \trans{a|p} h'\), and htus \(f \trans{a|p} h' \cdot f\).
		Since \(h' \cdot e \equiv_0 h' \cdot f\) by W2, we are done.
	\end{enumerate}

	\noindent
	The inductive cases for reflexivity, symmetry and transitivity of \(\equiv_0\) are completely straightforward, and follow from the fact that bisimilarity enjoys the same properties.
	
	To account for the fact that \(\equiv_0\) is a congruence, we treat the case for sequential composition, i.e., where \(e = e_0 \cdot e_1\) and \(f = f_0 \cdot f_1\) with \(e_0 \equiv_0 f_0\) and \(e_1 \equiv_0 f_1\); the other cases are similar.
	By induction, this tells us that \(e_0\) is bisimilar to \(f_0\), and \(e_1\) is bisimilar to \(f_1\).
	It is then not hard to show that \(e \downarrow a\) if and only if \(f \downarrow a\) as well as \(e \Rightarrow a\) if and only if \(f \Rightarrow a\).
	Furthermore, if \(e \trans{a|p} e'\), then either \(e' = e_0' \cdot e_1\) and \(e_0 \trans{a|p} e_0'\), or \(e_0 \Rightarrow a\) and \(e_1 \trans{a|p} e'\).
	In the former case, \(f_0 \trans{a|p} f_0'\) such that \(e_0 \equiv_0 f_0\), by induction.
	In that case \(f \trans{a|p} f_0' \cdot f_1\); since \(h = e_0' \cdot e_1 \equiv_0 f_0' \cdot f_1\), we are done.
	Otherwise, if \(e_0 \Rightarrow a\) and \(e_1 \trans{a|p} e'\), then by induction \(f_1 \trans{a|p} f'\) such that \(e' \equiv_0 f'\).
	Since furthermore \(f \trans{a|p} f'\) in this case, we are done.

	The only case where we need a new idea is for W3.
	Here, we know that \(e \equiv_0 f\) because \(f = g^{(b)} \cdot h\), with \(e \equiv_0 g \cdot e +_b h\) and \(E(g) \equiv_0 0\).
	A routine argument shows that \(e \downarrow a\) if and only if \(f \downarrow a\) as well as \(e \Rightarrow a\) if and only if \(f \Rightarrow a\).
	Next, if \(e \trans{a|p} e'\), then we know by applying the induction hypothesis to \(e \equiv_0 g \cdot e +_b h\) that \(g \cdot e +_b h \trans{a|p} e''\) with \(e' \equiv_0 e''\).
	This gives us two cases to consider.
	\begin{itemize}
		\item
		If \(a \in b\), then \(g \cdot e \trans{a|p} e''\).
		Now, note that if \(g \Rightarrow a\), then \(E(g) \Rightarrow a\) as well; since the latter would imply, by induction, that \(0 \Rightarrow a\), we can exclude it.
		This tells us that \(e'' = g' \cdot e\) with \(g \trans{a|p} g'\).
		In that case, \(f \trans{a|p} g' \cdot f\).
		Since \(e'' = g' \cdot e \equiv_0 g' \cdot f\), we are done.

		\item
		If \(a \not\in b\), then \(h \trans{a|p} e''\).
		In that case, \(g^{(b)} \Rightarrow a\), and hence \(f \trans{a|p} e''\).
		\qedhere
	\end{itemize}
\end{proof}

\section{Topological Structure of \(Z\)}\label{appendix:topological structure}

The space of trees \(Z\) has a rich structure that is useful in the proofs that follow.
In this appendix, we will show that we can equip \(Z\) with the compact metric \(d\), defined

\vspace{-1em}\begin{mathpar}
	d(s, t) = \max\left\{2^{-|w|} \mathrel{\Big|} \begin{array}{c}w \in \dom(s)\cap\dom(t) \\\text{ and }t(w) \neq s(w)\end{array}\right\},
\end{mathpar}
where \(\max\emptyset = 0\).

\begin{lemma}
	\((Z,d)\) is a metric space.
\end{lemma}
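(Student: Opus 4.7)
The plan is to verify the four axioms of a metric: non-negativity, definiteness ($d(s,t) = 0$ iff $s = t$), symmetry, and the triangle inequality. Non-negativity and symmetry are immediate from the definition, and the max is well-defined because the argument is a subset of $\{2^{-n} \mid n \geq 1\}$; any nonempty such subset attains a maximum by well-ordering of $\N$, and the convention $\max \emptyset = 0$ handles the empty case.

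The first genuine step is to establish a small auxiliary lemma: the domain of every tree is prefix-closed. This is not explicitly stated in the definition, but it follows from the two constraints on $Z$. If some $v \in \dom(t)$ with $|v| \geq 2$ had a nonempty prefix missing from $\dom(t)$, pick the shortest such bad prefix $w'$ of $v$; since $A \subseteq \dom(t)$ we have $|w'| \geq 2$, so $w' = w''a$ with $w'' \in \dom(t)$ by minimality of $w'$. Then either $t(w'') \in \Sigma$ (and constraint~1 puts $w' = w''a \in \dom(t)$, a contradiction) or $t(w'') \in 2$ (and constraint~2 forbids the extension $v$ of $w''$ from lying in $\dom(t)$, again a contradiction). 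With prefix-closedness in hand, definiteness follows by induction on $|w|$: the base $|w|=1$ uses $A \subseteq \dom(s) \cap \dom(t)$ together with $d(s,t) = 0$; the inductive step uses that the domain and value structure at length $n+1$ is entirely determined by the values at length $n$.

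The main work is the triangle inequality $d(s,u) \leq d(s,t) + d(t,u)$. Assume $d(s,u) = 2^{-n}$, witnessed by some $w \in \dom(s) \cap \dom(u)$ of length $n$ with $s(w) \neq u(w)$. In the easy case $w \in \dom(t)$, we cannot simultaneously have $s(w) = t(w)$ and $t(w) = u(w)$, so at least one of $d(s,t)$, $d(t,u)$ is $\geq 2^{-n}$. In the hard case $w \not\in \dom(t)$, let $w'$ be the longest prefix of $w$ that lies in $\dom(t)$, of length $k$ with $1 \leq k < n$. Applying constraint~1 to $t$ at $w'$ forces $t(w') \in 2$, since its extension by the next letter of $w$ is missing from $\dom(t)$. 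On the other hand, applying constraint~2 to $s$ at $w'$, together with prefix-closedness which places the next prefix of $w$ in $\dom(s)$, forces $s(w') \in \Sigma$. Hence $s(w') \neq t(w')$ at a word of length $k \leq n-1$, giving $d(s,t) \geq 2^{-k} \geq 2 \cdot 2^{-n} > 2^{-n}$, which already exceeds $d(s,u)$.

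The principal obstacle is precisely this second case of the triangle inequality: the witnessing word may be absent from $\dom(t)$ entirely, and one must use the structural constraints on tree domains (via prefix-closedness) to locate a proxy prefix where $s$ and $t$ already diverge in kind ($\Sigma$ versus $2$). Everything else is routine.
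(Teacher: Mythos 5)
Your proof is correct and follows essentially the same route as the paper: the crux in both is the domain-mismatch case, resolved by locating the frontier prefix where one tree takes a value in \(\Sigma\) and the other a value in \(2\), forcing a disagreement at a strictly shorter word. The only cosmetic difference is packaging --- the paper states a single key observation (\(w \in \dom(t)\setminus\dom(s)\) implies \(d(s,t) > 2^{-|w|}\)) and derives both definiteness and the hard case of the triangle inequality from it, whereas you prove prefix-closedness explicitly, handle definiteness by induction on word length, and inline the frontier argument into the triangle inequality.
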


\begin{proof}
	Let \(s,t \in Z\).
	To show that \(d\) is a metric, we need to prove that \(s = t\) if and only if \(d(s,t) = 0\), and that \(d\) satisfies the triangle inequality.

	We begin by making the observation that, if \(w \in \dom(t) \setminus \dom(s)\), then \(d(s,t) > 2^{-|w|}\).
	Let \(w \in \dom(t)\setminus \dom(s)\).
	Since \(A \subseteq \dom(s) \cap \dom(t)\), there is a longest prefix \(w'a\) of \(w\) such that \(w'a \in \dom(s)\cap \dom(t)\).
	By assumption, \(t(w'a) \in \Sigma\), for otherwise \(w\) is a leaf of \(t\) and \(w=w'a\), contradicting the assumption that \(w \nin \dom(s)\).
	Moreover, \(s(w'a) \in 2\), for otherwise \(w'a\) would be a node of \(s\) and we could find a prefix \(w'au\) of \(w\), for some \(u \in A^+\), such that \(wau \in \dom(s) \cap \dom(t)\), contradicting the assumption that \(w'a\) is the longest prefix of \(w\) in \(\dom(s)\cap\dom(t)\).
	This means that \(t(w'a) \neq s(w'a)\), because \(\Sigma \cap 2 = \emptyset\).
	Hence, \(d(s,t) \ge 2^{-|w'a|} > 2^{-|w|}\).

	One consequence of this observation is that, if \(d(s,t) = 0\), then \(\dom(s) = \dom(t)\).
	Since this means that \(\dom(t) = \dom(s) \cap \dom(t) = \dom(s)\), \(d(s,t) = 0\) implies that \(s(w) = t(w)\) for any \(w\) where either is defined.
	Hence, \(s = t\).

	To see that \(d\) satisfies the triangle inequality, assume \(d(s,t) = 2^{-k}\).
	Then there is a word \(w \in \dom(s)\cap\dom(t)\) such that \(|w| = k\) and \(s(w) \neq t(w)\).
	Now consider a third tree, \(r \in Z\).
	It cannot be the case that both \(w \in \dom(r)\) with \(s(w) = r(w)\) and \(r(w) = t(w)\), so either \(w \in \dom(s)\setminus \dom(r)\), in which case \(d(s,r) > 2^{-k}\), or \(w \in \dom(r)\) with \(s(w) \neq r(w)\) or \(r(w) \neq t(w)\), meaning one of \(d(s,r)\) and \(d(r,t)\) is at least \(2^{-k}\).
	Whence,\[
		d(s,t) = 2^{-k} \le \max\{d(s,r), d(r,t)\} \le d(s,r) + d(r,t).
	\]
	This concludes the proof that \(d\) is a metric.
\end{proof}

Next, we argue that \((Z,d)\) is a complete metric space by showing something much stronger: \((Z,d)\) is compact.

\begin{lemma}
	\((Z,d)\) is a compact metric space.
\end{lemma}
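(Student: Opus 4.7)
The plan is to prove compactness via sequential compactness, using a König-style diagonal extraction. The key observation that makes this work is that both the atom set $A$ (which has size $2^{|T|}$) and $2 + \Sigma$ are finite, so although trees can be arbitrarily deep, there are only finitely many ways to label any given finite prefix.

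First I would reformulate the metric in more convenient terms. Writing $A^{\le k}$ for the set of nonempty words of length at most $k$, a direct inspection of the definition shows the ultrametric inequality $d(s,t) \le 2^{-k}$ holds precisely when $s$ and $t$ agree on $A^{\le k}$, meaning that for each $w \in A^{\le k}$, we have $w \in \dom(s) \iff w \in \dom(t)$ and, when defined, $s(w) = t(w)$. This uses the observation (established in the proof that $d$ is a metric) that a discrepancy in the domain at depth $\le k$ forces a discrepancy in value at some strictly shallower depth, because $wa \in \dom(s)$ is determined by $s(w) \in \Sigma$. Consequently, the set of possible restrictions $t\!\upharpoonright_{A^{\le k}}$ of trees to depth $k$ is finite.

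Next I would carry out the diagonal argument. Given a sequence $(t_n)_{n \in \N}$ in $Z$, by finiteness of depth-$1$ restrictions, pigeonhole yields an infinite subsequence $(t_n^{(1)})$ all sharing the same restriction to $A^{\le 1}$. Iterating, I extract nested subsequences $(t_n^{(k)})$ such that all elements of $(t_n^{(k)})$ share a common restriction $\tau_k$ to $A^{\le k}$. By construction the $\tau_k$ are compatible: $\tau_{k+1}$ extends $\tau_k$. Taking the diagonal $s_n = t_n^{(n)}$, for any fixed $k$ and $n \ge k$ the tree $s_n$ restricts to $\tau_k$ on $A^{\le k}$.

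Finally I would construct the limit and verify it lies in $Z$. Define $t : A^+ \rightharpoonup 2 + \Sigma$ by $t(w) = \tau_{|w|}(w)$ whenever the right-hand side is defined. The compatibility of the $\tau_k$ makes $t$ well-defined, and the two defining constraints of $Z$ (closure of $\dom(t)$ under suffixing when the value lies in $\Sigma$, and termination of branches where the value lies in $2$) are inherited from the corresponding restrictions of the $s_n$, which already satisfy them. By the reformulated metric, $d(s_n, t) \le 2^{-n}$ for all $n$, so $s_n \to t$ in $(Z,d)$. The main obstacle is the bookkeeping in step two: one must be careful that the "restriction to depth $k$" truly captures enough information to determine convergence, which is precisely why the preliminary reformulation in terms of agreement on $A^{\le k}$ is indispensable.
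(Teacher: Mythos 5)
Your proposal is correct and follows essentially the same route as the paper: a pigeonhole extraction of nested subsequences agreeing on words of bounded length, a diagonal subsequence, and a pointwise-defined limit tree verified to satisfy the two domain rules of \(Z\). The only quibble is an off-by-one in your reformulation of the metric (agreement on all words of length at most \(k\) gives \(d(s,t) \le 2^{-(k+1)}\), while \(d(s,t) \le 2^{-k}\) only forces agreement up to length \(k-1\)), but this does not affect the convergence argument.
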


\begin{proof}
	Let \((t_i)_{i >0}\) be an infinite sequence in \(Z\).
	To show that \(Z\) is compact, we need to exhibit a convergent subsequence of \((t_i)_{i > 0}\).
	This can be done as follows.

	Let \(\mathbf t^{(0)}=(t_i)_{i >0}\), and for any \(k \in \N\) let \(\mathbf t^{(k+1)}\) be a subsequence of \((t_{i}^{(k)})_{i > 0}\) satisfying \[
		(\forall i, j \in \N)(\forall w \in A^+)\ |w| \le k+1 \implies t_{i}^{(k)}(w) = t_{j}^{(k)}(w)
	\]
	Such a subsequence always exists, because there are finitely many partial functions \(\bigcup_{i=1}^{n}A^i \rightharpoonup 2 + \Sigma\), and hence there are infinitely many \(t_i^k\) that agree on all words of length at most \(k+1\).
	We claim that the subsequence \((t_i^{(i)})_{i > 0}\) of \((t_i)_{i >0}\) converges.

	The intuitive candidate for the limit of \((t_i^{(i)})_{i > 0}\) is given by the expression \(s = \lambda w.t_{|w|}^{(|w|)}(w)\).
	We need to show that this defines a tree in \(Z\).
	This can be done by induction on the domain rules for a tree in \(Z\).

	For the first domain rule, notice that if \(w \in \dom(s)\) and \(s(w) \in \Sigma\), then let \(n = |w|\) to find \(t_{n}^{(n)}(w) \in \Sigma\). 
	By construction, \(t_{n+1}^{(n+1)}(w) = t_n^{(n)}(w)\), putting \(t_{n+1}^{(n+1)}(w) \in \Sigma\).
	This means that for any \(a \in A\), \(wa \in \dom(t_{n+1}^{(n+1)})\).
	This puts \(wa \in \dom(s)\) for every \(a \in A\).

	For the second domain rule, let \(s(w) \in 2\).
	Where \(n = |w|\), \(t_n^{(n)}(w) = s(w)\), so \(t_n^{(n)}(w) \in 2\) as well.
	By construction, \(t_{n+k}^{(n+k)}(w) = t_n^{(n)}(w)\) for any \(k \ge 0\), putting \(t_{n+k}^{(n+k)}(w) \in 2\) for any \(k \ge 0\).
	If \(u \in A^+\) with \(|u| = k\), then \(wu \nin \dom(t_{n+k}^{(n+k)})\).
	Hence, \(wu \nin \dom(s)\).
	This concludes the argument showing that \(s \in Z\).

	For any \(n > 0\), and \(w \in \dom(s)\) with \(|w| \le n\), \(s(w) = t_{|w|}^{(|w|)}(w) = t_n^{(n)}(w)\).
	This means that \(s\) and \(t_n^{(n)}\) agree on all words of length at most \(n\), or equivalently \(d(s, t_{n}^{(n)}) \le 2^{-n}\).
	As \(n\) tends to \(\infty\), the subsequence \(t_n^{(n)}\) of \((t_i)_{i > 0}\) converges to \(s\).
	Hence, \(Z\) is compact. 
\end{proof}

Indeed, every compact metric space is also complete, for every incomplete metric space contains a sequence with no convergent subsequence (consider an arbitrary nonconvergent Cauchy sequence).
It should be noted, as well, that the completeness of \(Z\) does not depend on the finiteness of \(\Sigma\).
In fact, at the time of writing, the finiteness of \(\Sigma\) plays little to no role in the theory of \acro{GKAT} whatsoever.

% *** SECTION *** %
\section{Detailed proofs for \cref{sec:well_nested_coalgebras}: Well-nested automata and nested behaviour}\label{appendix:well-nested automata}

We begin this appendix by showing that our two-state automaton is not nested.
Define \(N(t) = \{a \in A \mid t(a) \in \Sigma\}\).

\begin{example}
	The automaton \(\X\) below is not nested if \(b,\bar b \neq 0\).
	\[
	\begin{tikzpicture}[
		->,
		> = stealth,
		node distance = 2cm,
		every node/.style = {
			thick, 
			minimum height=1.7em,
			text height=0.6em,
		},
	]
		\node at (2, 0) (0) {\(v_0\)};
		\node at (4, 0) (1) {\(v_1\)};
		\node[right=3mm of 1] (2) {\(b\)};
		\node[left=3mm of 0] (3) {\(\bar{b}\)};	

		\path (0) edge[above, bend left,looseness=0.5,->] node{\(b | p\)} (1);
		\path (1) edge[below, bend left,looseness=0.5,->] node{\(\bar b | q\)} (0);
		\path ($(1) + (2.5mm,0)$) edge[double,double distance=2pt,-implies] ($(2) + (-2mm,0)$);
		\path ($(0) + (-2.5mm,0)$) edge[double,double distance=2pt,-implies] ($(3) + (2mm,0)$);
	\end{tikzpicture}
	\]
	This is a direct consequence of the following lemma.
\end{example}

\begin{lemma}\label{lem:even/odd nesting invariant}
	Let \(b \subset A\), \(t \in\coeq W\), and consider any infinite \emph{branch}\[
		B = \{\epsilon, a_1, a_1a_2, a_1a_2a_3, \dots\} \subseteq \Node(t)
	\]
	of \(t\).
	Then either
	
    \vspace{-1em}\begin{mathpar}
		|\{w \in B \mid E(\partial_w t) = b\}| < \omega
		\and\text{or}\and
		|\{w \in B \mid E(\partial_w t) = \bar b\}| < \omega.
	\end{mathpar}
	A branch with this property will be known as \define{finitely alternating}.
\end{lemma}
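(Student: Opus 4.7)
The plan is to prove a stronger invariant by induction on the derivation of $t \in \coeq{W}$: for every infinite branch $B$ of $t$, there exists $c \subsetneq A$ such that $E(\partial_w t) \subseteq c$ for all but finitely many $w \in B$. This will imply the lemma, since if both $\{w \in B \mid E(\partial_w t) = b\}$ and $\{w \in B \mid E(\partial_w t) = \bar{b}\}$ were infinite, then some element of each set would also satisfy $E(\partial_w t) \subseteq c$, forcing $b \subseteq c$ and $\bar{b} \subseteq c$, and hence $A = b \cup \bar{b} \subseteq c$, contradicting $c \subsetneq A$.

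The base case $t = \sem{b'}$ is vacuous because discrete trees have no infinite branches. For the integration rule, an infinite branch must begin with $t(a_1) \in \Sigma$, so $\partial_{a_1} t \in \coeq{W}$ and the tail is an infinite branch of $\partial_{a_1} t$, to which the induction hypothesis applies directly. For $t = t_1 \cdot t_2$, any infinite branch $B$ either stays inside $t_1$ forever---in which case it projects to an infinite branch of $t_1$ and IH yields some $c_1 \subsetneq A$, while $E(\partial_w t) = E(\partial_w t_1) \cap E(t_2) \subseteq c_1$---or it exits $t_1$ at a finite step, after which the tail traces an infinite branch of some subtree of $t_2$ and IH on $t_2$ supplies a suitable $c$.

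The continuation case $t = t_1 \rhd t_2$ is the most delicate. The key structural observation is that once the branch enters ``$t_2$-mode''---i.e., the first component of $\partial_w t = s \rhd t_2$ becomes a subtree of $t_2$---it remains in $t_2$-mode forever, since an accepting leaf encountered inside a $t_2$-subtree always restarts to $\partial_a t_2 \rhd t_2$, again a subtree of $t_2$. Moreover, merely entering $t_2$-mode requires some $\partial_a t_2$ to be defined, which forces $t_2(a) \in \Sigma$ and hence $E(t_2) \subsetneq A$; throughout the $t_2$-mode portion we have $E(\partial_w t) = E(s_w) \cap E(t_2) \subseteq E(t_2)$, so $c := E(t_2)$ works. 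If the branch never enters $t_2$-mode, then it behaves like a branch of $t_1$ under concatenation with $t_2$, and IH on $t_1$ applies as before.

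The main obstacle is that the lemma's per-$b$ formulation is not preserved under the intersection operation $E(-) \cap E(t_2)$ that governs both composition and continuation: even if a branch of $t_1$ alternates finitely between $b'$ and $\bar{b'}$ for every single $b'$, many non-antipodal values of $E(\partial_w t_1)$ can map under $\cdot \cap E(t_2)$ to an antipodal pair in the image. The strengthening to ``eventually $E(\partial_w t) \subseteq c$ for some proper subset $c$'' is precisely tuned to survive this intersection, since $c \cap E(t_2) \subseteq c$ remains a proper subset of $A$ whenever $c$ does.
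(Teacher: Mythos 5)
Your proof is correct, and it takes a genuinely different route from the paper's. The paper inducts on the derivation of \(t \in \coeq{W}\) using the per-\(b\) ``finitely alternating'' property itself as the invariant; its delicate case is the continuation rule, which it handles by contradiction using the limit characterization \(t_1 \rhd t_2 = \lim_n t_1 \cdot t_2^{\cdot n}\) and the metric on \(Z\): an infinite alternation would force \((t_1 \rhd t_2)(waw'a)=1\) at an atom \(a\) where the branch had already crossed an acceptance boundary with \(t_2(a)\neq 1\), contradicting the approximants. Your strengthened invariant --- eventually \(E(\partial_w t) \subseteq c\) for a single proper \(c \subsetneq A\) --- replaces that argument with the elementary observation that every derivative of \(t_1 \rhd t_2\) has the form \(s \rhd t_2\) with \(E(s \rhd t_2) = E(s) \cap E(t_2)\), and that crossing an acceptance boundary forces \(E(t_2) \subsetneq A\), so \(c := E(t_2)\) works; no metric or limit is needed. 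It also buys you a clean treatment of the subcase of composition where the branch stays in \(t_1\) forever: there \(E(\partial_w t) = E(\partial_w t_1) \cap E(t_2)\), and as you correctly note, the per-\(b\) property is not obviously stable under intersecting with \(E(t_2)\), whereas containment in a proper subset is. (The paper leaves that subcase implicit; it can be patched by observing that if \(E(t_2) \subsetneq A\) then at most one of \(b, \bar b\) is attainable as a subset of \(E(t_2)\), and if \(E(t_2) = A\) then \(t_1 \cdot t_2 = t_1\) --- but your formulation avoids needing the patch at all.) The trade-off is that your argument proves a slightly stronger structural fact about nested trees, at the cost of having to verify that the strengthening propagates through all three rules, which you do correctly.
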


\begin{proof}
	By induction on the construction of \(t\).
	Since discrete trees do not have infinite branches, the base case is vaccuous.
	
	For the induction step, we assume that the lemma holds for any \(b \subset A\) and any infinite branch of \(r, s\) and the items of a sequence \(s_a\) indexed by \(A\).
	\begin{enumerate}
		\item[(\(+\))] Suppose \(\partial_a t = s_a\) for all \(a \in N(t)\), and consider a particular \(a \in N(t)\).
		If \(B\) is a branch of \(t\) including \(a\), then
		\(
			B = \{\epsilon\} \cup a B'
		\)
		for some branch \(B'\) of \(s_a\).
		Thus, since \(B'\) is finitely alternating by assumption, \(B\) must be as well.

		\item[(\(\cdot\))] Suppose \(t = r \cdot s\).
		Similarly, if \(B\) is an infinite branch of \(t\), then either \(B\) is an infinite branch of \(r\) or there is a word \(a_1\cdots a_n \in B\) such that 
		\begin{equation}\label{eq:B fin alt in cdot case}
			B = \{\epsilon, a_1, \dots, a_1\cdots a_n\} \cup (a_1\cdot a_{n-1})B'
		\end{equation}
		for some branch \(B'\) of \(s\) beginning with \(a_n\).
		Since there are only finitely many words of length at most \(n\) in \(B\),
		
        \vspace{-1em}\begin{mathpar}
			|\{w \in B \mid |w| \le n \text{ and } E(\partial_w t) = b\}| < \omega
			\and\text{and}\and
			|\{w \in B \mid |w| \le n \text{ and } E(\partial_w t) = \bar b\}| < \omega.
		\end{mathpar}
		Since \(B'\) is finitely alternating, it follows from \cref{eq:B fin alt in cdot case} that \(B\) must be as well.

		\item[(\(\rhd\))] Suppose \(t = r \rhd s\), and let \(B\) be an infinite branch of \(t\).
		Without loss of generality, we can assume that \(B\) is not a branch of \(r \cdot s^{\cdot n}\) for any \(n \in \N\) (by referring to the previous case otherwise).
		This means that, for some word \(wa \in B\) and \(n > 0\), \(w \in \Node(r \cdot s^{\cdot n})\) and \(r \cdot s^{\cdot n}(wa) = 1\).
		Simultaneously, however, \(wa \in \Node(r \rhd s)\), so it must be that \(s(a) \neq 1\).

		Assume for a contradiction that \(B\) infinitely alternates between accepting \(b\) and \(\bar b\), and without loss of generality assume that \(a \in b\).
		Since \(\{w \in B \mid E(\partial_w (r \rhd s)) = b\}\) is infinite, there is an \(m > n\) and a word \(aw' \in \Node(s^{\cdot (m-n)})\) such that \(waw' \in B\) and \(E(\partial_{waw'} (r \rhd s)) = b\).
		This means that \((r \rhd s)(waw'a) = 1\), as we assumed \(a \in b\), which is equivalent to saying that for any \(k \ge m\), \(r \cdot s^{\cdot k}(waw'a) = 1\).
		This contradicts the construction of \(t\), however, as we assumed \(s(a) \neq 1\) and therefore\[
			r \cdot s^{\cdot (m + 1)}(waw'a) = r \cdot s^{\cdot m} \cdot s(waw'a) = s(a) \neq 1.
		\]
		It follows that \(B\) must have been finitely alternating to begin with.
        \qedhere
	\end{enumerate}
\end{proof}

Let \(t \in Z\) and \(b \subseteq A\).
The observation that \(\coeq W\) is a subalgebra of \(Z\) rested on the the identity
\begin{mathpar}\textstyle
	t^{(b)} = 1 \rhd (\tilde{t} +_b 1),\ \text{ where }\ \tilde t = \bigplus_{t \trans{a|p_a} t_a} p_a \cdot t_a.
\end{mathpar}
This is established by showing that the relation
\(
	\{(s \cdot t^{(b)}, s \rhd (\tilde{t} +_b 1)) \mid s, t \in Z\} \cup \Delta_{Z}
\) 
is a bisimulation with \cref{lem:tree concrete bisimilarity}.
To this end, observe that 
\begin{align*}
	(s \cdot t^{(b)})(a) 
	&= \begin{cases}
		t^{(b)}(a) 	&\text{if \(s(a) = 1\)}\\
		s(a) 		&\text{otherwise}
	\end{cases}\\
	&= \begin{cases}
		1 			&\text{if \(a \nin b\) and \(s(a) = 1\)}\\
		t(a) 		&\text{if \(a \in b\), \(t(a) \in \Sigma\), and \(s(a) = 1\)}\\
		0 			&\text{if \(a \in b\), \(t(a) \in 2\), and \(s(a) = 1\)}\\
		s(a) 		&\text{otherwise}
	\end{cases}\\
	&= \begin{cases}
		1 			&\text{if \(a \nin b\) and \(s(a) = 1\)}\\
		\tilde t(a) &\text{if \(a \in b\) and \(s(a) = 1\)}\\
		s(a) 		&\text{otherwise}
	\end{cases}\\
	&= (s \rhd (\tilde{t} +_b 1))(a)
\end{align*}
This establishes (1) from \cref{lem:tree concrete bisimilarity}.
For (2), write 
\begin{align*}
	\partial_a (s \cdot t^{(b)})
	&= \begin{cases}
		\partial_a t \cdot t^{(b)} 		&\text{if \(s(a) = 1\), \(a \in b \wedge N(t)\)}\\
		\partial_a s \cdot t^{(b)} 		&\text{otherwise}
	\end{cases}\\
	\partial_a (s \rhd (\tilde t +_b 1))
	&= \begin{cases}
		\partial_a \tilde t \cdot (\tilde t +_b 1) 		&\text{if \(s(a) = 1\), \(a \in b \wedge N(t)\)}\\
		\partial_a s \cdot (\tilde t +_b 1) 			&\text{otherwise}
	\end{cases}\\
	&= \begin{cases}
		\partial_a t \cdot (\tilde t +_b 1) 	&\text{if \(s(a) = 1\), \(a \in b \wedge N(t)\)}\\
		\partial_a s \cdot (\tilde t +_b 1) 	&\text{otherwise}
	\end{cases}
\end{align*}
Each respective pair is a member of \(R\), so \(R\) is a bisimulation by \cref{lem:tree concrete bisimilarity}.

\restateexistence*
\begin{proof}
	We have already seen \(\coeq W \supseteq \img(\sem-)\).
	The reverse containment can be shown by induction on the nesting rules.

	By definition, \(\sem b \in \img(\sem-)\) for any \(b \subseteq A\).
	Furthermore, if \(\partial_a t = \sem{e_a}\) for all \(a \in N(t)\), then \[
		t = 1+_{E(t)} \left(\bigplus_{a \in N(t)}t(a) \cdot \sem{e_a}\right) 
		= \sem{1 +_{E(t)} \left(\bigplus_{a \in N(t)} t(a) \cdot e_a\right)}.
	\]
	If \(s = \sem{e}\) and \(t = \sem{f}\), then \(s \cdot t = \sem{e \cdot f}\) by definition.

	The continuation case can be seen from the following identity,
	\begin{equation}\label{eq:rhd and guarded exp}
		s \rhd t = s \cdot t^{(\overline{E(t)})}.
	\end{equation}
	If \(s = \sem e\) and \(t = \sem f\), then \[
		s \rhd t = \sem e \cdot \sem f^{(\overline{E(t)})} = \sem{e \cdot f^{(\overline{E(t)})}}.
	\]
	It now suffices to see \cref{eq:rhd and guarded exp}.
	This can be shown with a routine coinductive argument, establishing that\[
		R = \{(s \rhd t, s \cdot t^{(\overline{E(t)})}) \mid s, t \in Z\}
	\]
	is a bisimulation.
	Calculating, we see that both \((s \rhd t)(a)\) and \((s \cdot t^{(\overline{E(t)})})(a)\) are\[
		\begin{cases}
			t(a) &\text{if \(s(a) = 1\)},\\
			s(a) &\text{otherwise}.
		\end{cases}
	\]
	For the coinductive step, observe that
	\[
		\partial_a (s \cdot t^{(\overline{E(t)})}) = \begin{cases}
			\partial_a (t^{(\overline{E(t)})}) &\text{if \(s(a) = 1\)},\\
			\partial_a s \cdot t^{(\overline{E(t)})} &\text{otherwise.}
		\end{cases}
		= \begin{cases}
			\partial_a t \cdot t^{(\overline{E(t)})} &\text{if \(s(a) = 1\)},\\
			\partial_a s \cdot t^{(\overline{E(t)})} &\text{otherwise.}
		\end{cases}
	\]
	and
	\[
		\partial_a (s \rhd t) = \begin{cases}
			\partial_a t \rhd t &\text{if \(s(a) = 1\)},\\
			\partial_a s \rhd t &\text{otherwise.}
		\end{cases}
	\]
	The respective pairs are in \(R\), as desired.
	This establishes \cref{eq:rhd and guarded exp}.
\end{proof}

To formally define what it means to be \emph{well-nested}, we need the following automata-theoretic construction.
Given a \acro{GKAT}-automaton \(\X\), a subset \(U \subseteq X\), and a function \(h : A \to 2 + \Sigma\times X\), the \define{uniform continuation of \(h\) along \(U\)} is the automaton \(\X[U,h] = (X, \delta[U,h])\) obtained by setting\[
	\delta[U,h](x)(a) = \begin{cases}
		h(a) & \text{if \(x \Rightarrow a\) and \(x \in U\)},\\
		\delta(x)(a) & \text{otherwise}.
	\end{cases}
\]
% The following is a simple observation, but plays a key role in \cref{prop:W is necessary} below.
% 
% \begin{lemma}
% 	Let \(\X\) and \(\Y\) be \acro{GKAT}-automata, \(h \in G(X+Y)\).
% 	Then \(\Y\) is a subautomaton of \(\V = (\X + \Y)[X,h]\).
% \end{lemma}
A \acro{GKAT}-automaton \(\X\) is called \define{discrete} if it satisfies the discrete coequation, \(\coeq D\).
The class of \define{well-nested} \acro{GKAT}-automata~\cite{gkat} is defined to be the smallest class containing
\begin{enumerate}
	\item[(a)] every finite discrete coalgebra, and
	\item[(b)] \((\X + \Y)[X, h]\) whenever \(\X\) and \(\Y\) are well-nested.
\end{enumerate}

A short, relatively abstract proof of the following proposition was already given in \cref{sec:well_nested_coalgebras}.
We include the following more combinatorial proof as a supplement.

\restatewisnecessary*

\begin{proof}
	By induction on the construction of \(\V\).
	Of course, \(\V\) is discrete if and only if \(\V \models \coeq D\), so the base case follows from the definition of nestedness.

	For the inductive step, let \(\V = (\X + \Y)[X, h]\), where \(\X\) and \(\Y\) are well-nested coalgebras satisfying \(\coeq W\), and \(h : A \to 2 + \Sigma\times(X+Y)\).
	By finality, we obtain three homomorphisms
	\begin{align*}
		!_\X : \X \to \coalg Z, \ !_\Y : \Y \to \coalg Z, \ \text{ and } !_\V : \V \to \coalg Z.
	\end{align*}
	The first two satisfy \(!_\X[X], !_\Y[Y] \subseteq \coeq W\) by the induction hypothesis.
	Since \(Y\) is a subautomaton of \(\V\), \(!_\V(v) = !_\Y(v)\) for any \(v \in Y\), so it suffices to check that \(!_\V(v) \in \coeq W\) for \(v \in X\).
	To do this, we let \(!_\V(v) = t\) for an arbitrary \(v \in X\) and exhibit a construction of \(t\) from the nesting rules.

	We begin by showing the nestedness of \(t' :=\ !_{\V'}(v)\), where \(\V' := \X[X, h']\) and\[
		h'(a) := \begin{cases}
			1 &\text{ if \(h(a) \in Y\)},\\
			h(a) &\text{ otherwise}.
		\end{cases}
	\]
	This allows us to write \(\V = (\V' + \Y)[X, h]\) and \(t = t' \cdot s_1\), where\[
		s_1(a) = \begin{cases}
			1 &\text{ if \(h(a) = 1\) or \(h(a) \in \Sigma\times X\)},\\
			0 &\text{ if \(h(a) = 0\)},\\
			p &\text{ if \(h(a) = (p,y) \in \Sigma\times Y\)}
		\end{cases}
		\ \text{ and }\ \partial_a s_1 = !_\Y(\pi_2\circ h(a)).
	\]
	Indeed, \(t(w) = t'(w)\) for any \(w \in A^+\) such that \(\delta^\V(v,w) \in X\); as well as for any \(w \in A^+\) such that \(w = w' a\), \(\delta^\V(v,w') \in X\), and \(v \Rightarrow_{\V'} a\).
	Thus, it suffices to see that \(\coalg V' \models \coeq W\), and by extension that \(t' \in \coeq W\).

	Towards the construction of \(t'\), let \(t_0 = !_\X(v)\), and define
	\[
		s_0(a) = \begin{cases}
			1 &\text{ if \(h(a) = 1\) or \(h(a) \in \Sigma\times Y\)},\\
			0 &\text{ if \(h(a) = 0\)},\\
			p &\text{ if \(h(a) = (p,x) \in \Sigma\times X\)}
		\end{cases}
		\ \text{ and }\ \partial_a s_1 = !_\X(\pi_2 \circ h(a)).
	\]
	By the induction hypothesis, \(t_0, s_0 \in \coeq W\).
	We claim that \(t_0 \rhd s_0 = t'\).

	To verify the claim, first let \[
		C = \{x \in X \mid v \to_{\X}^+ x\ \text{ and }\ (\exists a \in A)(x \Rightarrow_{\X} a \ \text{ and }\ h(a) \in X)\},
	\]
	where \((-)^+\) denotes transitive closure.
	If \(C = \emptyset\), then \(t_0 = t'\).
	Since this puts \(t' \in \coeq W\), it suffices to consider the case where \(C \neq \emptyset\).
	
	Assuming \(C \neq \emptyset\), define \[
		m = \min\{|w| \mid w \in A^+ \ \text{ and }\ \delta^\X(v,w) \in C\}.
	\]
	Note that \(d(t_0, t') \le 2^{-m}\) by design.

	Next, set \[
		B = \{x \in X \mid (\exists a \in A)(x \Rightarrow_{\X} a\ \text{ and }\ h(a) \in X)\}.
	\]
	Of course, \(C \subseteq B\), so \(C \neq \emptyset\) means \(B \neq \emptyset\) also.
	If \(\neg(x \to_{\V'}^+ y)\) holds for all \(x, y \in B\), then \(t_0 \cdot s_0^{\cdot n} = t_0 \cdot s_0\) for all \(n > 0\).
	This also means that \(t' = t_0 \cdot s_0\), so it suffices to consider the case where \(x \to_{\V'}^+ y\) holds for some \(x, y \in B\).

	Assuming \(x \to_{\V'}^+ y\) holds for some \(x, y \in B\), let\[
		\rho = \min\{|w| \mid w \in A^+\ \text{ and }\ (\exists x,y \in B)(\delta^{\V'}(x,w) = y)\}.
	\]
	Every path of the form \[
		v \to_{\X}^+ x_0 \to_{\V'}^+ x_1 \to_{\V'}^+ \cdots \to_{\V'}^+ x_n
	\]
	with \(x_0 \in C\) and \(x_i \in B\) for \(i > 0\) is of length at most \(m + n\rho\).
	If each path is chosen to be the shortest possible path, then since a branch of \(t_0\) witnesses the path \(v \to_{\X}^+ x_0\), a branch of \(t_0 \cdot s_0\) witnesses the path \(x_0 \to_{\V'}^+ x_1\), and so on, we have \[
		d(t', t_0 \cdot s_0^{\cdot n}) \le 2^{-(m + n\rho)} \le 2^{-n\rho}.
	\]
	Hence, \(t_0 \rhd s_0 = \lim_{n \to \infty} t_0 \cdot s_0^{\cdot n} = t'\).
\end{proof}

% *** SECTION *** %
\section{Detailed proofs for \cref{sub:uniqueness_of_solutions_for_salomaa_systems}: Uniqueness of solutions for Salomaa systems}

Recall that any finite product of compact spaces is compact.
In particular, \(Z^n\) is compact for any \(n \in \N\).
Compact metric spaces are necessarily complete, so \(Z^n\) is complete as well.
This gives us access to the \emph{Banach fixed-point theorem}, which states that any function \(f : M \to M\) from a complete metric space \((M, d_M)\) to itself that satisfies \[
	(\exists z \in [0,1))(\forall x, y \in M)\ d_M(x,y) \le z d_M(f(x), f(y))
\]
has a unique fixed-point.
In the formula above, any \(z \in [0,1)\) witnessing this property is called a \define{contraction scalar} for \(f\).

\restatesolvinginZ*
\begin{proof}
	Let \(f : Z^n \to Z^n\) be the function defined component-wise by \[
		f(\mathbf t)_i = s_{i1}\cdot t_1 +_{b_{i1}} \cdots +_{b_{i(n-1)}} s_{in}\cdot t_n +_{b_n} c_i
	\]
	where \(\mathbf t = (t_i)_{\le n} \in Z^n\).
	We are going to show that \(f\) is a contraction mapping in the product metric
	\[
		d_p(\mathbf t, \mathbf t') :=  \max\{d(t_i, t_i') \mid i \le n\}
	\]
	on \(Z^n\), with contraction scalar \(1/2\), and deduce the result from the Banach fixed-point theorem.

	To this end, let \(\mathbf t, \mathbf t' \in Z^n\) be two \(n\)-tuples of trees, and fix an index \(i \le n\).
	% Here, we will show that \(f(\mathbf t)_i\) and \(f(\mathbf t')_i\) agree on all prefixes of words of the form \(awa'w'\), where \(a \in b_{ij}\), \(awa'\) is an accepting leaf of \(s_{ij}\) if it is in the domain of \(s_{ij}\), and \(t_j\) and \(t_j'\) agree on the prefixes of \(a'w'\).
	Clearly,\[
		d(f(\mathbf t)_i, f(\mathbf t')_i) = \max\{d(s_{ij} \cdot t_j, s_{ij} \cdot t_j') \mid j \le n\},
	\]
	since any word \(aw \in \dom(f(\mathbf t)_i)\cap \dom(f(\mathbf t')_i)\) at which \(f(\mathbf t)_i(aw) \neq f(\mathbf t')_i(aw)\) must begin with an atom \(a \in b_{ij}\) for some \(j \le n\).
	We argue below that, in fact, \(d(s_{ij} \cdot t_j, s_{ij} \cdot t_j') \le (1/2)d(t_j, t_j')\) for any \(j \le n\).
	It follows from this observation that 
	\[
		d(f(\mathbf t)_i, f(\mathbf t')_i) 
		= \max\{d(s_{ij} \cdot t_j, s_{ij} \cdot t_j') \mid j \le n\} 
		\le (1/2)\max\{d(t_j, t_j') \mid j \le n\} = (1/2)d(\mathbf t, \mathbf t'),
	\]
	which by definition of the product metric makes \(1/2\) a contraction scalar for \(f\).

	In general, \(d(t,t') \le d(s\cdot t, s \cdot t')\) for any \(s,t,t' \in Z\), and \(d(s,t) \le (1/2) d(\partial_a s, \partial_a t)\) when both derivatives are defined.
	Thus, for a fixed \(j \le n\) and atom \(a \in b_{ij}\), if \(s_{ij}(a) \in \Sigma\), we obtain \[
		d(s_{ij} \cdot t_j, s_{ij} \cdot t_j') 
		\le (1/2)d(\partial_a (s_{ij} \cdot t_j), \partial_a (s_{ij} \cdot t_j))
		= (1/2) d(\partial_a s_{ij} \cdot t_j, \partial_a s_{ij} \cdot t_j')
		\le (1/2) d(t_j, t_j').
	\]
	If there is no such atom, then \(s_{ij} = 0\), because \(s_{ij}\) is productive.
	This would then imply that \[
		d(s_{ij} \cdot t_j, s_{ij} \cdot t_j') = d(0,0) = 0 \le (1/2) d(t_j, t_j').
	\]
	In either case, \(d(s_{ij} \cdot t_j, s_{ij} \cdot t_j') \le (1/2)d(t_j, t_j')\) as desired.

	By definition of the product metric,
	\[
		d_p(f(\mathbf t), f(\mathbf t')) = \max\{d(f(\mathbf t)_i, f(\mathbf t')_i) \mid i \le n\}
		\le (1/2) d(\mathbf t, \mathbf t').
	\]
	Whence, \(f\) is a contraction map with contraction scalar \(1/2\).
	By the Banach fixed-point theorem, \(f\) has a unique fixed-point in \(Z^n\).
	This fixed-point is the unique \(\mathbf r \in Z^n\) satisfying
	\[
		r_i = s_{i1}\cdot r_1 +_{b_{i1}} \cdots +_{b_{i(n-1)}} s_{in}\cdot r_n +_{b_n} c_i,
	\]
	for all \(i \le n\).
\end{proof}

% *** SECTION *** %
\section{Detailed proofs for \cref{sec:a_completeness_theorem_for_gkat-}: Completeness w.r.t. \(\equiv_0\)}\label{appendix:detailed_proofs_for_cref_sec_a_completeness_theorem_for_gkat_}

% \begin{lemma}
% 	Let \(t \in Z\) and \(w \in \Node(t)\). Then
% 	\begin{enumerate}
% 		\item[(i)] \(\Node(\partial_w t) = \partial_w \Node(t)\), where \(\partial_w L := \{w' \mid ww' \in L\};\)
% 		\item[(ii)] \(\dom(t) = \Node(t)A\), where \(LA := \{wa \mid w \in L \wedge a \in A \};\)
% 		\item[(iii)] \(\dom(\partial_w w) = \partial_w \dom(t)\);
% 		\item[(iv)] \(\Leaf(t) = \dom(t) \setminus \Node(t)\); and
% 		\item[(v)] \(\Leaf(\partial_w t) = \partial_w \Leaf(t)\).
% 	\end{enumerate}
% \end{lemma}

% A congruence \(\gequiv\) on \(\Exp\) satisfying the axioms that generate \(\equiv_0\) is called a \define{\acro{GKAT}-equivalence}.

To prove the completeness theorem for \(\equiv_0\), we need the following lemma, which is a way of saying that \(e\) has finitely many derivatives.

\begin{lemma}\label{lem:Exp is locally-finite}
	The \GKAT-automaton \(\coalg E = (\Exp, D)\) is \define{locally finite}, meaning that for any \(e \in \Exp\), the subatomaton generated by \(e\), \(\langle e\rangle_{\coalg E}\), has finitely many states.
\end{lemma}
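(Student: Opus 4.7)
The plan is to prove local finiteness by structural induction on $e$, building for each expression a finite superset of the reachable state space. The rules in \cref{tab:transition_structure_of_coalg_exp} constrain the possible shapes of Brzozowski derivatives quite tightly: a derivative of $e \cdot f$ is either $e' \cdot f$ for some derivative $e'$ of $e$, or a derivative of $f$; a derivative of $e^{(b)}$ has the form $e' \cdot e^{(b)}$ for some derivative $e'$ of $e$; and derivatives of a test or a guarded union are either the expression itself or derivatives of immediate subexpressions. I would exploit this to define, by structural recursion, a finite set $S(e) \subseteq \Exp$ that contains $e$ and is closed under the derivative map $D$.

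Concretely, I would set
\begin{align*}
S(b) &= \{b\}, \\
S(p) &= \{p, 1\}, \\
S(e \cdot f) &= \{e' \cdot f \mid e' \in S(e)\} \cup S(f), \\
S(e +_b f) &= \{e +_b f\} \cup S(e) \cup S(f), \\
S(e^{(b)}) &= \{e^{(b)}\} \cup \{e' \cdot e^{(b)} \mid e' \in S(e)\}.
\end{align*}
A straightforward induction on $e$ bounds $|S(e)|$ by a linear function of the size of $e$, so each $S(e)$ is finite. Membership $e \in S(e)$ is immediate in every clause (using the inductive hypothesis $e \in S(e)$ for the factor in the sequential case).

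The substance of the proof is then the joint induction establishing that $S(e)$ is closed under derivatives, i.e., whenever $f \in S(e)$ and $f \trans{a|p} f'$, we have $f' \in S(e)$. This proceeds by structural induction on $e$ and case analysis on the rules of \cref{tab:transition_structure_of_coalg_exp}, using the inductive hypothesis on immediate subexpressions to handle transitions of elements of $S(e)$ and $S(f)$. Once closure and $e \in S(e)$ are in hand, a trivial induction on the length of a transition path shows that the state space of $\langle e \rangle_{\coalg E}$ is contained in the finite set $S(e)$, which yields the claim.

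The main obstacle is the loop case, where one must analyse transitions from expressions of the form $e' \cdot e^{(b)}$ with $e' \in S(e)$. Such an expression can transition either via the first sequential-composition rule, producing $e'' \cdot e^{(b)}$ with $e''$ a derivative of $e'$ (which stays in $S(e)$ by the inner induction on $e$), or via the second rule when $e' \Rightarrow a$, in which case the successor is a derivative of $e^{(b)}$ itself, again of the prescribed shape. Verifying that these two modes of unfolding a loop never escape $S(e^{(b)})$ is the only step that has real content; the other cases reduce to chasing the inductive definition.
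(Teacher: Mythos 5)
Your proposal is correct and follows essentially the same route as the paper: the paper bounds \(|\langle e\rangle_{\coalg E}|\) by a recursively defined counting function \(\#(e)\) using exactly the structural facts about derivatives you identify (derivatives of \(e \cdot f\) are of the form \(e' \cdot f\) or derivatives of \(f\); derivatives of \(e^{(b)}\) are of the form \(e' \cdot e^{(b)}\)), whereas you make the corresponding derivative-closed finite superset \(S(e)\) explicit. This is a presentational difference only; both arguments have the same content, with the loop case handled identically.
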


\begin{proof}
	Let \(|\langle e \rangle_{\coalg E}|\) be the cardinality of the set of states in the subatomaton \(\langle e \rangle_{\coalg E}\) of \(\coalg E\), and define \(\# : \Exp \to \N\) inductively as follows:
	\begin{mathpar}
		\#(b \subseteq A) = 1
		\and
		\#(p \in \Sigma) = 2
		\and
		\#(e +_b f) = \#(e) + \#(f)\\
		\#(e\cdot f) = \#(e) + \#(f) 
		\and
		\#(e^{(b)}) = \#(e)
	\end{mathpar}
	We will show that \(|\langle e \rangle_{\coalg E}| \le \#(e)\) for all \(e \in \Exp\), by induction on the construction of \(e\).

	Observe that if \(e = b\subseteq A\) or \(e = p \in \Sigma\), then \(|\langle e \rangle_{\coalg E}| = \#(e)\) by definition.
	This handles the base case.

	For the inductive step, assume \(|\langle e \rangle_{\coalg E}| \le \#(e)\) and \(|\langle f \rangle_{\coalg E}| \le \#(f)\), and let \(b \subseteq A\).
	Every syntactic derivative of \(e +_b f\) is a derivative of either \(e\) or \(f\), so immediately we obtain
	\[
		|\langle e +_b f \rangle_{\coalg E}| \le |\langle e \rangle_{\coalg E}| + |\langle f \rangle_{\coalg E}| \le \#(e) + \#(f) = \#(e +_b f).	
	\]
	Similarly, every derivative of \(e\cdot f\) is either of the form \(e' \cdot f\) for some derivative \(e'\) of \(e\), or is a derivative of \(f\).
	Hence, 
	\[
		|\langle e \cdot f \rangle_{\coalg E}| \le |\langle e \rangle_{\coalg E} \times \{f\}| + |\langle f \rangle_{\coalg E}| \le \#(e) + \#(f) = \#(e \cdot f).	
	\]
	Finally, every derivative of \(e^{(b)}\) is of the form \(e' \cdot e^{(b)}\) for some derivative \(e'\) of \(e\).
	These are in one-to-one correspondence with the derivatives of \(e\), so \(|\langle e^{(b)} \rangle_{\coalg E}| \le |\langle e \rangle_{\coalg E}| \le \#(e) = \#(e^{(b)})\).
\end{proof}

It follows from this lemma and \cref{prop:existence} that \(\coeq W\) is locally finite as well: indeed, if \(t = \sem e\), then \(\langle t \rangle_{\Z}\) is a subatomaton of the image of \(\langle e \rangle_{\coalg E}\) under \(!_{\coalg E}\) (in fact, the two are equal).
Thus, since \(\langle e \rangle_{\coalg E}\) is finite, so must \(\langle t \rangle_{\coalg E}\) be.

Now, we know that every finite automaton \(\X = (X, \delta)\) gives rise to a Salomaa system of left-affine equations\[
	S(\X) = \{x_i = e_{i1}\cdot x_1 +_{b_{i1}} \cdots +_{b_{i(n-1)}} e_{in}\cdot x_n +_{b_n} c_i \mid i \in I\},
\]
where \(X = \{x_i \mid i \in I\}\) is treated as a set of indeterminates, and
\[
\begin{aligned}
	e_{ij} = \bigplus_{x_i \trans{a|p_a} x_j} p_a,
\end{aligned}
 \quad\text{ where }\quad
 \begin{aligned}
	c_{i} &= \{a \in A \mid x_i \Rightarrow a\},\\
	b_{ij} &= \{a \in A \mid x_i \trans{a|p} x_j\}, \text{ and }\\
	X &= \{x_i \mid i \le n\}.
\end{aligned}
\]
By \cref{lem:Exp is locally-finite}, every expression \(e \in \Exp\) gives rise to a finite subautomaton \(\langle e \rangle_{\coalg E}\) of \(\coalg E\).
By the fundamental theorem, the inclusion map \(\langle e \rangle_{\coalg E} \hookrightarrow \coalg E\) is a solution to \(S(\langle e \rangle_{\coalg E})\).
By the uniqueness axiom, this inclusion map is the unique solution to \(S(\langle e \rangle_{\coalg E})\) up to \(\equiv_0\).
This shows that whenever two automata are isomorphic, \(\langle e\rangle_{\coalg E} \cong \langle f\rangle_{\coalg E}\), we have \(e \equiv_0 f\), since \(S(\langle e\rangle_{\coalg E})\) and \(S(\langle f\rangle_{\coalg E})\) are the same up to a renaming of variables.
The following much stronger statement can be shown, which we use to prove completeness.

\begin{lemma}\label{lem:basically completeness}
	Let \(e, f \in \Exp\), and assume the uniqueness axiom for \(\equiv_0\).
	If \(e\) and \(f\) are bisimilar, then \(e \equiv_0 f\).
\end{lemma}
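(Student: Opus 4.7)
The plan is to turn the bisimulation witnessing \(e\) and \(f\)'s equivalence into a Salomaa system of left-affine equations that admits both \(e\) and \(f\) (suitably packaged) as solutions, and then invoke the uniqueness axiom. First I would use \cref{lem:Exp is locally-finite} to observe that the subautomaton \(\langle e\rangle_{\coalg E} \cup \langle f\rangle_{\coalg E}\) of \(\coalg E\) is finite. Since \(e\) and \(f\) are bisimilar, there is a bisimulation \(R\) containing \((e,f)\) whose restriction \(R'\) to this finite set is still a bisimulation, and \(R'\) is finite.

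Next I would index a system of left-affine equations by the variables \(\{ x_{(e',f')} \mid (e',f') \in R'\}\). For each pair \((e',f') \in R'\) and each atom \(a \in A\), the bisimulation conditions give three mutually exclusive cases: both reject, both accept, or both transition with the \emph{same} action label \(p\) to a pair \((e'',f'') \in R'\). Accordingly, set
\begin{mathpar}
    c_{(e',f')} = \{a \in A \mid e' \Rightarrow a\}
    \and
    b_{(e',f'),(e'',f'')} = \{a \in A \mid (e',f') \trans{a\mid p}_{R'} (e'',f'')\}
    \and
    g_{(e',f'),(e'',f'')} = \bigplus_{(e',f') \trans{a\mid p_a}_{R'} (e'',f'')} p_a
\end{mathpar}
and form the corresponding system of equations. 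Crucially, each coefficient \(g_{(e',f'),(e'',f'')}\) is a generalised guarded union of action symbols, each of which is productive, so \(E(g_{(e',f'),(e'',f'')}) \equiv_0 0\); hence the system is Salomaa.

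I would then verify, using the fundamental theorem of GKAT applied to every pair in \(R'\), that the two tuples \((e')_{(e',f') \in R'}\) and \((f')_{(e',f') \in R'}\) are both solutions of this system in \(\Exp/\mathord{\equiv_0}\): the derivatives of \(e'\) assemble into exactly the right-hand side prescribed by the equation for \(x_{(e',f')}\), using that the atomic partition given by \(R'\) coincides with the partition arising from the transition structure of \(e'\) (and likewise for \(f'\)). By the uniqueness axiom for \(\equiv_0\), these two solutions are componentwise \(\equiv_0\)-equivalent; reading off the entry indexed by \((e,f)\) yields \(e \equiv_0 f\).

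The main obstacle is the bookkeeping in the penultimate step: one must show carefully, for every \((e',f') \in R'\), that the generalised guarded union over \(R'\)-transitions matches (up to \(\equiv_0\)) the guarded union \(D(e')\) from the fundamental theorem, and that the Boolean partition of \(A\) into the \(b_{(e',f'),(e'',f'')}\) and \(c_{(e',f')}\) exhausts the cases of \(e'\)'s transition map. This follows because \(R'\) is a bisimulation, but the reordering of the generalised guarded union requires the well-definedness of \(\bigplus\) up to \(\equiv_0\) recalled after the fundamental theorem.
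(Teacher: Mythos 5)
Your proposal is correct and follows essentially the same route as the paper: the paper likewise equips the (finite, restricted) bisimulation $R$ with a \acro{GKAT}-automaton structure, forms the induced Salomaa system $S(\R)$, shows that the two projections $(e',f')\mapsto e'$ and $(e',f')\mapsto f'$ are both solutions in $\Exp/{\equiv_0}$ via the fundamental theorem, and concludes by the uniqueness axiom. The regrouping bookkeeping you flag (collapsing distinct pairs with the same first component using U5 and the well-definedness of $\bigplus$) is exactly the step the paper carries out explicitly with its $[j]$-classes.
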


\begin{proof}
	We argue in a similar manner to the isomorphism case.
	Let \(\X = \langle e \rangle_{\coalg E}\) and \(\Y = \langle f\rangle_{\coalg E}\), and \(R \subseteq X \times Y\) be a bisimulation relating \(e\) and \(f\).
	We equip \(R\) with a \acro{GKAT}-automaton structure \(\R = (R, \delta^\R)\) by setting\[
		\delta^\R((x,y))(a) = \begin{cases}
			n &\text{if \(\delta^\X(x)(a) = \delta^\Y(y)(a) = n \in 2\)},\\
			(x',y') &\text{if \(\delta^\X(x)(a) = x'\) and \(\delta^\Y(y)(a) = y'\)}.
		\end{cases}
	\]
	Since \(R\) is a bisimulation, this is well-defined, and furthermore the projection maps \(R \xrightarrow{\pi_1} X\) and \(R \xrightarrow{\pi_2} Y\) are \acro{GKAT}-automaton homomorphisms.
	Consider the Salomaa system of equations \(S(\R)\), as well as the maps \(\phi_e,\phi_f : R \to \Exp\) defined by \(\phi_e(x)=x\) and \(\phi_f(y) = y\).
	We argue that \(\phi_e\) and \(\phi_f\) are solutions to \(S(\R)\), and conclude from the uniqueness axiom that \(x \equiv_0 y\) for any \((x,y) \in R\).
	In particular, \(e \equiv_0 f\).

	To see that \(\phi_e\) is a solution to \(S(\coalg R)\), let \(|R| = k\) and consider an equation \[
		(x_i, y_i) = e_{i1} \cdot (x_1,y_1) +_{b_{i1}} \cdots +_{b_{i(k-1)}} e_{ik} \cdot (x_k,y_k) +_{b_{ik}} c_i
	\]
	in \(S(\R)\).
	The map \(\phi_e\) takes this to the equation
	\[
		x_i = e_{i1} \cdot x_1 +_{b_{i1}} \cdots +_{b_{i(k-1)}} e_{ik} \cdot x_k +_{b_{ik}} c_i.
	\]
	Now, where \([j] = \{l \mid x_{l} = x_j\} = \{[j]_1, \dots, [j]_m\}\), \(b_{i[j]} = b_{i[j]_m}\), and 
	\[
		g_{[j]} := e_{i[j]_1} +_{b_{i[j]_1}} e_{i[j]_2} +_{b_{i[j]_2}} \cdots +_{b_{i[j]_{m-1}}} e_{i[j]_m},
	\]
	we see that the right-hand side is \(\equiv_0\)-equivalent to
	\begin{align*}
		&e_{i1} \cdot x_{[1]} +_{b_{i1}} \cdots +_{b_{i(k-1)}} e_{ik} \cdot x_{[k]} +_{b_{ik}} c_i\\
		&\equiv_0 \left(e_{i1} \cdot x_{[1]} +_{b_{i1}} \cdots +_{b_{i[1]_{m-1}}} e_{i[1]_m} \cdot x_{[1]} \right)
		+_{b_{i[1]}} \cdots +_{b_{i[k]}} c_{i}\\
		&\equiv_0 g_{i[1]} \cdot x_{[1]} +_{b_{i[1]}} \cdots +_{b_{i[k]}} g_{i[k]} \cdot x_{i[k]} +_{b_{i[k]}} c_{i}.
	\end{align*}
	The final expression is precisely the \(x_{[i]}\)'th equation in \(S(\X)\), since \(x_{[i]} \trans{a|p} x_{[j]}\) if and only if \((x_{[i]}, y) \trans{a|p} (x_{[j]}, y')\) for some \(y,y' \in Y\) such that \((x_{[i]}, y),(x_{[j]}, y') \in R\).
	Since \(\X \hookrightarrow \coalg E\) is a solution to \(S(\X)\), 
	\[
		x_{[i]} \equiv_0 g_{i[1]} \cdot x_{[1]} +_{b_{i[1]}} \cdots +_{b_{i[k]}} g_{i[k]} \cdot x_{i[k]} +_{b_{i[k]}} c_{i}.
	\]
	Since \(i\) was arbitrary, \(\phi_e\) is a solution to \(S(\R)\).
	Similarly, the same holds for \(\phi_f\).
	Thus, by the uniqueness axiom, \(e \equiv_0 f\).
\end{proof}

\restatecompletenessforgkatminus*
\begin{proof}
	From \cref{lem:final bisimilarity} and \cref{prop:eta is a bialgebra}, we see that \(\sem e = \sem f\) if and only if \(e\) and \(f\) are bisimilar.
	Thus, by \cref{lem:basically completeness}, \(e \equiv_0 f\).
\end{proof}

% *** SECTION *** % 
\section{Detailed proofs for \cref{sec:a_completeness_theorem_for_gkat}: Completeness w.r.t. \(\equiv\)}

The normalized semantics can be connected to \(\equiv\) with relative ease, allowing us to recover the \emph{partial completeness} result from~\cite{gkat}, albeit with a different proof.

\begin{lemma}%
\label{lem:partial-completeness}
Let \(e \in \Exp\).
If \(\sem{e}\) is dead, then \(e \equiv 0\).
\end{lemma}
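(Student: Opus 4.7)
The plan is to route through the auxiliary fact that \emph{for every dead tree $t \in Z$, we have $t \cdot \sem{0} = t$}, and then invoke the already-established completeness of $\sem-$ with respect to $\equiv_0$.

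First, I would prove the auxiliary fact by coinduction. Observe that if $t$ is dead, then so is every derivative $\partial_a t$ (by definition of $\dom$, since deadness on $t$ restricts to deadness on all subtrees). Now consider the relation
\[
  R = \{(t \cdot \sem{0},\, t) \mid t \in Z \text{ is dead}\} \cup \Delta_Z.
\]
For any $(t \cdot \sem 0, t) \in R$ and $a \in A$, since $t(a) \neq 1$ the behavioral differential equation for $\cdot$ gives $(t \cdot \sem 0)(a) = t(a)$, establishing clause (1) of \cref{lem:tree concrete bisimilarity}. For clause (2), $\partial_a(t \cdot \sem 0)$ is defined iff $t(a) \in \Sigma$ iff $\partial_a t$ is defined, and in that case $\partial_a(t \cdot \sem 0) = \partial_a t \cdot \sem 0$, while $\partial_a t$ is dead, so the pair lies in $R$ again. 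Hence $R$ is a bisimulation and \cref{lem:Z is simple} yields $t \cdot \sem 0 = t$ whenever $t$ is dead.

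Next I would apply this to the hypothesis. Since $\sem e$ is dead, the auxiliary fact gives $\sem{e \cdot 0} = \sem{e} \cdot \sem{0} = \sem{e}$. By completeness of $\sem-$ with respect to $\equiv_0$ (\cref{cor:completeness for GKAT^-}), we conclude $e \cdot 0 \equiv_0 e$. Since $\equiv_0$ is contained in $\equiv$ (the former is generated by a subset of the axioms that generate the latter), this upgrades to $e \cdot 0 \equiv e$. Finally, axiom S3 gives $e \cdot 0 \equiv 0$, and transitivity of $\equiv$ yields $e \equiv 0$, as required.

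The only nontrivial step is the coinductive auxiliary lemma, and even that is routine once one notices that deadness is hereditary under derivatives; the role of S3 is confined to the very last step, which is precisely what makes the lemma a statement about $\equiv$ rather than $\equiv_0$.
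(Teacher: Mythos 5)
Your proof is correct and takes essentially the same route as the paper's: establish \(t \cdot \sem{0} = t\) for dead \(t\) via the bisimulation \(\{(t \cdot \sem{0},\, t) \mid t \in Z \text{ dead}\}\), deduce \(\sem{e \cdot 0} = \sem{e}\), apply completeness for \(\equiv_0\) to get \(e \cdot 0 \equiv_0 e\), and finish with S3 and \(\equiv_0\ \subseteq\ \equiv\). The only difference is that you spell out the bisimulation check (including the harmless but unnecessary addition of \(\Delta_Z\)), which the paper dismisses as a straightforward verification.
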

\begin{proof}
A straightforward check verifies that
\[
	R = \{ (t \cdot 0, t) \mid \text{\(t \in Z\) is dead} \}
\]
is a bisimulation.
From this, we know that \(\sem{e} \cdot 0 = \sem{e}\), and therefore that \(\sem{e \cdot 0} = \sem{e}\).
By completeness of \(\equiv_0\) w.r.t. \(\sem{-}\), we then know that \(e \cdot 0 \equiv_0 e\).
Since \(e \cdot 0 \equiv 0\) and \(\equiv_0\) is contained in \(\equiv\), we can conclude that \(e \equiv 0\).
\end{proof}

Interestingly, the result above does not depend on the uniqueness axiom.
The following technical lemma describes the interaction between normalization and the other operators in trees.

\begin{lemma}%
\label{lem:normalization-vs-operators}
If \(s, t, r \in Z\) and \(b \in \operatorname{BExp}\), then

\vspace{-1em}\begin{mathpar}
(s +_b t)^\wedge = (s^\wedge +_b t^\wedge)^\wedge
\and
(s \cdot t)^\wedge = (s^\wedge \cdot t^\wedge)^\wedge
\and
(t \cdot 0)^\wedge = 0 = 0^\wedge
\and
(t^{(b)})^\wedge = ((t^\wedge)^{(b)})^\wedge
\end{mathpar}
Furthermore, if \(t^\wedge = (r \cdot t +_b s)^\wedge\) and \(r\) is such that \(r(a) \neq 1\) for all \(a \in A\), then \(t^\wedge = (r^{(b)} \cdot s)^\wedge\).
\end{lemma}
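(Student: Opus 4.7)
The first four identities are proved by coinduction, appealing to \cref{lem:Z is simple} and \cref{lem:tree concrete bisimilarity}. A useful preliminary observation is that $t \in Z$ is dead if and only if $t^\wedge$ is dead: normalization only replaces $\Sigma$-labeled internal nodes by $0$ and preserves the values at $1$-leaves, so the $1$-leaves of $t$ and $t^\wedge$ coincide. Combined with the built-in identity $\partial_a(t^\wedge) = (\partial_a t)^\wedge$ and the easy fact that $s \cdot t$ is dead iff $s$ is dead or $t$ is dead, each identity reduces to a case analysis on whether $s(a)$ and $t(a)$ are $0$, $1$, or in $\Sigma$.

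For concreteness, for the first identity I would use the candidate relation $\{((s +_b t)^\wedge, (s^\wedge +_b t^\wedge)^\wedge) : s, t \in Z, b \subseteq A\} \cup \Delta_Z$. At an atom $a \in b$, both sides evaluate to $0$ exactly when $s(a) \in \Sigma$ and $\partial_a s$ is dead (equivalently, $\partial_a s^\wedge$ is dead, by the deadness lemma applied to $\partial_a s$), and otherwise to $s(a) = s^\wedge(a)$; the derivatives in turn reduce to $(\partial_a s)^\wedge = \partial_a s^\wedge$, so the relation is closed under derivation. The $a \notin b$ case is symmetric, and the sequential composition and guarded-loop cases are analogous modulo an extra case split on whether $s(a) = 1$. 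The identity $(t \cdot 0)^\wedge = 0 = 0^\wedge$ is separate and simpler: a short coinduction shows that $t \cdot 0$ is dead for every $t \in Z$, and any dead tree normalizes to the all-$0$ tree directly from the definition of $(\cdot)^\wedge$.

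For the final clause, the plan has three steps. First, using identities (1) and (2) together with the idempotence $(u^\wedge)^\wedge = u^\wedge$, one derives $(r \cdot t +_b s)^\wedge = ((r \cdot t)^\wedge +_b s^\wedge)^\wedge = ((r^\wedge \cdot t^\wedge)^\wedge +_b s^\wedge)^\wedge = (r \cdot t^\wedge +_b s)^\wedge$, so the hypothesis collapses to $u = (r \cdot u +_b s)^\wedge$ for $u := t^\wedge$. Second, a direct coinductive check shows $r^{(b)} \cdot s = r \cdot (r^{(b)} \cdot s) +_b s$ as trees, so applying the same reasoning $v := (r^{(b)} \cdot s)^\wedge$ also satisfies $v = (r \cdot v +_b s)^\wedge$. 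Hence both $u$ and $v$ are normalized fixed points of $F(x) = (r \cdot x +_b s)^\wedge$.

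The third step, and the main obstacle, is showing that $F$ has a unique fixed point among normalized trees. The key structural fact, enabled by the assumption $r(a) \neq 1$, is that the deadness of any such fixed point depends only on $r$ and $s$: a fixed point $u$ has a $1$-leaf iff some finite path built from internal transitions of $r$ (at atoms in $b$) and an $s$-jump (at an atom outside $b$) reaches a $1$-leaf of $s$. Once any two fixed points $u_1, u_2$ are known to have matching deadness, the relation $\{(u_1, u_2)\} \cup \{((r' \cdot u_1)^\wedge, (r' \cdot u_2)^\wedge) : r' \in Z\} \cup \{(\partial_w u_1, \partial_w u_2) : w \in A^+\} \cup \Delta_Z$ can be shown to be a bisimulation, using the "dead iff dead" preservation and the observation that $s \cdot t$ is dead iff $s$ or $t$ is dead to match the deadness checks inside $F$; \cref{lem:Z is simple} then yields $u_1 = u_2$. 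The hard part is verifying that derivatives do not escape this relation, which is what forces the inclusion of all intermediate shapes $(r' \cdot u_i)^\wedge$ and all iterated derivatives of $u_i$. Once uniqueness is in place, $u = v = (r^{(b)} \cdot s)^\wedge$ gives the conclusion.
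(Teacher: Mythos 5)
Your overall strategy---explicit coinductive relations, with everything reduced to how normalization interacts with deadness---is exactly the ``straightforward coinductive argument'' the paper intends, and your preliminary observations (that \(t\) is dead iff \(t^\wedge\) is dead, and that \(t\) and \(t^\wedge\) have the same leaves labelled \(1\)) are correct and do the work for the first and third identities. The problem is the auxiliary fact you lean on everywhere else: \emph{``\(s \cdot t\) is dead iff \(s\) is dead or \(t\) is dead''} is false. Only the right-to-left implication holds. Take \(A = \{a, a'\}\), \(s = \sem{p}\cdot\sem{a}\) and \(t = \sem{a'}\): both are live, but every \(1\)-leaf of \(s\) ends in \(a\) and gets grafted with the \(a\)-branch of \(t\), which is \(0\), so \(s \cdot t = \sem{p\cdot a \cdot a'}\) is dead. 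The correct statement is finer: \(s \cdot t\) is live iff \(s\) has a \(1\)-leaf \(wa\) such that the \(a\)-branch of \(t\) is live (i.e.\ \(t(a) = 1\), or \(t(a) \in \Sigma\) and \(\partial_a t\) is live). This matters precisely where you invoke the fact: in the second identity you must show \(\partial_a s \cdot t\) is dead iff \((\partial_a s)^\wedge \cdot t^\wedge\) is dead; in the fourth you must compare the deadness of \(\partial_a t \cdot t^{(b)}\) with that of \((\partial_a t)^\wedge \cdot (t^\wedge)^{(b)}\); and in the uniqueness argument for \(F\) the deadness of a subtree of \(r \cdot u\) depends on \emph{which} branches of \(u\) sit below \(1\)-leaves of the relevant derivative of \(r\), not on whether \(u\) is globally dead. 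All of these are repaired by the finer characterization together with your observation that \(u\) and \(u^\wedge\) have the same \(1\)-leaves, so the gap is fixable, but as written the justification is unsound (and in the fourth identity it would lead you to assert incorrect intermediate liveness claims).

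Two further remarks on the last clause, which is the genuinely non-routine part. First, ``matching deadness'' of two normalized fixed points of \(F\) is not enough; you need that they have the same \emph{set} of \(1\)-leaves. Your reachability description does deliver this, and the clean way to see it is that \(r(a) \neq 1\) for all \(a\) forces every \(1\)-leaf of \(r\) to have length at least two, so each unfolding of \(x = (r \cdot x +_b s)^\wedge\) pushes the recursive occurrence of \(x\) strictly deeper; hence whether a fixed finite word is a \(1\)-leaf is determined after finitely many unfoldings, independently of which fixed point one starts from. Second, once the \(1\)-leaf sets (and hence all subtree-deadness checks) agree, the relation you propose that contains all pairs \((\partial_w u_1, \partial_w u_2)\) is a disguised induction on \(|w|\): verifying the first bisimulation condition for those pairs is literally the claim \(u_1(wa) = u_2(wa)\). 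It is cleaner, and avoids the ``derivatives escaping the relation'' worry you flag, to prove \(u_1(w) = u_2(w)\) directly by induction on \(|w|\) using the same depth-increase observation. With those repairs the argument goes through and establishes the lemma.
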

\begin{proof}
In all cases, a straightforward coinductive argument suffices.
\end{proof}

\restatehatWsatisfiesGKAT*
\begin{proof}
	We proceed by induction on \(\equiv\).
	In all base cases except S3, we know that \(e \equiv_0 f\); by \cref{thm:Z satisfies GKAT^-}, we then know that \(\sem{e} = \sem{f}\), and hence \(\sem{e}^\wedge = \sem{f}^\wedge\).
	For S3, we have \(\sem{e \cdot 0}^\wedge = \sem{0}^\wedge\) by the third equality in \cref{lem:normalization-vs-operators}.

	The inductive cases for reflexivity, symmetry and transitivity are straightforward.
	The case for congruence w.r.t.\ the operators follows by the equalities in \cref{lem:normalization-vs-operators}.
	
	Finally, in the inductive step for W3, let \(e,f,g \in \Exp\) and \(b \in \operatorname{BExp}\) with \(E(f) \equiv 0\) and \(e \equiv f \cdot e +_b g\).
	By induction, \(\sem{E(g)}^\wedge = 0\) and \(\sem{e}^\wedge = \sem{g \cdot e +_b h}^\wedge\).
	First, note that \(\sem{E(g)}^\wedge = \sem{E(g)}\).
	By an argument similar to the one in \cref{thm:Z satisfies GKAT^-}, we can conclude that \(\sem{g}(a) \neq 1\) for all \(a \in A\).
	Applying the final implication in \cref{lem:normalization-vs-operators}, we can conclude that \(\sem{e}^\wedge = \sem{g^{(b)} \cdot h}^\wedge = \sem{f}^\wedge\).
\end{proof}

To prove that \(\coeq{W}\) is closed under normalization (this is \cref{lem:closure under normalisation}), we prove something more general.
When \(\coeq{P} \subseteq Z\) and \(t \in Z\), we write \(t \filter \coeq{P}\) for the \emph{pruning} of \(t\) by \(\coeq{P}\), which removes all subtrees of \(t\) that are in \(\coeq{P}\).
This operator is defined coinductively.

\vspace{-1em}\begin{mathpar}
    (t \filter \coeq{P})(a) =
        \begin{cases}
        0 & t(a) \in \Sigma \wedge \partial_a t \in \coeq{P} \\
        t(a) & \text{otherwise}
        \end{cases}
        \and
    \partial_a (t \filter \coeq{P}) = (\partial_a t) \filter \coeq{P}
\end{mathpar}
Clearly, if \(\coeq{P}\) is the coequation of dead trees, then \(t^\wedge = t \filter \coeq{P}\).
We now claim that if \(t \in \coeq{W}\) and \(\coeq{P} \subseteq Z\), then \(t \filter \coeq{P} \in \coeq{W}\).

\begin{lemma}%
\label{lem:concatenation-vs-filter}
Let \(t, s \in Z\) and \(\coeq{P} \subseteq Z\) be a coequation.
Then
\[
    (s \cdot t) \filter \coeq{P} = (s \filter \coeq{P}_t) \cdot (t \filter \coeq{P})
    \qquad\text{where}\qquad
    \coeq{P}_t = \{ r \in Z \mid r \cdot t \in \coeq{P} \}
\]
\end{lemma}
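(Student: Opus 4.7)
The plan is to prove the identity by coinduction, so the first step is to write down a relation whose graph will be shown to be a bisimulation on $\coalg Z$, and then apply \cref{lem:Z is simple}. Concretely, I would take
\[
	R = \{ ((s \cdot t) \filter \coeq{P}, (s \filter \coeq{P}_t) \cdot (t \filter \coeq{P})) \mid s, t \in Z \} \cup \Delta_Z,
\]
with $\Delta_Z$ thrown in to absorb derivatives that land on the same tree on both sides. The hypothesis of \cref{lem:tree concrete bisimilarity} then boils down to two verifications for every $a \in A$: that the two trees agree pointwise at $a$, and that whenever one of them is defined at derivatives along $a$, so is the other, and the resulting pair lives in $R$.

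For the pointwise equality I would do a case split on $s(a) \in 2 + \Sigma$, using the BDE defining $\cdot$ together with the coinductive clauses defining $\filter$. The three cases are: if $s(a) = 1$, both sides reduce to $(t \filter \coeq P)(a)$, since $\partial_a(s\cdot t) = \partial_a t$ and $(s\filter \coeq P_t)(a) = 1$; if $s(a) = 0$, both sides are $0$; and if $s(a) = p \in \Sigma$, the crucial equivalence is $\partial_a s \cdot t \in \coeq P \iff \partial_a s \in \coeq P_t$, which is exactly the definition of $\coeq P_t$, so both sides are $0$ when $\partial_a s \in \coeq P_t$ and $p$ otherwise.

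For the derivatives, the only non-trivial cases are those in which the value at $a$ lies in $\Sigma$. If $s(a) = 1$ and $t(a) \in \Sigma$ with $\partial_a t \notin \coeq P$, then both sides have $\partial_a$ equal to $\partial_a t \filter \coeq P$, and the pair is in $\Delta_Z$. If $s(a) = p$ and $\partial_a s \notin \coeq P_t$, then the left derivative is $(\partial_a s \cdot t) \filter \coeq P$ and the right is $(\partial_a s \filter \coeq P_t) \cdot (t \filter \coeq P)$, which is the generic pattern in $R$ with $s$ replaced by $\partial_a s$. Hence $R$ is a bisimulation, and \cref{lem:Z is simple} forces the two trees to coincide.

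I do not anticipate a real obstacle here: the argument is entirely a disciplined case analysis on $s(a)$, and the role of the auxiliary coequation $\coeq P_t$ is precisely to make the pruning decision on $s$-side match the pruning decision on the product-side. The only thing to be careful about is the direction of equivalence $\partial_a s \in \coeq P_t \iff \partial_a s \cdot t \in \coeq P$, which is true definitionally but needs to be invoked at the right moment; beyond that, closing $R$ under the diagonal is what takes care of the recursive call terminating after the derivative is taken along the $t$-side.
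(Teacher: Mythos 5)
Your proposal is correct and follows essentially the same route as the paper's proof: the same relation $R$ (the paper additionally quantifies over $\coeq P$ in the set-builder, which is inessential), the same case analysis on $s(a)$ using the definitional equivalence $\partial_a s \cdot t \in \coeq P \iff \partial_a s \in \coeq P_t$, and the same use of $\Delta_Z$ to close off the case where the derivative passes to the $t$-side.
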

\begin{proof}
We claim that
\[
    R = \{ ((s \cdot t) \filter \coeq{P}, (s \filter \coeq{P}_t) \cdot (t \filter \coeq{P})) \mid t, s \in Z, \coeq{P} \subseteq Z \} \cup \Delta_Z
\]
is a bisimulation.
As before, we need only check the pairs in the first part, since the diagonal is already a bisimulation.

For the initial conditions, let \(a \in A\).
There are several cases to consider.
\begin{itemize}
    \item
    If \((s \cdot t)(a) \in \Sigma\) and \(\partial_a (s \cdot t) \in \coeq{P}\), then \(((s \cdot t) \filter \coeq{P})(a) = 0\).
    We should prove that \(((s \filter \coeq{P}_t) \cdot (t \filter \coeq{P}))(a) = 0\).
    \begin{itemize}
        \item
        If \(s(a) \in \Sigma\), then \(\partial_a s \cdot t = \partial_a (s \cdot t) \in \coeq{P}\), and therefore \(\partial_a s \in \coeq{P}_t\).
        Thus, \((s \filter \coeq{P}_t)(a) = 0\).

        \item
        If \(s(a) = 1\) and \(t(a) \in \Sigma\), then \(\partial_a t = \partial_a (s \cdot t) \in \coeq{P}\).
        Thus, \((s \filter \coeq{P}_t)(a) = 1\) and \((t \filter \coeq{P})(a) = 0\).
    \end{itemize}
    In both of these cases, \(((s \filter \coeq{P}_t) \cdot (t \filter \coeq{P}))(a) = 0\).

    \item
    Otherwise, \(((s \cdot t) \filter \coeq{P})(a) = (s \cdot t)(a)\).
    We should prove that \(((s \filter \coeq{P}_t) \cdot (t \filter \coeq{P}))(a) = (s \cdot t)(a)\).
    \begin{itemize}
        \item
        If \(s(a) = 0\), then \((s \cdot t)(a) = 0 = (s \filter \coeq{P}_t)(a) = ((s \filter \coeq{P}_t) \cdot (t \filter \coeq{P}))(a)\).

        \item
        If \(s(a) = 1\), then \((s \filter \coeq{P}_t)(a) = 1\) and \((s \cdot t)(a) = t(a)\).
        It remains to prove that \((t \filter \coeq{P})(a) = t(a)\).
		On the one hand, if \(t(a) \in 2\), then \((t \filter \coeq{P})(a) = t(a)\) immediately.
		On the other hand, if \(t(a) \in \Sigma\), then \(\partial_a t = \partial_a (s \cdot t) \not\in \coeq{P}\).
		Thus, \((t \filter \coeq{P})(a) = t(a)\).

        \item
        If \(s(a) \in \Sigma\), then \(\partial_a s \cdot t = \partial_a (s \cdot t) \not\in \coeq{P}\), thus \(\partial_a s \not\in \coeq{P}_t\).
        We then derive
        \[
            ((s \filter \coeq{P}_t) \cdot (t \filter \coeq{P}))(a)
                = (s \filter \coeq{P}_t)(a)
                = s(a)
                = (s \cdot t)(a)
        \]
    \end{itemize}
\end{itemize}
For the coinductive step, let \(a \in A\) is such that \(((s \cdot t) \filter \coeq{P})(a) = ((s \filter \coeq{P}_t) \cdot (t \filter \coeq{P}))(a) \in \Sigma\).
There are two cases.
\begin{itemize}
    \item
    First, if \(s(a) = 1\), then we derive
    \begin{align*}
        \partial_a ((s \cdot t) \filter \coeq{P})
            &= (\partial_a (s \cdot t)) \filter \coeq{P} \\
            &= \partial_a t \filter \coeq{P} \\
            &\mathrel{R} \partial_a t \filter \coeq{P} \\
            &= \partial_a ((s \filter \coeq{P}_t) \cdot (t \filter \coeq{P}))
    \end{align*}

    \item
    Otherwise, if \(s(a) \in \Sigma\), then
    \begin{align*}
        \partial_a ((s \cdot t) \filter \coeq{P})
            &= (\partial_a (s \cdot t)) \filter \coeq{P} \\
            &= (\partial_a s \cdot t) \filter \coeq{P} \\
            &\mathrel{R} (\partial_a s \filter \coeq{P}_t) \cdot (t \filter \coeq{P}) \\
            &= (\partial_a (s \filter \coeq{P}_t)) \cdot (t \filter \coeq{P}) \\
            &= \partial_a ((s \filter \coeq{P}_t) \cdot (t \filter \coeq{P}))
            \qedhere
    \end{align*}
\end{itemize}
\end{proof}

\begin{lemma}%
\label{lem:continuation-vs-filter}
Let \(t, s \in Z\) and \(\coeq{P} \subseteq Z\).
Then
\[
    (s \rhd t) \filter \coeq{P} = (s \filter \coeq{P}^t) \rhd (t \filter \coeq{P}^t)
    \qquad\text{where}\qquad
    \coeq{P}^t = \{ r \in Z \mid r \rhd t \in \coeq{P} \}
\]
\end{lemma}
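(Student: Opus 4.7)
The plan is to prove this identity coinductively, in direct analogy with \cref{lem:concatenation-vs-filter}. I would define the candidate bisimulation
\[
    R = \{((s \rhd t) \filter \coeq{P}, (s \filter \coeq{P}^t) \rhd (t \filter \coeq{P}^t)) \mid s, t \in Z\} \cup \Delta_Z
\]
and verify that it satisfies the two clauses of \cref{lem:tree concrete bisimilarity}; the equality then follows from coinduction (\cref{lem:Z is simple}).

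For the initial conditions, I would fix $a \in A$ and split on the value of $s(a)$. If $s(a) = 0$, then $(s \rhd t)(a) = 0$, and on the right-hand side $(s \filter \coeq{P}^t)(a) = 0 \ne 1$, so the whole expression also evaluates to $0$. If $s(a) = 1$, then $(s \rhd t)(a) = t(a)$ and $(s \filter \coeq{P}^t)(a) = 1$, so the right-hand side evaluates to $(t \filter \coeq{P}^t)(a)$; agreement then follows from the equivalence $\partial_a t \rhd t \in \coeq{P} \iff \partial_a t \in \coeq{P}^t$, which is exactly the definition of $\coeq{P}^t$. Finally, if $s(a) \in \Sigma$, the left-hand side prunes to $0$ precisely when $\partial_a (s \rhd t) = \partial_a s \rhd t \in \coeq{P}$, i.e., when $\partial_a s \in \coeq{P}^t$, which is precisely when $(s \filter \coeq{P}^t)(a) = 0$; otherwise both sides yield $s(a)$.

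For the step equations, I would check the two subcases in which both sides take a value in $\Sigma$. Either $s(a) = 1$ and $t(a) \in \Sigma$ with $\partial_a t \not\in \coeq{P}^t$, in which case the two derivatives become $(\partial_a t \rhd t) \filter \coeq{P}$ and $(\partial_a t \filter \coeq{P}^t) \rhd (t \filter \coeq{P}^t)$; or $s(a) \in \Sigma$ with $\partial_a s \not\in \coeq{P}^t$, giving $(\partial_a s \rhd t) \filter \coeq{P}$ and $(\partial_a s \filter \coeq{P}^t) \rhd (t \filter \coeq{P}^t)$. Both forms fit the defining schema of $R$, with the first argument replaced but the continuation $t$ and the coequation $\coeq{P}$ preserved.

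The crux, and the only conceptually non-routine point, is the invariance of the second argument $t$ throughout the coinductive unfolding of $s \rhd t$. This is what makes it legitimate to define $\coeq{P}^t$ in terms of the single fixed $t$ rather than having it vary with each unfolding step; the analogous quantity $\coeq{P}_t$ in \cref{lem:concatenation-vs-filter} had to be defined relative to $t$ for the same reason, but there sequential composition discharges its second argument after the first has accepted, whereas $\rhd$ reinstates $t$ indefinitely. I do not anticipate any substantive obstacle beyond careful case bookkeeping.
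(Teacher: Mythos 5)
Your proposal is correct and follows essentially the same route as the paper's proof: the same candidate bisimulation (augmented with the diagonal), the same case analysis on the value of \(s(a)\) for the initial conditions using \(\partial_a(s \rhd t) = \partial_a t \rhd t\) when \(s(a)=1\) and \(= \partial_a s \rhd t\) otherwise, and the same two step-equation cases closing up in \(R\) because the second argument \(t\) and the coequation \(\coeq{P}\) are preserved. No gaps.
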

\begin{proof}
For the initial conditions, there are several cases.
\begin{itemize}
    \item
    If \((s \rhd t)(a) \in \Sigma\) and \(\partial_a (s \rhd t) \in \coeq{P}\), then \(((s \rhd t) \filter \coeq{P})(a) = 0\).
    We should prove \(((s \filter \coeq{P}^t) \rhd (t \filter \coeq{P}^t))(a) = 0\).
    \begin{itemize}
        \item
        If \(s(a) \in \Sigma\), then \(\partial_a s \rhd t = \partial_a (s \rhd t) \in \coeq{P}\), and therefore \(\partial_a s \in \coeq{P}^t\).
        Thus, \((s \filter \coeq{P}^t)(a) = 0\).

        \item
        If \(s(a) = 1\) and \(t(a) \in \Sigma\), then \(\partial_a t \rhd t \in \coeq{P}\), whence \(\partial_a t \in \coeq{P}^t\).
        Thus, \((s \filter \coeq{P}^t)(a) = 1\) and \((t \filter \coeq{P}^t)(a) = 0\).
    \end{itemize}
    In both of these cases, it follows that \(((s \filter \coeq{P}^t) \rhd (t \filter \coeq{P}^t))(a) = 0\).

    \item
    Otherwise, \(((s \rhd t) \filter \coeq{P})(a) = (s \rhd t)(a)\).
    We should prove that \(((s \filter \coeq{P}^t) \rhd (t \filter \coeq{P}^t))(a) = (s \rhd t)(a)\).
    \begin{itemize}
        \item
        If \(s(a) = 0\), then \((s \rhd t)(a) = 0 = (s \filter \coeq{P}^t)(a) = ((s \filter \coeq{P}^t) \cdot (t \filter \coeq{P}^t))(a)\).
        
        \item
        If \(s(a) = 1\), then \((s \rhd t)(a) = t(a)\) and \((s \filter \coeq{P}^t)(a) = 1\).
        It remains to prove that \((t \filter \coeq{P}^t)(a) = t(a)\).
        On the one hand, if \(t(a) \in 2\), then \((t \filter \coeq{P}^t)(a) = t(a)\) immediately.
		On the other hand, if \(t(a) \in \Sigma\), then \(\partial_a t \rhd t = \partial_a (s \rhd t) \not\in \coeq{P}\).
		In that case, \(\partial_a t \not\in \coeq{P}^t\) as well.
		But then \((t \filter \coeq{P}^t)(a) = t(a)\).

        \item
        If \(s(a) \in \Sigma\), then \(\partial_a s \rhd t = \partial_a (s \rhd t) \not\in \coeq{P}\).
        In that case, \(\partial_a s \not\in \coeq{P}^t\) as well.
        We then derive
        \[
            ((s \filter \coeq{P}^t) \rhd (t \filter \coeq{P}^t))(a)
                = (s \filter \coeq{P}^t)(a)
                = s(a)
                = (s \rhd t)(a)
        \]
    \end{itemize}
\end{itemize}
For the coinductive step, let \(a \in A\) such that \(((s \rhd t) \filter \coeq{P})(a) = ((s \filter \coeq{P}^t) \cdot (t \filter \coeq{P}^t))(a) \in \Sigma\).
There are two cases.
\begin{itemize}
    \item
    First, if \(s(a) = 1\), then we derive
    \begin{align*}
        \partial_a ((s \rhd t) \filter \coeq{P})
            &= (\partial_a (s \rhd t)) \filter \coeq{P} \\
            &= ((\partial_a t \rhd t) \filter \coeq{P}) \\
            &\mathrel{R} ((\partial_a t \filter \coeq{P}^t) \rhd (t \filter \coeq{P}^t)) \\
            &= ((\partial_a (t \filter \coeq{P}^t)) \rhd (t \filter \coeq{P}^t)) \\
            &= \partial_a ((s \filter \coeq{P}^t) \rhd (t \filter \coeq{P}^t))
    \end{align*}

    \item
    Otherwise, if \(s(a) \in \Sigma\), then
    \begin{align*}
        \partial_a ((s \rhd t) \filter \coeq{P})
            &= (\partial_a (s \rhd t)) \filter \coeq{P} \\
            &= ((\partial_a s \rhd t) \filter \coeq{P}) \\
            &\mathrel{R} ((\partial_a s \filter \coeq{P}^t) \rhd (t \filter \coeq{P}^t)) \\
            &= ((\partial_a (s \filter \coeq{P}^t)) \rhd (t \filter \coeq{P}^t)) \\
            &= \partial_a ((s \filter \coeq{P}^t) \rhd (t \filter \coeq{P}^t))
            \qedhere
    \end{align*}
\end{itemize}
\end{proof}

\begin{proposition}\label{prop:pruning is chill}
Let \(t \in \coeq{W}\).
Then for all \(\coeq{P} \subseteq Z\) it holds that \(t \filter \coeq{P} \in \coeq{W}\).
\end{proposition}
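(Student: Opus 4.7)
The plan is to proceed by rule induction on the derivation that $t \in \coeq{W}$, treating each of the four ways $t$ can enter $\coeq{W}$ (membership in $\coeq{D}$, or the three nesting rules) as an inductive case. The inductive hypothesis will say: for every tree $s$ shown to be in $\coeq{W}$ by a strictly smaller derivation, and every $\coeq{Q} \subseteq Z$, we have $s \filter \coeq{Q} \in \coeq{W}$. Crucially, $\coeq{P}$ will have to be allowed to vary along the induction, since Lemmas \ref{lem:concatenation-vs-filter} and \ref{lem:continuation-vs-filter} rewrite $(s \cdot t) \filter \coeq{P}$ and $(s \rhd t) \filter \coeq{P}$ in terms of pruning by the auxiliary sets $\coeq{P}_t$ and $\coeq{P}^t$.

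The base case is immediate: if $t = \sem{b}$ for some $b \subseteq A$, then $t(w) \in 2$ for all $w \in \dom(t)$, so the pruning rule never fires and $t \filter \coeq{P} = t \in \coeq{D} \subseteq \coeq{W}$. For the first nesting rule, with $t = s_1 \cdot s_2$ and $s_1, s_2 \in \coeq{W}$ by strictly smaller derivations, Lemma \ref{lem:concatenation-vs-filter} gives
\[
    (s_1 \cdot s_2) \filter \coeq{P} = (s_1 \filter \coeq{P}_{s_2}) \cdot (s_2 \filter \coeq{P}).
\]
By the inductive hypothesis applied with pruning sets $\coeq{P}_{s_2}$ and $\coeq{P}$ respectively, both factors lie in $\coeq{W}$, so closure under $\cdot$ finishes this case. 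The third nesting rule is handled identically, using Lemma \ref{lem:continuation-vs-filter} in place of Lemma \ref{lem:concatenation-vs-filter}.

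The remaining case is the second nesting rule, in which $t \in \coeq{W}$ because $\partial_a t \in \coeq{W}$ (with a strictly smaller derivation) for every $a$ with $t(a) \in \Sigma$. Here I would argue directly from the coinductive definition of $\filter$: whenever $a \in A$ satisfies $(t \filter \coeq{P})(a) \in \Sigma$, the definition forces $t(a) \in \Sigma$ and $\partial_a t \notin \coeq{P}$, and then
\[
    \partial_a (t \filter \coeq{P}) = (\partial_a t) \filter \coeq{P},
\]
which by the inductive hypothesis lies in $\coeq{W}$. Applying the second nesting rule to $t \filter \coeq{P}$ concludes the argument.

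The only subtlety — and the one place one has to be slightly careful — is to quantify the inductive hypothesis over all choices of $\coeq{P}$ from the outset, since the $\cdot$ and $\rhd$ cases produce sibling prunings by different sets ($\coeq{P}_{s_2}$ and $\coeq{P}$ for composition, $\coeq{P}^{s_2}$ on both sides for continuation). No real obstacle arises beyond this bookkeeping: once Lemmas \ref{lem:concatenation-vs-filter} and \ref{lem:continuation-vs-filter} are in hand, they do all the heavy lifting, and the second nesting rule case is just the observation that pruning can only turn leaves $\Sigma$-labels into $0$, never introduce new subtrees.
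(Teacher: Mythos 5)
Your proposal is correct and follows essentially the same route as the paper: induction on the construction of $\coeq{W}$, with the base case trivial, the composition and continuation cases discharged by Lemmas~\ref{lem:concatenation-vs-filter} and~\ref{lem:continuation-vs-filter} respectively, and the derivative-closure case handled directly from the coinductive definition of $\filter$. Your explicit remark that the inductive hypothesis must be quantified over all pruning sets is exactly the bookkeeping the paper's proof relies on implicitly when it applies the hypothesis with $\coeq{P}_r$ and $\coeq{P}^r$.
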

\begin{proof}
We proceed by induction on \(\coeq{W}\).
In the base, \(t \in \coeq{D}\), meaning \(t \filter \coeq{P} = t\).
For the inductive step, there are three cases.
\begin{itemize}
    \item
    If \(t \in \coeq{W}\) because \(\partial_a t \in \coeq{W}\) for all \(a \in A\) with \(t(a) \in \Sigma\), then by induction \(\partial_a (t \filter \coeq{P}) = \partial_a t \filter \coeq{P} \in \coeq{W}\) for all \(a \in A\) with \((t \filter \coeq{P})(a) \in \Sigma\).
    It then follows that \(t \filter \coeq{P} \in \coeq{W}\).

    \item
    If \(t \in \coeq{W}\) because \(t = s \cdot r\) for \(s, r \in \coeq{W}\), then by induction \(s \filter \coeq{P}_r, r \filter \coeq{P} \in \coeq{W}\).
    By definition of \(\coeq{W}\) and \cref{lem:concatenation-vs-filter}, we then have that \((s \cdot r) \filter \coeq{P} = (s \filter \coeq{P}_r) \cdot (r \filter \coeq{P}) \in \coeq{W}\).
    
    \item
    If \(t \in \coeq{W}\) because \(t = s \rhd r\) for \(s, r \in \coeq{W}\), then by induction \(s \filter \coeq{P}^r, r \filter \coeq{P}^r \in \coeq{W}\).
    By definition of \(\coeq{W}\) and \cref{lem:continuation-vs-filter}, we then have that \((s \rhd r) \filter \coeq{P} = (s \filter \coeq{P}^r) \rhd (r \filter \coeq{P}^r) \in \coeq{W}\).
    \qedhere
\end{itemize}
\end{proof}

\restateclosureundernormalization*

\begin{proof}
	Take \(\coeq{P}\) to be the set of dead trees in \cref{prop:pruning is chill}.
\end{proof}

%\restategkatdoesntseehats*

\begin{lemma}\label{lem:gkat doesnt see hats}
	Let \(e \in \Exp\), and \(e^\wedge\) be a normalized expression for \(e\).
	Assume the uniqueness axiom for \(\equiv\) and \(\equiv_0\).
	Then \(e^\wedge \equiv e\).
\end{lemma}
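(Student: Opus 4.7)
The plan is to apply the uniqueness axiom for $\equiv$ to a Salomaa system that admits both $e$ and $e^\wedge$ as solutions modulo $\equiv$. The intuition is that $e$ and $e^\wedge$ differ only in that $e$ may contain \emph{dead} transitions --- transitions leading to expressions whose denotation is a dead tree; by \cref{lem:partial-completeness} such expressions are already $\equiv 0$, so by S3 these dead branches can be absorbed into immediate rejections.

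First I would form the finite subautomata $\X := \langle e\rangle_{\coalg E}$ and $\Y := \langle e^\wedge\rangle_{\coalg E}$ (finite by \cref{lem:Exp is locally-finite}), and define the pruning relation $R \subseteq X \times Y$ by $(x,y) \in R$ iff $\sem{x}^\wedge = \sem{y}$; note that $(e, e^\wedge) \in R$ by construction of $e^\wedge$. Unfolding the definition of $(-)^\wedge$ gives, for each $(x,y) \in R$ and $a \in A$: (i) $x \Rightarrow a$ iff $y \Rightarrow a$; (ii) every transition $y \trans{a|p} y'$ arises from a unique \emph{matched} transition $x \trans{a|p} x'$ with $\sem{x'}$ not dead and $(x',y') \in R$; and (iii) any further transition $x \trans{a|p} x'$ is \emph{unmatched}, with $\sem{x'}$ dead and $y \downarrow a$.

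Next I would equip $R$ with the \GKAT-automaton structure $\coalg R$ whose transitions are exactly the matched ones, analogously to \cref{lem:basically completeness}, and form the associated Salomaa system $S(\coalg R)$. The assignment $(x,y) \mapsto y$ is a solution to $S(\coalg R)$ modulo $\equiv_0$ directly from the Fundamental Theorem, because $y$'s transitions in $\Y$ coincide with the matched transitions. For the assignment $(x,y) \mapsto x$, I would apply the Fundamental Theorem to $x$ in $\X$ and split its generalized guarded union into matched and unmatched parts. For each unmatched transition $x \trans{a|p_a} x_a$ the derivative $x_a$ is dead, so \cref{lem:partial-completeness} gives $x_a \equiv 0$; combined with S3, this yields $p_a \cdot x_a \equiv 0$. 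A calculation using U1, U2, and U4 then absorbs the resulting $0$-terms, recovering the equation prescribed by $S(\coalg R)$ for $(x,y)$, now up to $\equiv$ rather than $\equiv_0$.

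With both assignments being solutions of $S(\coalg R)$ in $\Exp/{\equiv}$, the uniqueness axiom for $\equiv$ implies $x \equiv y$ for every $(x,y) \in R$, and in particular $e \equiv e^\wedge$. The main obstacle I anticipate is this absorption step: carefully showing, using the axioms of \cref{fig:GKAT axioms}, that the unmatched transitions collapse into the implicit rejections of the pruned equation without distorting its guard structure --- in particular, that $0 +_{b_u} c_x \equiv c_x$ whenever the unmatched atoms $b_u$ and the acceptance atoms $c_x$ are disjoint.
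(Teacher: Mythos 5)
Your proposal is correct and follows the same overall strategy as the paper's proof: construct a finite Salomaa system that is solved both by the derivatives of \(e\) and by expressions denoting their normalizations, absorb the dead branches using \cref{lem:partial-completeness} together with S3, and conclude via the uniqueness axiom for \(\equiv\). The mechanics differ in two respects. First, the paper introduces no relation between \(\langle e\rangle_{\coalg E}\) and \(\langle e^\wedge\rangle_{\coalg E}\); it prunes \(\X = \langle e\rangle_{\coalg E}\) in place, defining \(\X^\wedge\) by redirecting every transition into a dead derivative to a rejection, and takes \(S(\X^\wedge)\) as the system. Second, to exhibit the normalized expressions as a solution, the paper does not argue syntactically from the transition structure of \(\langle e^\wedge\rangle_{\coalg E}\); instead it shows that \(x_{e'} \mapsto \sem{e'}^\wedge\) solves \(S(\X^\wedge)\) in \(Z\) (by a bisimulation argument against the unique solution guaranteed by \cref{thm:solving in Z}) and then transfers this to a solution in \(\Exp/{\equiv_0}\), hence in \(\Exp/{\equiv}\), via completeness for \(\equiv_0\). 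Your route obtains that second solution more directly from the fundamental theorem applied to \(\Y\), at the cost of verifying properties (i)--(iii) of your relation \(R\) and the variable-merging bookkeeping of \cref{lem:basically completeness}; the costs are comparable. The absorption step you flag as the main obstacle is genuinely needed in both proofs --- the paper performs the same collapse in its displayed chain of equivalences for \(e'\) --- and the identity \(0 +_{b_u} c \equiv c\) for disjoint \(b_u\) and \(c\) is derivable from the guarded-union axioms, S6 and Boolean algebra (or, more cheaply, from completeness for \(\equiv_0\), which is available at this point), so there is no gap there.
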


\begin{proof}
	Let \(e \in \Exp\), and \(\X = \langle e\rangle_{\coalg E}\) be the Brzozowski automaton for \(e\), where very derivative \(e'\) of \(e\) (including \(e\) itself) is a state \(x_{e'}\).
	%Since \(\coeq W\) is closed under normalisation, there is an \(e^\wedge \in \Exp\) such that \(\sem{e^\wedge} = \sem{e}^\wedge\).
	%Let \(h_1,\dots,h_l\) be the states of \(\X\) such that \(h_i \neq e\) and \(\sem{h_i}^\wedge = 0\).
	Define \(\X^\wedge = (X, \delta^\wedge)\) to be the \GKAT-automaton obtained from \(\X = (X, \delta)\) by setting
	\[
		\delta^\wedge(x_{e'}, a) = \begin{cases}
			0 & x_{e'} \trans{a|p}_{\X} x_{e''} \wedge \text{\(\sem{e''}\) is dead} \\
			\delta(x,a) &\text{otherwise}.
		\end{cases}
	\]
	This \GKAT-automaton is finite, and hence induces a (finite) Salomaa system \(S(\X^\wedge)\) where each variable \(x_{e'}\) has a linear constraint that can be written (up to \(\equiv_0\)-equivalence) as
	\[
		x_{e'} = 1 +_{E(e')} \bigplus_{x_{e'} \trans{a|p}_{\X^\wedge} x_{e''}} p \cdot x_{e''}
	\]
	We claim that if for \(x_{e'}\) we fill in the expression \(e'\), then this constitutes a solution in \(\Exp/{\equiv}\).
	After all, we can derive using the fundamental theorem, \cref{lem:partial-completeness} and S3 that
	\[
		e'
			\equiv 1 +_{E(e')} \bigplus_{e' \trans{a|p}_{\coalg{E}} e''} p \cdot e''
			\equiv 1 +_{E(e')} \bigplus_{\substack{e' \trans{a|p}_{\coalg{E}} e'' \\ {\text{\(\sem{e''}\) is not dead}}}} p \cdot e''
			\equiv 1 +_{E(e')} \bigplus_{x_{e'} \trans{a|p}_{\X} x_{e''}} p \cdot e''
	\]
	
	The rest of the proof works by arguing that if for each \(x_{e'} \in X\) we fill in \(e'^\wedge\), then we have another solution to the Salomaa system of \(\X^\wedge\) in \(\Exp/{\equiv}\).
	Thus, we obtain the desired equivalence \(e \equiv e^\wedge\) from the uniqueness axiom for \(\equiv\).

	To this end, we first show that if we fill in \(\sem{e'}^\wedge = \sem{e'^\wedge}\) for \(x_{e'} \in X\), we have a solution to \(S(\X^\wedge)\) in \(Z\).
	%Since \(\sem{x^\wedge} = \sem{x}^\wedge\), this would establish that in fact, \(x \mapsto \sem{x}^\wedge\) solves \(S(\X^\wedge)\) in \(\coeq W\).
	By the completeness theorem for \(\equiv_0\), filling in \(e'^\wedge\) for \(x_{e'}\) gives a solution to \(S(\X^\wedge)\) in \(\Exp/{\equiv_0}\).
	It can be shown by induction on the construction of \(\equiv_0\) that \(\equiv_0\ \subseteq\ \equiv\).
	Whence, this particular choice of variables constitutes a solution to \(S(\X^\wedge)\) in \(\Exp/{\equiv}\) as desired.
	To see that choosing \(\sem{e'}^\wedge\) constitutes a solution to \(S(\X^\wedge)\) in \(Z\), let \(x_{e'} \mapsto t_{e'}\) be the unique solution to \(S(\X^\wedge)\) in \(Z\).
	We show that
	\[
		R = \left\{ \left(\sem {e'}^\wedge, t_{e'}\right) \mid x \in X \right\}
	\]
	is a bisimulation.
	%Let \(X^\wedge = \{x_1, \dots, x_m\}\), and consider the \(i\)th equation \[
	%	x_i = e_{i1} \cdot x_1 +_{b_{i1}} \cdots +_{b_{i(m-1)}} e_{im} \cdot x_m +_{b_{im}} c_i
	%\]
	%of the Salomaa system for \(\X^\wedge\).
	%Here, \(c_i = \{a \in A \mid x_i \Rightarrow_{\coalg E} a\}\), and
	Since \(t_{e'}\) is part of a solution to \(S(\X^\wedge)\) in \(Z\), we have \(\sem{e'}^\wedge(a) = 1 \iff \sem{e'}(a) = 1 \iff a \in E(e) \iff t_{e'}(a) = 1\).
	%, \(x_^\wedge \Rightarrow a\) for each \(a \in c_i\) as well.
	%Since \(t_{x_i}(a) = 1\) if and only if \(a \in c_i\), we see that \(t_{x_i}(a) = 1\) if and only if \(\sem{x_i}^\wedge(a) = 1\).
	On the other hand, 
	\begin{align*}
		t_{e'}(a) = 0
		&\iff \text{\(x_{e'} \trans{a|p}_{\X^\wedge} x_{e''}\) does not hold for any \(a\)} \\
		&\iff \text{\(x_{e'} \trans{a|p} x_{e''}\) and \(\sem{e''}\) is dead, or \(e' \downarrow a\)} \\
		&\iff \text{\(e \trans{a|p} e''\) and \(\sem{e''}\) is dead, or \(e' \downarrow a\)} \\
		&\iff \sem{e'}^\wedge(a) = 0.
	\end{align*}
	We are left with the coinductive step.
	In one direction, note that if \(t_{e'} \trans{a|p} \partial_a t_{e'}\), then \(\partial_a t_{e'} = t_{e''}\) with \(x_{e'} \trans{a|p} x_{e''}\), because the \(t_{e'}\) are a solution to \(S(\X^\wedge)\).
	In other words, \(\sem{e''}\) cannot be dead, and \(e' \trans{a|p} e''\).
	We find \[
		\partial_a \sem{e'}^\wedge
			= {(\partial_a \sem{e'})}^\wedge
			= {\sem{\partial_a e'}}^\wedge
			= \sem{e''}^\wedge.
	\]
	Conversely, if \(\sem{e'}^\wedge \trans{a|p} \partial_a \sem{e'}^\wedge\), then \(a\) is a node of \(\sem{e'}^\wedge\), which means that \(a\) is also a node of \(t_{e'}\) by the arguments above.
	Thus,
	\(
		\partial_a t_{e'}
			= t_{e''}
	\)
	where \(e' \trans{a|p} e''\), since the \(t_{e'}\) are a solution to \(S(\X^\wedge)\) in \(Z\).
	%This puts \(a \in b_{ij}\), and tells us that \(t_{x_i} \trans{a|p} t_{x_j}\), since \(\{t_x \mid x \in X^\wedge\}\) solves \(S(\X^\wedge)\).
	In either case, \((\partial_a \sem{e'}^\wedge, \partial_a t_{e'}) \in R\), so \(R\) is a bisimulation.
	By simplicity of \(Z\), \(\sem{e'}^\wedge = t_{e'}\) for all \(x_{e'} \in X\), and therefore \(x_{e'} \mapsto \sem{e'}^\wedge\) solves \(S(\X^\wedge)\) in \(Z\).
\end{proof}

\restatecompletenessforgkat*
\begin{proof}
	Since \(\sem{e}^\wedge = \sem{f}^\wedge\), also \(\sem{e^\wedge} = \sem{f^\wedge}\).
	By \cref{cor:completeness for GKAT^-,lem:gkat doesnt see hats}, we can then derive
	\[
		e \equiv e^\wedge \equiv_0 f^\wedge \equiv f
		\qedhere
	\]
\end{proof}

\end{document}